\documentclass[11pt,oneside]{article}

\usepackage{amsthm, amsmath, amssymb, amsfonts, graphicx, epsfig, mathtools, mathdots}

\usepackage{bbm}

\usepackage{enumerate}

\usepackage[normalem]{ulem}


\usepackage[colorlinks=true, pdfstartview=FitV, linkcolor=blue,
            citecolor=blue, urlcolor=blue]{hyperref}
\usepackage[usenames]{color}
\definecolor{Red}{rgb}{0.7,0,0.1}
\definecolor{Green}{rgb}{0,0.7,0}


\usepackage{url}
\makeatletter\def\url@leostyle{%
 \@ifundefined{selectfont}{\def\UrlFont{\sf}}{\def\UrlFont{\scriptsize\ttfamily}}} \makeatother\urlstyle{leo}



\setlength{\voffset}{-0.5in}
\setlength{\hoffset}{-0.5in}
\setlength{\textheight}{8.5in}
\setlength{\textwidth}{6in}

\newtheorem{theorem}{Theorem}[section]
\newtheorem{proposition}[theorem]{Proposition}
\newtheorem{lemma}{Lemma}[section]

\theoremstyle{definition}
\newtheorem{definition}{Definition}[section]
\newtheorem{example}{Example}[section]
\theoremstyle{remark}
\newtheorem{remark}[theorem]{Remark}

\newtheoremstyle{dotless}{}{}{\itshape}{}{\bfseries}{}{ }{}
\theoremstyle{dotless}





\def\cC{\mathcal{C}}

\def\cF{\mathcal{F}}

\def\cI{\mathcal{I}}
\def\cJ{\mathcal{J}}
\def\cK{\mathcal{K}}

\def\cP{\mathcal{P}}

\def\cR{\mathcal{R}}
\def\cS{\mathcal{S}}
\def\cT{\mathcal{T}}

\def\cW{\mathcal{W}}

\def\cZ{\mathcal{Z}}


\def\bE{\mathbb{E}}
\def\bF{\mathbb{F}}

\def\bK{\mathbb{K}}

\def\bN{\mathbb{N}}

\def\bP{\mathbb{P}}
\def\bQ{\mathbb{Q}}
\def\bR{\mathbb{R}}

\def\1{\mathbbm{1}}


\DeclareMathOperator*{\esssup}{ess\,sup} 
\DeclareMathOperator*{\essinf}{ess\,inf} 

\def\d{\mathrm{d}}

\def\esssup{\text{ess sup}}
\def\essinf{\text{ess inf}}

\title{No-Arbitrage Pricing for Dividend-Paying Securities in Discrete-Time Markets with Transaction Costs}
\author{Tomasz R. Bielecki\footnote{Tomasz R. Bielecki and Igor Cialenco acknowledge support from the NSF grant DMS-0908099.}\\
\small{Department of Applied Mathematics,}\\[-0.3ex]
\small{Illinois Institute of Technology,}\\[-0.3ex]
\small{Chicago, 60616 IL, USA}\\[-0.3ex]
\url{bielecki@iit.edu}\\
\and
Igor Cialenco\footnotemark[\value{footnote}]\\[-0.3ex]
\small{Department of Applied Mathematics,}\\[-0.3ex]
\small{Illinois Institute of Technology,}\\[-0.3ex]
\small{Chicago, 60616 IL, USA}\\[-0.3ex]
\url{igor@math.iit.edu} \\
\and
Rodrigo Rodriguez \\
\small{Department of Applied Mathematics,}\\[-0.3ex]
\small{Illinois Institute of Technology,}\\[-0.3ex]
\small{Chicago, 60616 IL, USA}\\[-0.3ex]
\url{rrodrig8@hawk.iit.edu}}

\date{First Circulated: May 25, 2012\\
This Version: June 02, 2013, Extended Preprint \\
Forthcoming in Mathematical Finance}

\begin{document}
\maketitle

\begin{abstract}
We prove a version of First Fundamental Theorem of Asset Pricing under transaction costs for discrete-time markets with dividend-paying securities.
Specifically, we show that the no-arbitrage condition under the efficient friction assumption is equivalent to the existence of a risk-neutral measure.
We derive dual representations for the  superhedging ask and subhedging bid price processes of a contingent claim.
Our results are illustrated with a vanilla credit default swap contract.
 \end{abstract}

{\noindent \small
{\it \bf Keywords:}
arbitrage, fundamental theorem of asset pricing, transaction costs, consistent pricing system, liquidity, dividends, credit default swaps \\[.5ex]
{\it \bf MSC2010:} 91B25; 60G42; 46N10; 46A20}


\newpage
\section{Introduction}

One of the central themes in mathematical finance is no-arbitrage pricing and its applications.
At the foundation of no-arbitrage pricing is the First Fundamental Theorem of Asset Pricing (FFTAP).
We prove a version of the FFTAP under transaction costs\footnote{In this study, a transaction cost is defined as the cost incurred in trading in a market in which securities' quoted prices have a bid-ask spread. We do not consider other costs  such as broker's fees and taxes.} for discrete-time markets with dividend-paying securities.

The FFTAP has been proved in varying levels of generality for frictionless markets.
In a discrete-time setting for a finite state space, the theorem was first proved in Harrison and Pliska \cite{HarrisonPliska1981}.
Almost a decade later, Dalang, Morton, and Willinger \cite{DalangMortonWillinger1990} proved the FFTAP for the more technically challenging setting in which the state space is general.
Their approach requires the use of advanced, measurable selection arguments, which motivated several authors to provide alternative proofs using more accessible techniques (see Schachermayer~\cite{Schachermayer1992}, Kabanov and Kramkov~\cite{KabanovKramkov1994}, Rogers~\cite{Rogers1994}, Jacod and Shiryaev~\cite{JacodShiryaev1998}, and Kabanov and Stricker~\cite{KabanovStricker2001}).
Using advanced concepts from functional and stochastic analysis, the FFTAP was first proved in a general continuous-time set-up in the celebrated paper by Delbaen and Schachermayer~\cite{DelbaenSchachermayer1994}.
A comprehensive review of the literature pertaining to no-arbitrage pricing theory in frictionless markets can be found in Delbaen and Schachermayer~\cite{DelbaenSchachermayer2006}.

The first rigorous study of the FFTAP for markets with transaction costs in a discrete-time setting was carried out by Kabanov and Stricker~\cite{KabanovStricker}.
Under the assumption that the state space is finite, it was proved that \textbf{NA} is equivalent to the existence of a consistent pricing system (using the terminology introduced in Schachermayer~\cite{Schachermayer2004}).
However, their results did not extend to the case of a  general state space.
As in the frictionless case, the transition from a finite state space to a general state space is nontrivial due to measure-theoretic and topological related difficulties.
These difficulties were overcome in  Kabanov, R\'{a}sonyi, and Stricker~\cite{KabanovRasonyiStricker2002}, where a version of the FFTAP was proven under the \emph{efficient friction assumption} (\textbf{EF}).
It was shown that the \emph{strict no-arbitrage condition}, a condition which is stronger than \textbf{NA}, is equivalent to the existence of a strictly consistent pricing system.
Therein, it was asked whether \textbf{EF} can be discarded.
Schachermayer~\cite{Schachermayer2004} answered this question negatively by showing that neither \textbf{NA} nor the strict no-arbitrage condition alone is sufficiently strong to yield the existence of a consistent pricing system.
More importantly, Schachermayer~\cite{Schachermayer2004} proved a new version of the FFTAP that does not  require \textbf{EF}.
Specifically, he proved that the \emph{robust no-arbitrage condition}, which is stronger than the strict no-arbitrage condition, is equivalent to the existence of a strictly consistent pricing system.
Subsequent studies that treat the robust no-arbitrage condition are  Bouchard~\cite{Bouchard2006},  Valli\`{e}re, Kabanov, and Stricker~\cite{ValliereKabanovStricker2007}, Jacka, Berkaoui, and Warren \cite{JackaBerkaouiWarren2008}.
 Recently, Pennanen~\cite{Pennanen2011, Pennanen2011b,Pennanen2011c,Pennanen2011d} studied  no-arbitrage pricing  in a general context in which markets can have constraints and transaction costs may depend nonlinearly on traded amounts.
Therein, the problem of superhedging a claims process (e.g. swaps) is also investigated.
An excellent survey of the literature pertaining to no-arbitrage pricing in markets with transaction costs can be found in Kabanov and Safarian~\cite{KabanovSafarian2009}.
Let us mention that versions of the FFTAP for markets with transaction costs in a continuous-time setting have also been studied in the literature.
This literature considers stronger conditions than \textbf{NA} (see for instance Jouini and Kallal~\cite{JouiniKallal1995}, Cherny~\cite{Cherny2007a}, Guasoni, R\'{a}sonyi, and Schachermayer~\cite{GuasoniSchachermayer},  Denis, Guasoni, and R\'{a}sonyi~\cite{DenisGuasoniRasonyi2011}, Denis and Kabanov~\cite{DenisKabanov2012}).

The  fundamental difference between no-arbitrage pricing theory for dividend-paying securities and  non-dividend paying securities is that transaction costs associated with trading dividend-paying securities is that transaction costs associated with trading dividend-paying securities may not be proportional to the number of units of securities purchased or sold.
Transaction costs associated with dividend-paying securities may \emph{accrue} over time by merely holding the security---for a non-dividend paying security transaction costs are only charged whenever the security is bought or sold.
Our consideration of transaction costs on dividends distinguishes this study.

The contribution of this paper is summarized as follows:
\begin{itemize}
\item
We define and study the value process and the self-financing condition under transaction costs for discrete-time markets with dividend-paying securities (Section~\ref{section:ValueSF}).
\item
We define and investigate \textbf{NA} and \textbf{EF} in our context (Section~\ref{section:Arbitrage}).
\item
We prove a key closedness property of the set of claims that can be superhedged at zero cost (Section~\ref{subsection:Closedness}).
\item
Using classic separation arguments, we prove a version of the FFTAP that is relevant to our set-up.
Specifically, we prove that \textbf{NA} under \textbf{EF} is satisfied if and only if there exists a risk-neutral measure (Section~\ref{subsection:FFTAP}).
\item
We introduce an appropriate notion of consistent pricing systems in our set-up, and we study the relationship between them and \textbf{NA} under \textbf{EF} (Section~\ref{section:CPS}).
We demonstrate that, if there are no transaction costs on the dividends paid by securities, \textbf{NA} under \textbf{EF} is equivalent to the existence of a consistent pricing system (Section~\ref{subsection:CPSNoT}).
\item
We derive a dual representation for the superhedging ask and subhedging bid price processes for a contingent claim (Section~\ref{section:SuperHedging}).
\end{itemize}

\section{The value process and the self-financing condition}\label{section:ValueSF}
Let $T$ be a fixed time horizon, and let $\cT:=\{0, 1, \dots, T\}$ and $\cT^*:=\{1, 2, \dots, T\}$.
Next, let $(\Omega, \cF_T, \bF=(\cF_t)_{t\in \cT}, \bP)$ be the underlying  filtered probability space.

On this probability space, we consider a market consisting of a savings account $B$ and of $N$ traded securities satisfying the following properties:
\begin{itemize}
\item
The savings account can be purchased and sold according to the  price process $B:=\Big(\big(\prod_{s=0}^t(1+r_s)\big)\Big)_{t=0}^T$, where $(r_t)_{t=0}^T$ is  a nonnegative process specifying the risk-free rate.
\item
 The $N$ securities can be purchased according to the ex-dividend price process $P^{ask}:=\big((P^{ask,1}_t, \dots, P^{ask, N}_t)\big)_{t=0}^T$, and pay (cumulative) dividends specified by the process $A^{ask}:=\big((A^{ask, 1}_t, \dots, A^{ask, N}_t)\big)_{t=1}^T$.
 The quantity $\Delta A^{ask}_t$ is the dividends per unit of securities held long.
\item
The $N$ securities  can be sold  according to the ex-dividend price process $P^{bid}:=\big((P^{bid,1}_t, \dots, P^{bid, N}_t)\big)_{t =0}^T$, and pay (cumulative) dividends specified by the process $A^{bid}:=\big((A^{bid, 1}_t, \dots, A^{bid, N}_t)\big)_{t=1}^T$.
The quantity $\Delta A^{bid}_t$ is the dividends per unit of securities held short.
\end{itemize}
\noindent
We assume that the processes introduced above adapted.
In what follows, we shall denote by $\Delta $ the backward difference operator: $\Delta X_t:=X_t-X_{t-1}$, and we take the convention that $A^{ask}_0=A^{bid}_0=0$.
It is easy to verify the following product rule for $\Delta$:
    \begin{align*}
      \Delta (X_t Y_t)&=X_{t-1}\Delta Y_t+Y_t \Delta X_t=X_t\Delta Y_t+Y_{t-1}\Delta X_t.
    \end{align*}

\begin{remark}
For any $t =1, 2, \dots, T$ and $j=1, 2, \dots, N$, the random variable $\Delta A^{ask, j}_t$ is interpreted as amount of dividend associated with holding a \emph{long} position in security $j$ from time $t-1$ to time $t$, and the random variable $\Delta A^{bid, j}_t$ is interpreted as amount of dividend associated with holding a \emph{short} position in security $j$ from time $t-1$ to time $t$.
\end{remark}

We now illustrate the processes introduced above in the context of a vanilla  Credit Default Swap  (CDS) contract.
\begin{example}\label{example:CDS}
A CDS contract is a contract between two parties, a \emph{protection buyer} and a \emph{protection seller}, in which the protection buyer pays periodic fees to the protection seller in exchange for some payment made by the protection seller to the protection buyer if a pre-specified credit event of a reference entity occurs.
Let $\tau$ be the nonnegative random variable specifying the time of the credit event of the reference entity.
Suppose the CDS contract admits the following specifications: initiation date $t=0$, expiration date $t=T$, and nominal value \$1.
 For simplicity, we assume that the  loss-given-default is a nonnegative scalar $\delta$ and is paid at default.
Typically, CDS contracts are traded on over-the-counter markets in which dealers quote CDS spreads to investors.
Suppose that the CDS spread quoted by the dealer to sell a CDS contract with above specifications is $\kappa^{bid}$ (to be received every unit of time), and the CDS spread quoted by the dealer to buy a CDS contract with above specifications is $\kappa^{ask}$ (to be paid every unit of time).
We remark that the CDS spreads $\kappa^{ask}$ and $\kappa^{bid}$ are specified in the CDS contract, so the CDS contract to sell protection is technically a different contract than the CDS contract to buy protection.

The cumulative dividend processes $A^{ask}$ and $A^{bid}$ associated with buying and selling the CDS with specifications above, respectively, are defined as
\begin{align*}
 A^{ask}_t&:=1_{\{\tau \leq t\}} \delta -\kappa^{ask} \sum_{u=1}^t 1_{\{u <\tau\}}, \quad A^{bid}_t:=1_{\{\tau \leq t\}} \delta -\kappa^{bid} \sum_{u=1}^t 1_{\{u <\tau\}}
\end{align*}
for $t \in \cT^*$.
In this case, the ex-dividend ask and bid price processes $P^{bid}$ and $P^{ask}$ specify the mark-to-market values of the CDS for the protection seller and protection buyer, respectively, from the perspective of the protection buyer.
The CDS spreads $\kappa^{ask}$ and $\kappa^{bid}$ are set so that $P^{bid}_0=P^{ask}_0=0$.
Also, we have that $P^{ask}_T=P^{bid}_T=0$ since they are ex-dividend prices.
\end{example}

Next, we illustrate the processes above with a vanilla Interest Rate Swap (IRS) contract.

\begin{example}\label{example:IRS}
An IRS contract is a contract between two parties, in which one party agrees to periodically pays a fixed rate (the swap rate) to the other party, in exchange for a floating rate (usually the Libor rate).
We suppose that the floating rate from $i-1$ to $i$, denoted by $L_i$, is exchanged for the swap rate every unit of time.
Also, we assume that the IRS admits the following specifications: initiation date $t=0$, expiration date $t=T$, and nominal value \$1.
IRS contracts are traded on over-the-counter markets in which dealers quote swap rates to investors.
For the contract specified above, we denote by $s^{ask}$ the swap rate quoted by the dealer for a Payer IRS (pays the swap rate and receives the floating rate), and denote by $s^{bid}$ the swap rate quoted by the dealer for a Receiver IRS (pays the floating rate and receives the swap rate).
We remark that the spreads $s^{ask}$ and $s^{bid}$ are specified in the IRS contract.

The cumulative dividend processes $A^{bid}$ and $A^{ask}$ associated with the Payer and Receiver swap with specifications above, respectively, are defined as
\begin{align*}
 A^{ask}_t&:= \sum_{i=1}^t (L_i-s^{ask}), \quad A^{bid}_t:= \sum_{i=1}^t (s^{bid}-L_i)
\end{align*}
for $t \in \cT^*$.
The ex-dividend ask and bid price processes $P^{bid}$ and $P^{ask}$ specify the mark-to-market values of the IRS for the Payer IRS and Receiver IRS, respectively.
The values of swap spreads $s^{ask}$ and $s^{bid}$ are set so that $P^{bid}_0=P^{ask}_0=0$ are null at initiation date, and also note that $P^{ask}_T=P^{bid}_T=0$ since they are ex-dividend prices.
\end{example}

From now on, we make the following standing assumption.
\medskip

\noindent {\bf Bid-Ask Assumption:}
$\ P^{ask} \geq P^{bid}$ \textup{and} $\Delta  A^{ask} \leq \Delta  A^{bid}$.

\medskip

For convenience, we define $\mathcal{J}:=\{0, 1, \dots, N\}$ and\\
$\mathcal{J}^*:=\{1, 2, \dots, N\}$.
Unless stated otherwise, all inequalities and equalities between processes and random variables are understood $\bP$-a.s. and coordinate-wise.

\subsection{The value process and self-financing condition}

A \emph{trading strategy} is a predictable process $\phi:=\big((\phi^0_t, \phi^1_t, \dots, \phi^N_t)\big)_{t=1}^T$, where $\phi^j_t$ is interpreted as the number of units of security $j$ held from time $t-1$ to time $t$.
Processes $\phi^1, \dots, \phi^N$ correspond to the holdings in the  $N$ securities, and process $\phi^0$ corresponds to the holdings in the savings account $B$.
We take the convention $\phi_0=(0,\ldots,0)$.

\begin{definition}\label{Intro/DefValueProcess}
The \emph{value process} $(V_t(\phi))_{t=0}^T$ associated with a trading strategy $\phi$ is defined as
\begin{equation*}
V_t(\phi) =
\begin{cases}
\phi^{0}_1+\sum_{j=1}^N\phi^{j}_1( 1_{\{\phi^j_1 \geq 0\}}P^{ask, j}_0 +1_{\{\phi^j_1 < 0\}}P^{bid, j}_0), &\mbox{if $t=0$},\\[.15in]
\phi^{0}_tB_t+\sum_{j=1}^N \phi^{j}_t (1_{\{\phi^j_t \geq 0\}}P^{bid, j}_t+1_{\{\phi^j_t < 0\}}P^{ask, j}_t)\\[.02in]
 \qquad +\sum_{j=1}^N  \phi^{ j}_t(1_{\{\phi^j_t < 0\}}\Delta A^{bid,j }_t+1_{\{\phi^j_t \geq 0\}}\Delta A^{ask,j }_t),  &\mbox{if $1 \leq t\leq T$}.
  \end{cases}
  \end{equation*}
\end{definition}
For $t=0$, $V_0(\phi)$ is interpreted as the cost of the portfolio $\phi$, and for $t \in \{1, \dots, T\}$ it is interpreted as the liquidation value of the portfolio before any time $t$ transactions, including any dividends acquired from time $t-1$ to time $t$.

\begin{remark}\mbox{}
 Also note that, due to the presence of transaction costs, the value process $V$ may not be linear in its argument, i.e. $V_t(\phi)+V_t(\psi)\neq V_t(\phi+\psi)$, and $V_t(\alpha\phi)\neq \alpha V_t(\phi)$ for $\alpha \in\bR$,  and some trading strategies $\phi, \psi$, some time $t \in \cT$.
This is the major difference from the frictionless setting.
\end{remark}

Next, we introduce the self-financing condition, which is appropriate in the context of this paper.

\begin{definition}\label{Intro/DefSelfFinancing}
A trading strategy $\phi$ is self-financing if
\begin{align}
&B_{t}\Delta  \phi^{0}_{t+1}+\sum_{j=1}^N\Delta  \phi^{ j}_{t+1}(1_{\{\Delta  \phi^j_{t+1} \geq 0\}}P^{ask, j}_{t}+1_{\{\Delta  \phi^j_{t+1} < 0\}}P^{bid, j}_{t} ) \notag \\
&  \qquad \qquad =  \sum_{j=1}^N \phi^{ j}_{t}(1_{\{ \phi^j_{t} \geq 0\}} \Delta  A^{ask, j}_{t}+1_{\{ \phi^j_{t} < 0\}} \Delta  A^{bid, j}_{t}) \label{eq:SF}
  \end{align}
  for $t=1, 2, \dots, T-1$.
\end{definition}
The self-financing condition imposes the restriction that no money can flow in or out of the portfolio.
We note that if $P:=P^{ask}=P^{bid}$ and $\Delta A:=\Delta A^{ask} = \Delta A^{bid}$, then the self-financing condition in the frictionless case is recovered.

\begin{remark}
Note that the self-financing condition not only takes into account transaction costs due purchases and sales of securities (left hand side of \eqref{eq:SF}), but also transaction costs accrued through the dividends (right hand side of \eqref{eq:SF}).
\end{remark}

The next result gives a useful characterization of the self-financing condition in terms of the value process.

\begin{proposition}\label{Intro/LemmaSelfFinancing}
A trading strategy $\phi$ is self-financing if and only if the value process $V(\phi)$ satisfies
\begin{align}
V_t(\phi)&=V_0(\phi)+\sum_{u=1}^t\phi^{0}_u \Delta  B_u+\sum_{j=1}^N \phi^{j}_t \Big(1_{\{\phi^j_t \geq 0\}} P^{bid, j}_t +1_{\{\phi^j_t < 0\}}P^{ask, j}_t\Big ) \notag\\
& \qquad- \sum_{j=1}^N\sum_{u=1}^t\Delta \phi^{ j}_{u}\Big(1_{\{\Delta  \phi^j_u \geq 0\}}P^{ask, j}_{u-1}+1_{\{\Delta  \phi^j_u < 0\}}P^{bid, j}_{u-1}\Big) \notag\\
& \qquad +\sum_{j=1}^N\sum_{u=1}^t \phi^{ j}_u \Big(1_{\{\phi^j_u \geq 0\}}\Delta  A^{ask,j}_u+1_{\{\phi^j_u < 0\}}\Delta A^{bid,j}_u\Big) \label{eq:SFmain}
\end{align}
for all $t \in \mathcal{T}^*$.
\end{proposition}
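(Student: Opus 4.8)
The plan is to denote by $W_t$ the right-hand side of \eqref{eq:SFmain} and to show, by an increment-by-increment (telescoping) argument, that $V_t(\phi)=W_t$ for every $t\in\cT^{*}$ if and only if the self-financing identity \eqref{eq:SF} holds at every $t\in\{1,\dots,T-1\}$. The whole proof rests on computing the one-step increments $\Delta V_t(\phi):=V_t(\phi)-V_{t-1}(\phi)$ and $\Delta W_t:=W_t-W_{t-1}$ and matching them, the only analytic input being the product rule for $\Delta$ recorded above.

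First I would dispose of the base case $t=1$. Since $\phi_0=(0,\dots,0)$ we have $\Delta\phi^j_1=\phi^j_1$, so the $u=1$ summand of the fourth sum in \eqref{eq:SFmain} cancels the ask/bid purchase term inside $V_0(\phi)$ from Definition~\ref{Intro/DefValueProcess}, leaving $\phi^0_1$; adding $\phi^0_1\Delta B_1$ and recalling $B_0=1$ produces $\phi^0_1B_1$, and the remaining two sums are exactly the liquidation and dividend terms defining $V_1(\phi)$. Hence $W_1=V_1(\phi)$, and no use of \eqref{eq:SF} is needed here.

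Next, for $t\in\{2,\dots,T\}$ I would expand $\Delta V_t(\phi)$ from Definition~\ref{Intro/DefValueProcess} and $\Delta W_t$ from \eqref{eq:SFmain}. The ``liquidation value'' sums $\sum_{j}\phi^j_t\big(1_{\{\phi^j_t\ge0\}}P^{bid,j}_t+1_{\{\phi^j_t<0\}}P^{ask,j}_t\big)$ and the time-$t$ dividend sums cancel between the two increments; writing $\phi^0_tB_t-\phi^0_{t-1}B_{t-1}=\phi^0_t\Delta B_t+B_{t-1}\Delta\phi^0_t$ by the product rule, and observing that the (last) dividend sum carried inside $V_{t-1}(\phi)$ is literally the right-hand side of \eqref{eq:SF} at time $t-1$, one is left after cancellation with
\[
\Delta V_t(\phi)-\Delta W_t
= \Big(B_{t-1}\Delta\phi^0_t+\sum_{j=1}^N\Delta\phi^j_t\big(1_{\{\Delta\phi^j_t\ge0\}}P^{ask,j}_{t-1}+1_{\{\Delta\phi^j_t<0\}}P^{bid,j}_{t-1}\big)\Big)
-\sum_{j=1}^N\phi^j_{t-1}\big(1_{\{\phi^j_{t-1}\ge0\}}\Delta A^{ask,j}_{t-1}+1_{\{\phi^j_{t-1}<0\}}\Delta A^{bid,j}_{t-1}\big),
\]
which is precisely the left-hand side minus the right-hand side of \eqref{eq:SF} at time $t-1$. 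Telescoping then gives $V_t(\phi)-W_t=\big(V_1(\phi)-W_1\big)+\sum_{s=1}^{t-1}\big(\mathrm{LHS}_s-\mathrm{RHS}_s\big)$, with $\mathrm{LHS}_s,\mathrm{RHS}_s$ the two sides of \eqref{eq:SF} at time $s$; since $V_1(\phi)=W_1$, self-financing forces $V_t(\phi)=W_t$ on $\cT^{*}$, and conversely $V(\phi)\equiv W$ on $\cT^{*}$ forces $\mathrm{LHS}_s=\mathrm{RHS}_s$ for each $s$ upon differencing in $t$.

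I expect the only genuine difficulty to be the bookkeeping: for each security $j$ one must keep straight the fourfold split of the price contributions (bid versus ask, crossed with long versus short positions, and at time $t-1$ versus time $t$), and in particular notice that the dividend sum sitting inside $V_{t-1}(\phi)$ coincides termwise with the right-hand side of the self-financing identity — this is the single point where transaction costs on dividends, and hence the dividend-paying feature of the model, are actually used.
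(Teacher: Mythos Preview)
Your proposal is correct and follows essentially the same approach as the paper: both arguments rest on the product rule $\Delta(\phi^0_tB_t)=\phi^0_t\Delta B_t+B_{t-1}\Delta\phi^0_t$ and a telescoping comparison between $V_t(\phi)$ and the right-hand side of \eqref{eq:SFmain}. The only organizational difference is that the paper first writes down a single unconstrained identity for $V_t(\phi)$ (its equation \eqref{eq:SF2}) and then observes that the discrepancy with \eqref{eq:SFmain} is exactly the accumulated self-financing defect, whereas you compute the one-step increment $\Delta V_t-\Delta W_t$ directly and telescope; the content is the same.
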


\begin{proof}
By the definition of $V(\phi)$, and applying the product rule for the backwards difference operator $\Delta$, we obtain
\begin{align}\label{eq:SF2}
V_t(\phi)&=V_0(\phi)+\sum_{u=1}^t \phi^{0}_u \Delta  B_u+\sum_{u=2}^t B_{u-1}\Delta  \phi^{0}_u\\ \notag
& -\sum_{j=1}^N \phi^{j}_1\Big(1_{\{\phi^j_1 < 0\}}P^{bid, j}_0 + 1_{\{\phi^j_1 \geq 0\}}P^{ask, j}_0\Big)
+\sum_{j=1}^N \phi^{j}_t \Big(1_{\{\phi^j_t \geq 0\}} P^{bid, j}_t+ 1_{\{\phi^j_t < 0\}} P^{ask, j}_t\Big)\\ \notag
&  +\sum_{j=1}^N\sum_{u=1}^t \phi^{ j}_u \Big(1_{\{\phi^j_u \geq 0\}} \Delta A^{ask,j}_u+ 1_{\{\phi^j_u < 0\}} \Delta A^{bid,j}_u\Big) \notag\\
&-\sum_{j=1}^N\sum_{u=2}^t\phi^{ j}_{u-1}\Big(1_{\{\phi^j_{u-1} < 0\}}\Delta A^{bid,j}_{u-1}+1_{\{\phi^j_{u-1} \geq 0\}} \Delta A^{ask,j}_{u-1}\Big) \notag.
\end{align}

If $\phi$ is self-financing, then we see that \eqref{eq:SF2} reduces to \eqref{eq:SFmain}.
Conversely, assume that the value process satisfies \eqref{eq:SFmain}.
Subtracting \eqref{eq:SFmain} from \eqref{eq:SF2} and applying the product rule for the backwards difference $\Delta$ to both sides yields that $\phi$ is self-financing.
\end{proof}

The next proposition extends the previous result in terms of our \emph{num\'{e}raire} $B$.
For convenience, we let $V^*(\phi):=B^{-1}V(\phi)$ for all trading strategies $\phi$.

\begin{proposition}\label{Intro/LemmaDiscountedSelfFinancing}
A trading strategy $\phi$ is self-financing if and only if the discounted value process $V^*(\phi)$ satisfies
\begin{align}\label{eq:ValueSF}
V^*_t(\phi)&=V_0(\phi)+\sum_{j=1}^N \phi^{j}_t B^{-1}_t\Big(1_{\{\phi^j_t \geq 0\}}P^{bid, j}_t
+1_{\{\phi^j_t < 0\}}P^{ask, j}_t\Big)\notag\\
& \;  - \sum_{j=1}^N\sum_{u=1}^t\Delta\phi^{ j}_{u}B^{-1}_{u-1}\Big(1_{\{\Delta  \phi^j_u \geq 0\}}P^{ask, j}_{u-1}
+1_{\{\Delta  \phi^j_u < 0\}}P^{bid, j}_{u-1}\Big) \notag\\
& \; +\sum_{j=1}^N \sum_{u=1}^t\phi^{ j}_u B^{-1}_u\Big(1_{\{\phi^j_u \geq 0\}}\Delta A^{ask,j}_u
+1_{\{\phi^j_u < 0\}}\Delta A^{bid,j}_u\Big)
\end{align}
for all $t \in \mathcal{T}^*$.
\end{proposition}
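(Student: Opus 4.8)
The plan is to reduce \eqref{eq:ValueSF} to a single scalar identity for the savings-account holdings $\phi^0$ and then verify that identity by telescoping the self-financing equation \eqref{eq:SF}. Denote by $M^*_t$ and $D^*_t$ the iterated double sums over $j$ and $u$ appearing on the second and third lines of \eqref{eq:ValueSF} — the discounted trading-cost term and the discounted dividend term, respectively. Starting from Definition~\ref{Intro/DefValueProcess} and dividing $V_t(\phi)$ by $B_t$, one gets, for every $t\in\cT^*$, that $V^*_t(\phi)$ equals $\phi^0_t$, plus the discounted liquidation value $\sum_{j=1}^N\phi^j_tB_t^{-1}(1_{\{\phi^j_t\geq0\}}P^{bid,j}_t+1_{\{\phi^j_t<0\}}P^{ask,j}_t)$ of the stock positions at time $t$, plus the discounted dividends $\sum_{j=1}^N\phi^j_tB_t^{-1}(1_{\{\phi^j_t\geq0\}}\Delta A^{ask,j}_t+1_{\{\phi^j_t<0\}}\Delta A^{bid,j}_t)$ accrued from time $t-1$ to time $t$. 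The middle summand is precisely the explicit (non-iterated) term on the first line of \eqref{eq:ValueSF}, and the last summand is exactly $D^*_t-D^*_{t-1}$; hence \eqref{eq:ValueSF} holds for all $t\in\cT^*$ if and only if
\begin{equation*}
\phi^0_t=V_0(\phi)-M^*_t+D^*_{t-1}\qquad\text{for all }t\in\cT^*.\tag{$\star$}
\end{equation*}

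Next I would show that $(\star)$ is equivalent to $\phi$ being self-financing. For $t=1$, using the conventions $\phi_0=(0,\dots,0)$, $B_0=1$ and $A^{ask}_0=A^{bid}_0=0$, the right-hand side of $(\star)$ collapses to $V_0(\phi)-\sum_{j=1}^N\phi^j_1(1_{\{\phi^j_1\geq0\}}P^{ask,j}_0+1_{\{\phi^j_1<0\}}P^{bid,j}_0)$, which equals $\phi^0_1$ by the very definition of $V_0(\phi)$; so the $t=1$ instance of $(\star)$ holds for every strategy and carries no information. For $t\geq2$, applying the backward difference $\Delta$ to $(\star)$ and noting that $M^*_t-M^*_{t-1}$ (resp. $D^*_{t-1}-D^*_{t-2}$) is just the $u=t$ (resp. $u=t-1$) summand, then multiplying through by $B_{t-1}$, reproduces verbatim the self-financing equation \eqref{eq:SF} written at time $t-1$. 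Conversely, if $\phi$ is self-financing, dividing \eqref{eq:SF} at time $u-1$ by $B_{u-1}$ yields exactly these increment relations, and summing them from $u=2$ to $u=t$ together with the automatic $t=1$ case gives $(\star)$. Since \eqref{eq:SF} ranges over $t=1,\dots,T-1$, which corresponds to $(\star)$ over $t=2,\dots,T$, this establishes the equivalence, and hence the proposition. (Alternatively one could deduce \eqref{eq:ValueSF} from \eqref{eq:SFmain} of Proposition~\ref{Intro/LemmaSelfFinancing} by a num\'eraire-change computation, but that route needs an extra summation by parts against $B^{-1}$ and the $\phi^0_u\Delta B_u$ term, which is less transparent.)

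The whole argument is bookkeeping, and I expect the only delicate point to be the consistent handling of the two different indicator functions: the liquidation and dividend terms are governed by the sign of the \emph{position} $\phi^j_t$ (or $\phi^j_{t-1}$), whereas the trading-cost terms are governed by the sign of the \emph{increment} $\Delta\phi^j_t$. These reconcile at the boundary $t=1$ precisely because $\phi_0=(0,\dots,0)$ forces $1_{\{\Delta\phi^j_1\geq0\}}=1_{\{\phi^j_1\geq0\}}$, which is what makes the $t=1$ case of $(\star)$ agree with the definition of $V_0(\phi)$. Keeping straight the index shifts in $M^*$ and $D^*$ (the $B^{-1}_{u-1}$ versus $B^{-1}_u$ discount factors) and the conventions $B_0=1$, $A^{ask}_0=A^{bid}_0=0$ is the only real obstacle.
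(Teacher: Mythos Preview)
Your argument is correct and takes a genuinely different route from the paper. The paper first invokes Proposition~\ref{Intro/LemmaSelfFinancing} (the undiscounted characterisation \eqref{eq:SFmain}) and then performs a num\'eraire change: it applies the product rule for $\Delta$ to $B^{-1}_tV_t(\phi)$ to obtain the increment of \eqref{eq:ValueSF}, sums over $u$, and for the converse applies the product rule to $B_t\cdot(B^{-1}_tV_t(\phi))$ to recover \eqref{eq:SFmain}. You instead bypass Proposition~\ref{Intro/LemmaSelfFinancing} entirely: you isolate the savings-account component and reduce \eqref{eq:ValueSF} to the scalar identity $(\star)$ for $\phi^0_t$, then observe that the backward difference of $(\star)$, after multiplying by $B_{t-1}$, \emph{is} the self-financing equation \eqref{eq:SF} at time $t-1$. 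Your route is shorter and more transparent because the discounting disappears automatically once one divides \eqref{eq:SF} by $B_{t-1}$, whereas the paper has to unwind the term $\sum_u\phi^0_u\Delta B_u$ from \eqref{eq:SFmain} via a summation by parts hidden in the product-rule step. The paper's approach, on the other hand, makes the logical dependence between the two propositions explicit and keeps the undiscounted and discounted characterisations parallel. Your remark about the boundary case $t=1$ (where $\phi_0=0$ forces the position and increment indicators to coincide) is the key consistency check, and you have handled it correctly.
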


\begin{proof}
Suppose that $\phi$ is self-financing.
We may apply Proposition~\ref{Intro/LemmaSelfFinancing} and the product rule for the backwards difference $\Delta$ to see that
\begin{align*}
\Delta\big(B^{-1}_t V_t(\phi)\big)&=\sum_{j=1}^N \Delta \Bigg(\phi^{j}_t B^{-1}_t\Big(1_{\{\phi^j_t \geq 0\}}P^{bid, j}_t
+ 1_{\{\phi^j_t < 0\}}P^{ask, j}_t\Big)\Bigg)\\
& \;  - \sum_{j=1}^N\Delta\phi^{ j}_{t}B^{-1}_{t-1}\Big(1_{\{\Delta  \phi^j_t \geq 0\}}P^{ask, j}_{t-1}
+1_{\{\Delta  \phi^j_t < 0\}}P^{bid, j}_{t-1}\Big)\\
& \; +\sum_{j=1}^N \phi^{ j}_t B^{-1}_t \Big(1_{\{\phi^j_t \geq 0\}}\Delta  A^{ask,j}_t+1_{\{\phi^j_t < 0\}}\Delta A^{bid,j}_t\Big)
\end{align*}
for all $t \in \cT^*$.
Summing both sides of the equation from $u=1$ to $u=t$ shows that necessity holds.

Conversely, if the value process $V(\phi)$ satisfies \eqref{eq:ValueSF}, we may apply the product rule for the backwards difference $\Delta$ to $\Delta\big(B(B^{-1}V(\phi))\big)$ to deduce that
\begin{align*}
\Delta V_t(\phi)&=\phi^{0}_{t}\Delta  B_t+\sum_{j=1}^N\Delta \Bigg( \phi^{j}_t\Big(1_{\{\phi^j_t \geq 0\}}P^{bid, j}_t+1_{\{\phi^j_t < 0\}}P^{ask, j}_t\Big)\Bigg)\\
& \;  - \sum_{j=1}^N\Delta\phi^{ j}_{t}\Big(1_{\{\Delta  \phi^j_t \geq 0\}}P^{ask, j}_{t-1}+1_{\{\Delta  \phi^j_t < 0\}}P^{bid, j}_{t-1}\Big)\\
& \; +\sum_{j=1}^N\phi^{ j}_t\Big(1_{\{\phi^j_t \geq 0\}}\Delta  A^{ask,j}_t+1_{\{\phi^j_t < 0\}}\Delta A^{bid,j}_t\Big).
\end{align*}
After summing both sides of the equation above from $u=1$ to $u=t$ and applying Proposition~\ref{Intro/LemmaSelfFinancing}, we see that $\phi$ is self-financing.
\end{proof}

\begin{remark}
If $P=P^{ask}=P^{bid}$ and $\Delta A=\Delta A^{ask}=\Delta A^{bid}$, then we recover the classic result: a trading strategy $\phi$ is self-financing if and only if the value process satisfies
\begin{equation*}
V^*_t(\phi)=V_0(\phi)+\sum_{j=1}^N\sum_{u=1}^t\phi^j_u\Delta \Big(B^{-1}_uP^j_u + \sum_{w=1}^u B^{-1}_w \Delta A^j_w\Big)
\end{equation*}
for all $t \in \cT^*$.
\end{remark}

For convenience, we define $P^{ask, *}:=B^{-1}P^{ask}, P^{bid, *}:=B^{-1}P^{bid}, A^{ask, *}:=B^{-1}\Delta A^{ask},$ and $A^{bid, *}:=B^{-1}\Delta A^{bid, }$.

In frictionless markets, the set of all self-financing trading strategies is a linear space because securities' prices are not influenced by the direction of trading.
This is no longer the case if the direction of trading matters: the strategy $\phi+\psi$ may not be self-financing even if $\phi$ and $\psi$ are self-financing.
Intuitively this is true because transaction costs can be avoided whenever $\phi^j_t \psi^j_t<0$ by combining orders.
However, the strategy $(\theta^0, \phi^1+\psi^1, \phi^2+\psi^2, \dots, \phi^N+\psi^N)$ can enjoy the self-financing property if the units in the savings account $\theta^0$ are properly adjusted.
The next lemma shows that such $\theta^0$ exists, is unique, and satisfies $\phi^0+\psi^0\leq \theta^0$.

\begin{proposition}\label{Prop:TradingStrategiesNonLinear}
  Let $\psi$ and $\phi$ be any two self-financing trading strategies with $V_0(\psi)=V_0(\phi)=0$.
   Then there exists a unique predictable process $\theta^0$ such that the trading strategy $\theta$ defined as $\theta:=(\theta^0, \phi^1+\psi^1, \dots, \phi^N+\psi^N)$ is self-financing with $V_0(\theta)=0$.
   Moreover, $\phi^0+\psi^0 \leq \theta^0$.
\end{proposition}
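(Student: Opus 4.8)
The plan is to construct $\theta^0$ explicitly from the self-financing characterization of Proposition~\ref{Intro/LemmaSelfFinancing} (equivalently Proposition~\ref{Intro/LemmaDiscountedSelfFinancing}), and then verify the three assertions (existence, uniqueness, domination) in turn. First I would recall that the self-financing condition \eqref{eq:SF} for a strategy $\eta$ is, at each time $t \in \{1, \dots, T-1\}$, a single scalar equation that determines $\Delta\eta^0_{t+1}$ in terms of $\eta^1_t, \dots, \eta^N_t$ and $\Delta\eta^1_{t+1}, \dots, \Delta\eta^N_{t+1}$ and the price/dividend data. Since the risky positions $\theta^j := \phi^j + \psi^j$ for $j \in \cJ^*$ are already prescribed, together with the normalization $\theta^0_0 = 0$ (from the convention $\phi_0 = \psi_0 = 0$, which also forces $V_0(\theta)=0$ once we fix $\theta^0$ appropriately) and with $V_0(\theta)=0$, the equation \eqref{eq:SF} applied at times $t = 1, \dots, T-1$ recursively pins down $\theta^0_1, \dots, \theta^0_T$ uniquely. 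Concretely: $V_0(\theta)=0$ determines $\theta^0_1$; then $\theta^0_{t+1} = \theta^0_t + \Delta\theta^0_{t+1}$ with $\Delta\theta^0_{t+1}$ read off from \eqref{eq:SF}. Predictability of $\theta^0$ follows because each $\Delta\theta^0_{t+1}$ is an $\cF_t$-measurable expression in $\cF_t$-measurable quantities (the risky holdings $\theta^j_t$ are $\cF_{t-1}$-measurable hence $\cF_t$-measurable, the increments $\Delta\theta^j_{t+1}$ are $\cF_t$-measurable by predictability of $\theta^j$, and the prices $P^{ask,j}_t, P^{bid,j}_t$ and dividend increments $\Delta A^{ask,j}_t, \Delta A^{bid,j}_t$ are $\cF_t$-measurable). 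This simultaneously gives existence and uniqueness.

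For the domination inequality $\phi^0 + \psi^0 \le \theta^0$, the strategy is to compare the discounted value processes using \eqref{eq:ValueSF}. Define the candidate strategy $\sigma := (\phi^0 + \psi^0, \phi^1 + \psi^1, \dots, \phi^N + \psi^N)$, which need not be self-financing. Adding the identities \eqref{eq:ValueSF} for $\phi$ and $\psi$ and subtracting the identity \eqref{eq:ValueSF} for $\theta$ (valid since $\theta$ is self-financing), the key observation is a pointwise convexity/subadditivity of the three position-dependent terms: for each $j$ and each $u$,
\begin{align*}
\phi^j_u\big(1_{\{\phi^j_u \geq 0\}}P^{bid,j}_u + 1_{\{\phi^j_u < 0\}}P^{ask,j}_u\big) + \psi^j_u\big(1_{\{\psi^j_u \geq 0\}}P^{bid,j}_u + 1_{\{\psi^j_u < 0\}}P^{ask,j}_u\big) \\
\leq (\phi^j_u + \psi^j_u)\big(1_{\{\phi^j_u+\psi^j_u \geq 0\}}P^{bid,j}_u + 1_{\{\phi^j_u+\psi^j_u < 0\}}P^{ask,j}_u\big),
\end{align*}
because $P^{ask} \geq P^{bid}$ (Bid-Ask Assumption): the liquidation value of a combined position is at least the sum of the liquidation values, since offsetting long and short positions avoids paying the spread. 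The analogous inequalities hold, with the inequality direction consistent, for the purchase-cost terms (here one needs $-\Delta\phi^j_u(\cdots) - \Delta\psi^j_u(\cdots) \geq -\Delta\theta^j_u(\cdots)$, again from $P^{ask}\geq P^{bid}$) and for the dividend terms (using $\Delta A^{ask} \leq \Delta A^{bid}$). Summing these, and recalling $V_0(\phi)=V_0(\psi)=V_0(\theta)=0$, yields $V^*_t(\phi) + V^*_t(\psi) \leq$ the right-hand side of \eqref{eq:ValueSF} evaluated with the combined risky holdings, whose only remaining difference from $V^*_t(\theta)$ is the $\phi^0 + \psi^0$ versus $\theta^0$ discrepancy; comparing the two expressions at $t = T$ (or at each $t$) isolates $\phi^0_t + \psi^0_t \leq \theta^0_t$ after multiplying through by $B_t > 0$.

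The main obstacle I expect is bookkeeping in the domination step: one must handle the purchase-cost sums, which involve $\Delta\phi^j_u$ rather than $\phi^j_u$, and check that the sign of the subadditivity inequality is preserved when these terms enter with a minus sign and when one telescopes. The cleanest route is probably to avoid \eqref{eq:ValueSF} entirely for this part and instead argue inductively on $t$ directly from the defining recursion for $\theta^0$: assuming $\phi^0_t + \psi^0_t \leq \theta^0_t$, write $\Delta\theta^0_{t+1} B_t = \sum_j \theta^j_t(\cdots \Delta A \cdots) - \sum_j \Delta\theta^j_{t+1}(\cdots P \cdots)$ from \eqref{eq:SF}, do the same for $\phi$ and $\psi$, and add; the two pointwise spread inequalities above (one for the $\Delta A$-terms, one for the $\Delta\phi^j_{t+1}+\Delta\psi^j_{t+1}$ vs $\Delta\theta^j_{t+1}$ purchase terms) give $\Delta\phi^0_{t+1} + \Delta\psi^0_{t+1} \leq \Delta\theta^0_{t+1}$ after dividing by $B_t > 0$, which combined with the inductive hypothesis closes the induction; the base case $\phi^0_1 + \psi^0_1 \leq \theta^0_1$ follows identically from $V_0 = 0$ and the definition of $V_0$. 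This inductive argument keeps the sign tracking local and transparent.
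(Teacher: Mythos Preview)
Your proposal is correct and, in its second (inductive) route, coincides with the paper's argument: the paper also defines $\theta^0$ recursively from $V_0(\theta)=0$ and the self-financing equation \eqref{eq:SF}, then subtracts the recursion for $\phi^0+\psi^0$ from that for $\theta^0$ and uses the pointwise spread inequality $1_{\{X<0\}}X+1_{\{Y<0\}}Y \leq 1_{\{X+Y<0\}}(X+Y)$ together with the Bid-Ask Assumption to obtain $\theta^0_t-(\phi^0_t+\psi^0_t)\geq \theta^0_{t-1}-(\phi^0_{t-1}+\psi^0_{t-1})$, exactly your increment inequality. One small point where you are in fact more careful than the paper: at the base step the paper asserts $\theta^0_1=\phi^0_1+\psi^0_1$, but with $\theta^0_1$ determined by $V_0(\theta)=0$ only the inequality $\theta^0_1\geq\phi^0_1+\psi^0_1$ holds in general (from the same purchase-cost subadditivity you identify), and this inequality is all that is needed.
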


\begin{proof}
  The trading strategies $\phi$ and $\psi$ are self-financing, so by definition we have that
  \begin{align}\label{Eq:Phi}
 B_{t-1}\Delta  \phi^{0}_{t}+&\sum_{j=1}^N\Delta  \phi^{ j}_{t}\Big(1_{\{\Delta  \phi^j_t \geq 0\}}P^{ask, j}_{t-1}+1_{\{\Delta  \phi^j_t < 0\}}P^{bid, j}_{t-1} \Big)\\ \notag
 & \quad =  \sum_{j=1}^N \phi^{ j}_{t-1}\Big(1_{\{ \phi^j_{t-1} \geq 0\}} \Delta  A^{ask, j}_{t-1}+1_{\{ \phi^j_{t-1} < 0\}} \Delta  A^{bid, j}_{t-1}\Big)
  \end{align}
 and
    \begin{align}\label{Eq:Psi}
 B_{t-1}\Delta  \psi^{0}_{t}+&\sum_{j=1}^N\Delta  \psi^{ j}_{t}\Big(1_{\{\Delta  \psi^j_t \geq 0\}}P^{ask, j}_{t-1}+1_{\{\Delta  \psi^j_t < 0\}}P^{bid, j}_{t-1}\Big)\\ \notag
 & \quad =  \sum_{j=1}^N \psi^{ j}_{t-1}\Big(1_{\{ \psi^j_{t-1} \geq 0\}} \Delta A^{ask, j}_{t-1}+1_{\{ \psi^j_{t-1} < 0\}} \Delta A^{bid, j}_{t-1}\Big)
  \end{align}
 for $t=2, 3, \dots, T$.
By adding equations \eqref{Eq:Phi} and \eqref{Eq:Psi}, and rearranging terms we see that
\begin{align}\label{Eq:SumPhiPsi}
\psi^{0}_{t}+\phi^{0}_{t}&=\psi^0_{t-1}+\phi^0_{t-1}+B^{-1}_{t-1}\Bigg(-\sum_{j=1}^NP^{ask, j}_{t-1}\big( \Delta  \phi^{ j}_{t}+\Delta  \psi^{ j}_{t}\big)\\ \notag
 & \qquad  + \sum_{j=1}^N\big(\phi^{ j}_{t-1}+ \psi^{ j}_{t-1}\big)\Delta  A^{ask, j}_{t-1}\\ \notag
 &\qquad   +\sum_{j=1}^N\big(P^{ask, j}_{t-1}-P^{bid, j}_{t-1}\big)\Big(1_{\{\Delta  \phi^j_t < 0\}} \Delta  \phi^{ j}_{t}+1_{\{\Delta  \psi^j_t < 0\}} \Delta  \psi^{ j}_{t}\Big)\\ \notag
& \qquad   +\sum_{j=1}^N\Big(1_{\{ \phi^j_{t-1} < 0\}}\phi^{j}_{t-1}+ 1_{\{ \psi^j_{t-1} < 0\}}\psi^{j}_{t-1}\Big) \big(\Delta A^{bid, j}_{t-1}-\Delta  A^{ask, j}_{t-1}\big)\Bigg) \notag
  \end{align}
 for $t=2, 3, \dots, T$.
Now recursively define the process $\theta^0$ as
\begin{align*}
  \theta^0_1:&=-\sum_{j=1}^N\Big (1_{\{\phi^j_1\geq0 \}} \phi^j_1+1_{\{\psi^j_1\geq0 \}} \psi^j_1\Big)P^{ask, j}_0
  -\sum_{j=1}^N \Big(1_{\{\phi^j_1 <0 \}}\phi^j_1+1_{\{\psi^j_1 <0 \}}\psi^j_1\Big)P^{bid, j}_0,
\end{align*}
and
\begin{align}\label{Eq:Theta}
 \theta^{0}_{t}&:=\theta^0_{t-1}+B^{-1}_{t-1}\Bigg(-\sum_{j=1}^N \Delta \big(\phi^{ j}_{t}+\psi^{ j}_{t}\big)P^{ask, j}_{t-1}+\sum_{j=1}^N\big(\phi^{ j}_{t-1}+\psi^{ j}_{t-1}\big)\Delta  A^{ask, j}_{t-1}\\\notag
&\qquad +\sum_{j=1}^N1_{\{\Delta  (\phi^{ j}_{t}+\psi^{ j}_{t}) < 0\}} \Delta \big(\phi^{ j}_{t}+\psi^{ j}_{t}\big)\big(P^{ask, j}_{t-1}-P^{bid, j}_{t-1}\big)\\\notag
 & \qquad +\sum_{j=1}^N 1_{\{ \phi^{ j}_{t-1}+\psi^{ j}_{t-1}< 0\}}\big(\phi^{ j}_{t-1}+\psi^{ j}_{t-1}\big)\big(\Delta A^{bid, j}_{t-1}-\Delta  A^{ask, j}_{t-1}\big)\Bigg)\notag
\end{align}
 for $t=2, 3, \dots, T$.
It follows that $\theta^0$ is unique and is self-financing.
By definition, the trading strategy $\theta:=(\theta^0, \phi^1+\psi^1, \dots, \phi^N+\psi^N)$ is self-financing.
Subtracting \eqref{Eq:SumPhiPsi} from \eqref{Eq:Theta} yields
\begin{align}\label{Eq:SubtractThetaFromPhiPsi}
 \theta^{0}_{t}-(\phi^0_t+\psi^0_t)&=\theta^0_{t-1}-\big(\phi^0_{t-1}+\psi^0_{t-1}\big)\\ \notag
& \;  +\sum_{j=1}^N1_{\{\Delta\phi^{ j}_{t}+\psi^{ j}_{t}) < 0\}} \Delta\big(\phi^{ j}_{t}+\psi^{ j}_{t}\big)\big(P^{ask, j}_{t-1}-P^{bid, j}_{t-1}\big)\\ \notag
& \;  -\sum_{j=1}^N\Big(1_{\{\Delta  \phi^j_t < 0\}}\Delta  \phi^{ j}_{t}+1_{\{\Delta  \psi^j_t < 0\}}\Delta  \psi^{ j}_{t}\Big)\big(P^{ask, j}_{t-1}-P^{bid, j}_{t-1}\big)\\ \notag
&  \; +\sum_{j=1}^N \big(\phi^{ j}_{t-1}+\psi^{ j}_{t-1}\big)1_{\{ \phi^{ j}_{t-1}+\psi^{ j}_{t-1}< 0\}}\big(\Delta A^{bid, j}_{t-1}-\Delta  A^{ask, j}_{t-1}\big)\\ \notag
 & \;  -\sum_{j=1}^N\big(\phi^{j}_{t-1}1_{\{ \phi^j_{t-1} < 0\}}+ \psi^{j}_{t-1}1_{\{ \psi^j_{t-1} < 0\}}\big)\big(\Delta A^{bid, j}_{t-1}-\Delta  A^{ask, j}_{t-1}\big)\notag
  \end{align}
for $t=2, 3, \dots, T$.
It is straightforward to verify that  the inequality
\begin{equation}\label{eq:ConvexInequality}
1_{\{X <0\}}X+1_{\{Y<0\}}Y \leq 1_{\{X+Y <0\}}(X+Y)
\end{equation}
 holds for any random variables $X$ and $Y$.
 Moreover, the inequalities $P^{bid} \leq P^{ask}$  and $\Delta A^{ask} \leq \Delta A^{bid}$ hold by assumption.
Hence, \eqref{Eq:SubtractThetaFromPhiPsi} reduces to
\begin{equation}\label{Eq:DifferenceEqTheta}
 \theta^{0}_{t}-\big(\phi^0_t+\psi^0_t\big)\geq \theta^0_{t-1}-\big(\phi^0_{t-1}+\psi^0_{t-1}\big)
 \end{equation}
 for $t=2, 3, \dots, T$.
  Since $V_0(\phi)=V_0(\psi)=0$,
it follows that $\theta^0_1 = \phi^0_1 +\psi^0_1$ and $V_0(\theta)=V_0(\phi)+V_0(\psi)=0$.
After recursively solving \eqref{Eq:DifferenceEqTheta}, we conclude that $\theta^0_t \geq\phi^0_t+\psi^0_t$ for all $t \in \cT^*$.
\end{proof}

The next result is the natural extension of the previous proposition to value processes.
It is intuitively true since some transaction costs may be avoided by combining orders.

\begin{theorem}\label{Arbitrage/LemmaNonLinearSF}
Let $\phi$ and $\psi$ be any two self-financing trading strategies such that $V_0(\phi)=V_0(\psi)=0$.
 There exists a unique predictable process $\theta^0$ such that the trading strategy defined as $\theta:=(\theta^0, \phi^1+\psi^1, \dots, \phi^N+\psi^N)$ is self-financing with $V_0(\theta)=0$, and $V_T(\theta)$ satisfies
\begin{equation*}
  V_T(\phi)+V_T(\psi) \leq V_T(\theta).
\end{equation*}
\end{theorem}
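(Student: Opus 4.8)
The plan is to build on Proposition~\ref{Prop:TradingStrategiesNonLinear}, which already furnishes the unique predictable $\theta^0$ making $\theta:=(\theta^0,\phi^1+\psi^1,\dots,\phi^N+\psi^N)$ self-financing with $V_0(\theta)=0$, together with the pointwise domination $\phi^0_t+\psi^0_t\leq\theta^0_t$ for all $t\in\cT^*$. So the only thing left is the terminal inequality $V_T(\phi)+V_T(\psi)\leq V_T(\theta)$, and this should follow by comparing the explicit formula for $V_T$ in Definition~\ref{Intro/DefValueProcess} evaluated at $\phi$, $\psi$, and $\theta$, term by term.

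First I would write out $V_T(\theta)-V_T(\phi)-V_T(\psi)$ using the $t=T$ branch of Definition~\ref{Intro/DefValueProcess}. The savings-account contribution is $B_T(\theta^0_T-\phi^0_T-\psi^0_T)\geq 0$ by Proposition~\ref{Prop:TradingStrategiesNonLinear} and nonnegativity of $B$. For each security $j$, since the holdings in $\theta$ are exactly $\phi^j+\psi^j$, I must compare
\[
(\phi^j_T+\psi^j_T)\bigl(1_{\{\phi^j_T+\psi^j_T\geq0\}}P^{bid,j}_T+1_{\{\phi^j_T+\psi^j_T<0\}}P^{ask,j}_T\bigr)
\]
with the corresponding sum of the $\phi$- and $\psi$-terms, and likewise for the dividend terms involving $\Delta A^{bid,j}_T$ (noting that in Definition~\ref{Intro/DefValueProcess} the dividend contribution at $t\geq1$ is always $\Delta A^{bid,j}_t$ regardless of sign, so those dividend terms are simply additive and cancel exactly). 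The price terms are handled by the elementary fact that for any $X,Y\in\bR$ and any $a\geq b$ one has $(X+Y)\bigl(1_{\{X+Y\geq0\}}b+1_{\{X+Y<0\}}a\bigr)\geq X\bigl(1_{\{X\geq0\}}b+1_{\{X<0\}}a\bigr)+Y\bigl(1_{\{Y\geq0\}}b+1_{\{Y<0\}}a\bigr)$ — this is just $\max$-type subadditivity, equivalent in spirit to \eqref{eq:ConvexInequality}, applied with $a=P^{ask,j}_T$, $b=P^{bid,j}_T$. Summing over $j$ and adding the nonnegative savings-account piece gives the claim.

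The main obstacle — and it is a mild one — is bookkeeping: one has to be careful that the dividend terms in $V_T$ genuinely cancel (which they do, because the sign indicator in the dividend term of Definition~\ref{Intro/DefValueProcess} is vacuous: both branches carry $\Delta A^{bid,j}_T$), and that the liquidation-value price terms are the only place where the combined strategy can strictly improve. I would state and prove the pointwise lemma $(X+Y)g(X+Y)\geq Xg(X)+Yg(Y)$ with $g(z)=1_{\{z\geq0\}}b+1_{\{z<0\}}a$, $a\geq b$, as a one-line check (it reduces to $\min(bX,aX)+\min(bY,aY)\leq\min(b(X+Y),a(X+Y))$ wait — more cleanly, $zg(z)=\min(bz,az)$ when... ), or simply invoke that $zg(z)=bz^+-az^-$ is concave in $z$ and hence superadditive through the origin. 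Then the proof is: apply Proposition~\ref{Prop:TradingStrategiesNonLinear} to get $\theta^0$; plug into Definition~\ref{Intro/DefValueProcess} at $t=T$; cancel the exactly-matching dividend terms; bound the price terms via superadditivity of $z\mapsto bz^+-az^-$; and bound the $B$-term using $\theta^0_T\geq\phi^0_T+\psi^0_T$. Uniqueness, predictability, self-financing of $\theta$, and $V_0(\theta)=0$ are all inherited verbatim from Proposition~\ref{Prop:TradingStrategiesNonLinear}, so no further argument is needed there.
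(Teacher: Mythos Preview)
Your approach is correct and is genuinely more direct than the paper's. The paper does not compare $V_T(\theta)$ with $V_T(\phi)+V_T(\psi)$ via Definition~\ref{Intro/DefValueProcess} at $t=T$; instead it expands $V_T(\phi)+V_T(\psi)$ through the self-financing characterization of Proposition~\ref{Intro/LemmaSelfFinancing}, rearranges the resulting sum over \emph{all} times $u=1,\dots,T$ into the form $V_T(\theta)+\sum_{u=1}^T(\phi^0_u+\psi^0_u-\theta^0_u)\Delta B_u - C^1 - C^2$, and then checks that the correction terms $C^1+C^2$ (which collect price and dividend mismatches at every date) are nonnegative via \eqref{eq:ConvexInequality}. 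Your route avoids unrolling the history entirely: since Proposition~\ref{Prop:TradingStrategiesNonLinear} already delivers $\theta^0_T\geq\phi^0_T+\psi^0_T$, you only need the time-$T$ liquidation-value comparison, and the single superadditivity fact that $z\mapsto bz^+-az^-$ is concave (hence superadditive, being positively homogeneous) when $a\geq b$. This is cleaner.

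One caveat: your observation that the dividend term in Definition~\ref{Intro/DefValueProcess} carries $\Delta A^{bid,j}_t$ on both sign branches is relying on what is evidently a typo in the paper --- compare the dividend terms in Definition~\ref{Intro/DefSelfFinancing}, Proposition~\ref{Intro/LemmaSelfFinancing}, and Proposition~\ref{Intro/LemmaDiscountedSelfFinancing}, all of which use $\Delta A^{ask,j}$ for the $\{\phi^j_t\geq0\}$ branch. With the intended definition the dividend terms do \emph{not} cancel, but your argument goes through unchanged: the map $z\mapsto z\bigl(1_{\{z\geq0\}}\Delta A^{ask,j}_T+1_{\{z<0\}}\Delta A^{bid,j}_T\bigr)=\Delta A^{ask,j}_T\,z^+-\Delta A^{bid,j}_T\,z^-$ is again concave and positively homogeneous (since $\Delta A^{ask}\leq\Delta A^{bid}$), hence superadditive, so the dividend piece contributes nonnegatively just like the price piece. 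You should make this explicit rather than invoking the typo.
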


\begin{proof}
Let $\phi$ and $\psi$ be self-financing trading strategies.
By applying Proposition~\ref{Intro/LemmaSelfFinancing} and rearranging terms, we may write
\begin{align}
 V_T(\phi)+V_T(\psi)&=\sum_{u=1}^T(\phi^0_u+\psi^0_u)\Delta B_u \notag\\
& +\sum_{j=1}^N\big(\phi^{j}_T+\psi^{j}_T\big)\Big( 1_{\{\phi^j_T +\psi^j_T \geq 0\}}P^{bid, j}_T + 1_{\{\phi^j_T +\psi^j_T < 0\}} P^{ask, j}_T\Big)\notag\\
& -\sum_{j=1}^N\sum_{u=1}^T\big(\Delta\phi^j_u + \Delta \psi^j_u \big)\Big(1_{\{\Delta\phi^j_u +\Delta \psi^j_u \geq 0\}}P^{ask, j}_{u-1}
 +1_{\{\Delta\phi^j_u +\Delta \psi^j_u  < 0\}}P^{bid, j}_{u-1}\Big)\notag\\
& +\sum_{j=1}^N \sum_{u=1}^T\big(\phi^{ j}_u+\psi^{ j}_u\big)\Big(1_{\{\phi^j_u +\psi^j_u \geq 0\}}\Delta A^{ask,j}_u
 +1_{\{\phi^j_u + \psi^j_u < 0\}}\Delta A^{bid,j}_u\Big) -C^1- C^2,\label{eq:SumValues}
\end{align}
where $C^1$ is  defined as
\begin{align*}
&  C^1:=\sum_{j=1}^N \Bigg(\Big( 1_{\{\phi^j_T \geq 0\}}\phi^{ j}_T+1_{\{\psi^j_T \geq 0\}}\psi^{ j}_T\Big)P^{ask, j}_T
  +\Big(1_{\{\phi^j_T < 0\}} \phi^{ j}_T+1_{\{\psi^j_T <0\}}\psi^{ j}_T\Big)P^{bid, j}_T\Bigg)\\
&   -\sum_{j=1}^N\sum_{u=1}^T\Big(1_{\{\Delta\phi^j_u < 0\}}\Delta\phi^j_u+ 1_{\{\Delta\psi^j_u < 0\}}\Delta\psi^j_u\Big)P^{ask, j}_{u-1}\\
 &-\sum_{j=1}^N\sum_{u=1}^T\Big( 1_{\{\Delta\phi^j_u \geq 0\}}\Delta\phi^j_u+ 1_{\{\Delta\psi^j_u \geq 0\}}\Delta\psi^j_u\Big)P^{bid, j}_{u-1}\\
&+\sum_{j=1}^N\sum_{u=1}^T\Bigg(\Big(1_{\{\phi^j_u \geq 0\}}\phi^{ j}_u+1_{\{\psi^j_u \geq 0\}}\psi^{ j}_u\Big)\Delta A^{bid,j}_u
+\Big(1_{\{\phi^j_u < 0\}}\phi^{ j}_u+1_{\{\psi^j_u < 0\}}\psi^{ j}_u\Big)\Delta A^{ask, j}_u\Bigg),
\end{align*}
and $C^2$  is defined as
\begin{align*}
  C^2&:= -\sum_{j=1}^N \big(\phi^{ j}_T + \psi^{ j}_T\big)\Big(1_{\{\phi^j_T +\psi^j_T < 0\}}P^{bid, j}_T
  +1_{\{\phi^j_T +\psi^j_T \geq 0\}} P^{ask, j}_T\Big) \\
& +\sum_{j=1}^N\sum_{u=1}^T\big(\Delta\phi^j_u +\Delta \psi^j_u \big)\Big(1_{\{\Delta\phi^j_u +\Delta \psi^j_u \geq 0\}}P^{bid, j}_{u-1}
+1_{\{\Delta\phi^j_u +\Delta \psi^j_u < 0\}}P^{ask, j}_{u-1}\Big)\\
 &-\sum_{j=1}^N\sum_{u=1}^T\big(\phi^{ j}_u+\psi^{ j}_u\big)\Big(1_{\{\phi^j_u + \psi^j_u \geq 0\}}\Delta A^{bid,j}_u
 +1_{\{\phi^j_u + \psi^j_u < 0\}}\Delta A^{ask, j}_u\Big).
\end{align*}
By Proposition~\ref{Prop:TradingStrategiesNonLinear}, there exists a unique predictable process $\theta^0$ such that the trading strategy defined as $\theta:=(\theta^0, \phi^1+\psi^1, \dots, \phi^N+\psi^N)$ is self-financing with $V_0(\theta)=0$ and satisfies $\phi^0+\psi^0 \leq \theta^0$.
In view of Proposition~\ref{Intro/LemmaDiscountedSelfFinancing},
since $\theta$ is self-financing, it follows that
\begin{align}\label{eq:ValueProcessTheta}
  V_T(\theta)& =\sum_{u=1}^T\theta^0_u\Delta B_u +\sum_{j=1}^N\big(\phi^{j}_T+\psi^{j}_T\big)\Big( 1_{\{\phi^j_T +\psi^j_T \geq 0\}}P^{bid, j}_T \notag
 +1_{\{\phi^j_T +\psi^j_T < 0\}} P^{ask, j}_T\Big)\\ \notag
&   -\sum_{j=1}^N\sum_{u=1}^T\big(\Delta\phi^j_u +\Delta \psi^j_u  \big)\Big(1_{\{\Delta\phi^j_u +\Delta \psi^j_u  \geq 0\}}P^{ask, j}_{u-1}
 +1_{\{\Delta\phi^j_u +\Delta \psi^j_u  < 0\}}P^{bid, j}_{u-1}\Big)\\
& +\sum_{j=1}^N \sum_{u=1}^T\big(\phi^{ j}_u+\psi^{ j}_u\big)\Big(1_{\{\phi^j_u +\psi^j_u \geq 0\}}\Delta A^{ask,j}_u
 +\sum_{u=1}^T1_{\{\phi^j_u + \psi^j_u < 0\}}\Delta A^{bid,j}_u\Big).
\end{align}
Comparing equations \eqref{eq:SumValues} and \eqref{eq:ValueProcessTheta} we see that
\begin{equation}\label{eq:nonlinear1}
 V_T(\phi)+V_T(\psi)=V_T(\theta)+\sum_{u=1}^T(\phi^0_u+\psi^0_u-\theta^0_u)\Delta B_u-C^1- C^2.
\end{equation}
According to \eqref{eq:ConvexInequality}, the random variable $C^1+C^2$ is nonnegative.
Moreover, since $\phi^0+\psi^0 \leq \theta^0$ and $\Delta B \geq 0$, it follows that $ \sum_{u=1}^T(\phi^0_u+\psi^0_u-\theta^0_u)\Delta B_u \leq 0$
From \eqref{eq:nonlinear1}, we conclude that $V_T(\phi)+V_T(\psi)\leq V_T(\theta)$.
\end{proof}

The following technical lemma, which easy to verify, will be used in the next section.

\begin{lemma} \label{Arbitrage/ValueConvergenceSF}
The following hold:
\begin{itemize}
\item
Let $Y^a$ and $Y^b$ be any random variables, and suppose $X^m$ is a sequence of $\bR$-valued random variables converging a.s. to $X$.
 Then $1_{\{X^m \geq 0\}}X^mY^b+1_{\{X^m <0\}}X^mY^a$ converges a.s. to $1_{\{X \geq 0\}}XY^b+1_{\{X <0\}}XY^a.$
\item
  If a sequence of trading strategies $\phi^m$ converges a.s. to $\phi$, then $V(\phi^m)$ converges a.s. to $V(\phi)$.
  \end{itemize}
\end{lemma}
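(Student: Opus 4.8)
The statement is a routine continuity/limiting lemma, so the plan is to verify each bullet by elementary real analysis, taking care only with the discontinuity of the indicator functions.

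For the first bullet, the only subtlety is that $x \mapsto 1_{\{x \geq 0\}}$ is not continuous at $0$. I would split into two cases according to the sign of the limit $X$. On the event $\{X > 0\}$, eventually $X^m > 0$ as well, so $1_{\{X^m \geq 0\}}X^m Y^b + 1_{\{X^m < 0\}}X^m Y^a = X^m Y^b$ for all large $m$, which converges to $X Y^b = 1_{\{X \geq 0\}}X Y^b + 1_{\{X < 0\}}X Y^a$; symmetrically on $\{X < 0\}$. On the event $\{X = 0\}$, both $1_{\{X^m \geq 0\}}X^m$ and $1_{\{X^m < 0\}}X^m$ are dominated in absolute value by $|X^m| \to 0$, so the whole expression tends to $0$, which again equals the claimed limit since $X = 0$ there. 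Taking the union of these three events (which has full probability), the a.s. convergence follows. The key point worth stating explicitly is that the $\{X=0\}$ event is precisely where one cannot argue via continuity of the indicator, and there one instead uses the factor $X^m \to 0$ to kill the jump.

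For the second bullet, I would apply the first bullet coordinatewise. Fix $t \in \cT$; recall from Definition~\ref{Intro/DefValueProcess} that $V_t(\phi)$ is built from finitely many terms, each of one of two forms: either $\phi^0_t B_t$ (when $t \geq 1$) or $\phi^0_1$ (when $t = 0$), which are linear hence continuous in $\phi$; or a term of the shape $\phi^j_t(1_{\{\phi^j_t \geq 0\}}Y^b + 1_{\{\phi^j_t < 0\}}Y^a)$ with $Y^a, Y^b$ equal to the relevant bid/ask prices or dividend increments (which do not depend on $\phi$). Applying the first bullet with $X^m = (\phi^m)^j_t$, $X = \phi^j_t$ gives a.s. convergence of each such term, and a finite sum of a.s.-convergent sequences converges a.s. to the sum of the limits. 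Hence $V_t(\phi^m) \to V_t(\phi)$ a.s. for each fixed $t$, and since $\cT$ is finite this yields a.s. convergence of the whole process $V(\phi^m)$ to $V(\phi)$.

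There is no real obstacle here — the lemma is "easy to verify" as the authors say. The only place demanding a moment's care is the behavior of the indicator-weighted terms on the null-by-the-sign-argument-but-not-by-continuity event $\{X = 0\}$, and that is handled by domination as above. Everything else is finite additivity of a.s. limits plus the fact that $B_t$, the bid/ask prices, and the dividend increments are fixed random variables not varying with $m$.
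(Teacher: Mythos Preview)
Your argument is correct, and the paper itself omits the proof entirely, merely remarking that the lemma is ``easy to verify''; your case split on the sign of $X$ (with the domination argument on $\{X=0\}$) together with the observation that $V_t(\phi)$ is a finite sum of linear terms and terms of the shape handled by the first bullet is exactly the natural verification the authors had in mind.
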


\subsection{The set of claims that can be superhedged at zero cost}
For all $t \in \cT$, denote by $L^0(\Omega, \cF_{t}, \bP\,; \bR^{(N+1)})$ the space of all ($\bP$-equivalence classes of) $\bR^{(N+1)}$-valued, $\cF_{t}$-measurable random variables.
We equip $L^0(\Omega, \cF_t, \bP\, ; \bR)$ with the topology of convergence in measure $\bP$.
Also, let $\cS$ be the set of all self-financing trading strategies.
For the sake of conciseness, we will refer to sets that are closed with respect to convergence in measure $\bP$ simply as $\bP$-closed.

We define the sets
\begin{align*}
&\cK:=\big\{ V^*_T(\phi):  \, \phi \in \cS, \; V_0(\phi)=0\big\},\\
&L^0_+(\Omega, \cF_T, \bP\,; \bR):=\big\{X \in L^0(\Omega, \cF_T, \bP\,; \bR):\; X \geq 0\big\},\\
&\cK-L^0_+(\Omega, \cF_T, \bP\,; \bR):=\big\{Y-X: \; Y \in \cK \;  \text{and} \; X \in L^0_+(\Omega, \cF_T, \bP\,; \bR)\big\}.
\end{align*}
The set $\cK$ is the \emph{set of attainable claims at zero cost}.
On the other hand, $\cK-L^0_+(\Omega, \cF_T, \bP\,; \bR)$ is \emph{the set of claims that can be superhedged at zero cost}: for any $X \in \cK-L^0_+(\Omega, \cF_T, \bP\,; \bR)$, there exists an attainable value at zero cost $K \in \cK$ so that $X \leq K$.

The following lemma asserts that the set of claims that can be superhedged at zero cost is a convex cone.

\begin{lemma}\label{Arbitrage/LemmaConvexCone}
The set $\cK-L^0_+(\Omega, \cF_T, \bP\,; \bR)$ is a convex cone.
\end{lemma}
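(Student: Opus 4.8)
The plan is to verify the two defining properties of a convex cone for $\cK - L^0_+(\Omega,\cF_T,\bP;\bR)$: closure under nonnegative scalar multiplication and closure under addition. Throughout I would work with the representation that $X \in \cK - L^0_+$ means $X \le V^*_T(\phi)$ for some self-financing $\phi$ with $V_0(\phi)=0$, and exploit the fact (Theorem~\ref{Arbitrage/LemmaNonLinearSF} and Proposition~\ref{Prop:TradingStrategiesNonLinear}) that sums of zero-cost self-financing strategies can be dominated, after adjusting the savings-account component, by a zero-cost self-financing strategy.

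For the scalar multiplication property, let $X \le V^*_T(\phi)$ with $\phi$ self-financing and $V_0(\phi)=0$, and let $\alpha \ge 0$. First I would treat $\alpha = 0$ separately: the zero strategy (with the convention $\phi_0 = (0,\ldots,0)$, and all holdings zero) is trivially self-financing with zero initial cost and $V^*_T = 0$, so $0 \in \cK \subseteq \cK - L^0_+$. For $\alpha > 0$, the natural move is to use that $\alpha\phi := (\alpha\phi^0,\ldots,\alpha\phi^N)$ is again self-financing with $V_0(\alpha\phi)=0$ and $V^*_T(\alpha\phi) = \alpha V^*_T(\phi)$; this follows directly by inspecting the self-financing identity in Definition~\ref{Intro/DefSelfFinancing} and the value-process formula, since the indicator functions $1_{\{\alpha\phi^j_t \ge 0\}}$ coincide with $1_{\{\phi^j_t \ge 0\}}$ when $\alpha > 0$, so every term in \eqref{eq:SF} and in the expression for $V^*_T$ simply scales by $\alpha$. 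Hence $\alpha V^*_T(\phi) \in \cK$, and since $\alpha X \le \alpha V^*_T(\phi)$ with $\alpha X$ obtained by subtracting the nonnegative random variable $\alpha(V^*_T(\phi) - X) \in L^0_+$, we get $\alpha X \in \cK - L^0_+$.

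For closure under addition, take $X_1 \le V^*_T(\phi)$ and $X_2 \le V^*_T(\psi)$ with $\phi,\psi$ self-financing, $V_0(\phi)=V_0(\psi)=0$. By Theorem~\ref{Arbitrage/LemmaNonLinearSF} there is a predictable $\theta^0$ making $\theta := (\theta^0, \phi^1+\psi^1,\ldots,\phi^N+\psi^N)$ self-financing with $V_0(\theta)=0$ and $V_T(\phi)+V_T(\psi) \le V_T(\theta)$; dividing by $B_T > 0$ gives $V^*_T(\phi) + V^*_T(\psi) \le V^*_T(\theta)$, so $V^*_T(\theta) \in \cK$. Then $X_1 + X_2 \le V^*_T(\phi) + V^*_T(\psi) \le V^*_T(\theta)$, which exhibits $X_1 + X_2$ as an element of $\cK$ minus the nonnegative random variable $V^*_T(\theta) - (X_1+X_2)$, hence $X_1 + X_2 \in \cK - L^0_+$. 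Combining the two properties (and noting the set is nonempty since $0$ belongs to it) shows $\cK - L^0_+(\Omega,\cF_T,\bP;\bR)$ is a convex cone.

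I do not expect any serious obstacle here; this lemma is essentially a bookkeeping consequence of the two structural results already established. The only point requiring a little care is the scalar-multiplication step: one must confirm that multiplying a self-financing strategy by a positive constant preserves the self-financing property despite the indicator functions, which is immediate because scaling by $\alpha>0$ does not change the sign of any $\phi^j_t$ or $\Delta\phi^j_t$. (Non-positive $\alpha$ would not work, but the cone only requires $\alpha \ge 0$.) Everything else is a direct appeal to Theorem~\ref{Arbitrage/LemmaNonLinearSF} together with the observation that $\cK - L^0_+$ is, by construction, downward closed and hence absorbs the slack terms produced when passing from $V^*_T(\phi)+V^*_T(\psi)$ to the dominating $V^*_T(\theta)$.
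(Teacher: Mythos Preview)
Your proof is correct and follows essentially the same approach as the paper: both rely on the positive homogeneity of the value process and on Theorem~\ref{Arbitrage/LemmaNonLinearSF} to dominate $V^*_T(\phi)+V^*_T(\psi)$ by some $V^*_T(\theta)\in\cK$. The only cosmetic difference is that the paper treats the cone property in a single step by showing closure under arbitrary positive linear combinations $\alpha_1 Y^1+\alpha_2 Y^2$, whereas you split this into separate verifications of scalar multiplication and addition.
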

\begin{proof}
Let $Y^1, Y^2 \in \cK-L^0_+(\Omega, \cF_T, \bP\,; \bR)$.
Then there exist $K^1, K^2 \in \cK$ and $Z^1, Z^2 \in L^0_+(\Omega, \cF_T, \bP\,; \bR)$ such that $Y^1=K^1-Z^1$ and $Y^2=K^2-Z^2$.
By definition of $\cK$, there exists $\phi, \psi \in \cS$ with $V_0(\phi)=V_0(\psi)=0$ such that $K^1=V^*_T(\phi)$ and $K^2=V^*_T(\psi)$.
We will prove that for any positive scalars $\alpha_1$ and $\alpha_2$ the following holds
\begin{equation*}
\alpha_1 (V^*_T(\phi)-Z^1)+\alpha_2 (V^*_T(\psi)-Z^2) \in \cK-L^0_+(\Omega, \cF_T, \bP\,; \bR),
\end{equation*}
or, equivalently, that there exists $K \in \cK$ such that
\begin{equation*}
\alpha_1 V^*_T(\phi)+\alpha_2 V^*_T(\psi)-\alpha_1 Z^1-\alpha_2 Z^2 \leq K.
\end{equation*}
 The value process is positive homogeneous, so $\alpha_1V^*_T(\phi)+\alpha_2V^*_T(\psi)=V^*_T(\alpha_1 \phi)+ V^*_T(\alpha_2\psi)$.
According to Theorem~\ref{Arbitrage/LemmaNonLinearSF}, there exists a unique predictable process $\theta^0$ such that the trading strategy defined as $\theta:=(\theta^0, \alpha_1\phi^{1}+\alpha_2\psi^{1}, \dots, \alpha_1\phi^{N}+\alpha_2\psi^{N})$ is self-financing with $V_0(\theta)=0$, and satisfies $V^*_T(\alpha_1\phi)+V^*_T(\alpha_2\psi) \leq V^*_T(\theta)$.
By definition, we have $V^*_T(\theta) \in \cK$.
Since
\begin{align*}
\alpha_1 V^*_T( \phi)+\alpha_2 V^*_T(\psi)-\alpha_1 Z^1-\alpha_2 Z^2  &=V^*_T(\alpha_1 \phi)+V^*_T(\alpha_2 \psi)-\alpha_1 Z^1-\alpha_2 Z^2 \\
& \leq V^*_T(\theta)-\alpha_1 Z^1-\alpha_2 Z^2 \leq V^*_T(\theta),
\end{align*}
we conclude that the  claim holds.
\end{proof}

\begin{remark}
  The set $\cK$ is not necessarily a convex cone.
 To see this, lets suppose that $T=1$, $\cJ=\{0, 1\}$, and $r=0$.
Consider the trading strategies $\phi=\{\phi^0, 1\}$ and $\psi=\{\psi^0, -1\}$, where  $\phi^0$ and $\psi^0$  are chosen so that $V_0(\phi)=V_0(\psi)=0$.
By definition,  $V_1(\phi), V_1(\psi) \in \cK$.
However,
  \begin{equation*}
    V^*_1(\phi)+V^*_1(\psi)=P^{bid}_1-P^{ask}_1+A^{ask}_1-A^{bid}_1+P^{bid}_0-P^{ask}_0,
  \end{equation*}
  is generally not in the set $\cK$.
\end{remark}

\section{The no-arbitrage condition}\label{section:Arbitrage}

We begin by introducing the definition of the no-arbitrage condition.

\begin{definition}\label{def:arbitrage}
\emph{The no-arbitrage condition} \textbf{(NA)} is satisfied if for each $\phi \in \cS$ such that $V_0(\phi)=0$ and $V_T(\phi) \geq0$, we have $V_T(\phi)=0$.
\end{definition}
In the present context, \textbf{NA} has the usual interpretation that ``it is impossible to make something out of nothing.''
The next lemma provides us equivalent conditions  to \textbf{NA} in terms of the set of attainable claims at zero cost, and also in terms of the set of claims that can be superhedged at zero cost.
They are straightforward to verify.

\begin{lemma}\label{Arbitrage/LemmaEquArb}
The following conditions are equivalent:
\begin{itemize}
\item[(i)]\emph{\textbf{NA}} is satisfied.
\item[(ii)] $\big(\cK-L^0_+(\Omega, \cF_T, \bP\,; \bR)\big) \cap L^0_+(\Omega, \cF_T, \bP\,; \bR)=\{0\}$.
\item [(iii)] $\cK \cap L^0_+(\Omega, \cF_T, \bP\,; \bR)=\{0\}.$
\end{itemize}
\end{lemma}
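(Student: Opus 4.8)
The plan is to prove the chain of equivalences (i) $\Leftrightarrow$ (ii) $\Leftrightarrow$ (iii) by a cyclic or back-and-forth argument, relying on the fact that $\cK-L^0_+$ is a convex cone (Lemma~\ref{Arbitrage/LemmaConvexCone}) and that $\cK\subseteq \cK-L^0_+$ since $0\in L^0_+$. The statement is "straightforward to verify" as the authors note, so the proof is short; the only mild subtlety is translating between $V_T$ and $V^*_T=B^{-1}V_T$, which is harmless because $B_T>0$ (indeed $B_T=\prod_{s=0}^T(1+r_s)\ge 1$), so $V_T(\phi)\ge 0$ iff $V^*_T(\phi)\ge 0$ and $V_T(\phi)=0$ iff $V^*_T(\phi)=0$.

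First I would show (i) $\Rightarrow$ (iii). Suppose \textbf{NA} holds and let $X\in \cK\cap L^0_+$. Then $X=V^*_T(\phi)$ for some $\phi\in\cS$ with $V_0(\phi)=0$, and $X\ge 0$ means $V_T(\phi)=B_T X\ge 0$. By \textbf{NA}, $V_T(\phi)=0$, hence $X=B_T^{-1}V_T(\phi)=0$. So $\cK\cap L^0_+=\{0\}$.

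Next (iii) $\Rightarrow$ (ii). Clearly $\{0\}\subseteq (\cK-L^0_+)\cap L^0_+$ since $0=V^*_T(\phi^{\mathrm{triv}})$ for the trivial strategy. For the reverse inclusion, let $Y\in(\cK-L^0_+)\cap L^0_+$. Write $Y=K-Z$ with $K\in\cK$, $Z\in L^0_+$. Then $K=Y+Z\ge 0$, so $K\in\cK\cap L^0_+=\{0\}$ by (iii), giving $K=0$ and hence $0\le Y=-Z\le 0$, i.e. $Y=0$. Then (ii) $\Rightarrow$ (i) is immediate: if $\phi\in\cS$ with $V_0(\phi)=0$ and $V_T(\phi)\ge 0$, then $V^*_T(\phi)\in\cK$ and $V^*_T(\phi)\ge 0$, so $V^*_T(\phi)\in(\cK-L^0_+)\cap L^0_+=\{0\}$, whence $V^*_T(\phi)=0$ and therefore $V_T(\phi)=B_T V^*_T(\phi)=0$. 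This closes the cycle (i) $\Rightarrow$ (iii) $\Rightarrow$ (ii) $\Rightarrow$ (i).

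There is no real obstacle here; the "hard part," such as it is, is just being careful that the definition of \textbf{NA} is phrased via $V_T$ while $\cK$ is defined via $V^*_T$, and that membership $X\in\cK-L^0_+$ only guarantees $X\le K$ for some $K\in\cK$ rather than equality — both points are handled by the monotonicity/positivity observations above and by using $0\in L^0_+$ so that $\cK\subseteq\cK-L^0_+$. I would not invoke the convex cone property for this particular lemma (it is not needed for the equivalences), though it is the natural companion fact; the whole proof is a few lines of set-chasing.
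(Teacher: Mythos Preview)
Your proof is correct and complete. The paper does not give a proof of this lemma at all, merely remarking that the equivalences ``are straightforward to verify''; your cycle (i) $\Rightarrow$ (iii) $\Rightarrow$ (ii) $\Rightarrow$ (i) is exactly the kind of short set-chasing argument the authors had in mind, and your handling of the $V_T$ versus $V^*_T$ distinction via $B_T>0$ is the only point that required any care.
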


We proceed by defining The Efficient Friction Assumption in our context (cf. Kabanov et al.~\cite{KabanovRasonyiStricker2002}).

\smallskip
\noindent
\textbf{The Efficient Friction Assumption (EF):}
  \begin{equation}\label{eq:AssumptionRedundant}
    \big\{\phi \in \cS: \; V_0(\phi)=V_T(\phi)=0 \big\}= \{0\}.
  \end{equation}

Note that if \eqref{eq:AssumptionRedundant} is satisfied, then for each $\phi \in \cS$, we have $V_0(\phi)=V_T(\phi)=0$ if and only if $\phi=0$.
The efficient friction assumption, which was introduced by Kabanov et al.~[KRS02], states that the only portfolio that can be liquidated into the zero portfolio that is available at zero price is the zero portfolio.
In the present context, \textbf{EF} has the same interpretation: the only zero-cost, self-financing strategy that can be liquidated into the zero portfolio is the zero portfolio.

We will denote by \textbf{NAEF} \emph{the no-arbitrage condition under the efficient friction assumption}.

 In what follows, we denote by $\cP$  the set of all $\bR^{N}$-valued, $\bF$-predictable processes.
 Also, we define the mapping
  \begin{align}
  F(\phi)&:=\sum_{j=1}^N  \phi^{j}_T\big(1_{\{ \phi^{j}_T \geq 0\}}P^{bid, j, *}_T+ 1_{\{ \phi^{j}_T < 0\}}P^{ask, j, *}_T\big)\notag\\
& \  -\sum_{j=1}^N\sum_{u=1}^T \Delta \phi^{ j}_u\big(1_{\{\Delta \phi^{j}_u \geq 0\}}P^{ask, j, *}_{u-1}+1_{\{\Delta \phi^{ j}_u< 0\}} P^{bid, j, *}_{u-1}\big)\notag\\
& \ +\sum_{j=1}^N \sum_{u=1}^T \phi^{ j}_u \big(1_{\{ \phi^{j}_u \geq 0\}} A^{ask,j, *}_u+1_{\{ \phi^{ j}_u < 0\}}A^{bid,j, *}_u\big)\label{eq:DefMappingF}
\end{align}
for all $\bR^N$-valued stochastic processes
\begin{equation*}
 (\phi_s)_{s=1}^T \in L^0(\Omega, \cF_T, \bP; \bR^N) \times \cdots \times L^0(\Omega, \cF_T, \bP; \bR^N) ,
 \end{equation*}
 and let $\bK:=\{F(\phi): \phi \in \cP\}$.
In view of Proposition~\ref{Intro/LemmaDiscountedSelfFinancing}, we note that $V^*_T(\phi)=V_0(\phi)+F(\phi)$ for all self-financing trading strategies $\phi$

\begin{remark}\mbox{}
\begin{itemize}
\item[(i)]
Note that $F$ is defined on the set of all $\bR^N$-valued stochastic processes.
On the contrary, the value process is defined on the set of trading strategies, which are $\bR^{N+1}$-valued predictable processes.
\item[(ii)]
The set $\bK$ has the same financial interpretation as the set $\cK$.
We introduce the set $\bK$ because it is more convenient to work with from the mathematical point of view.
\item[(iii)]
$F(\alpha \phi)=\alpha F(\phi)$ for any nonnegative random variable $\alpha$.
\end{itemize}
\end{remark}

The next results provides an equivalent condition for \textbf{EF} to hold.

\begin{lemma}\label{lemma: EFequivalent}
The efficient friction assumption \textbf{\emph{(EF)}} is satisfied if and only if $\{\psi \in \cP: F(\psi)=0\}=\{0\}$.
\end{lemma}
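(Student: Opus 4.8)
The plan is to build an explicit, bijective correspondence between $\cP$ and the set $\{\phi\in\cS:\ V_0(\phi)=0\}$ under which the map $F$ is identified with the discounted terminal value $V^*_T$, and then to read off the stated equivalence. Throughout one uses that $B_t\ge 1>0$, since $r$ is nonnegative.

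First I would note that, by Proposition~\ref{Intro/LemmaDiscountedSelfFinancing}, any $\phi\in\cS$ with $V_0(\phi)=0$ satisfies \eqref{eq:ValueSF}, and evaluating that identity at $t=T$ gives, after the substitutions $P^{ask,*}=B^{-1}P^{ask}$, $P^{bid,*}=B^{-1}P^{bid}$, $A^{ask,*}=B^{-1}\Delta A^{ask}$, $A^{bid,*}=B^{-1}\Delta A^{bid}$ and the convention $\phi_0=0$ (so $\Delta\phi^j_1=\phi^j_1$), exactly $V^*_T(\phi)=F\big((\phi^1,\dots,\phi^N)\big)$. Since $B_T>0$, this already yields the implication-free fact that $V_T(\phi)=0\iff F\big((\phi^1,\dots,\phi^N)\big)=0$ for such $\phi$.

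Next I would go in the other direction: given $\psi=(\psi^1,\dots,\psi^N)\in\cP$ (with the convention $\psi_0=0$), I would produce a unique predictable process $\phi^0$ making $\phi:=(\phi^0,\psi^1,\dots,\psi^N)\in\cS$ with $V_0(\phi)=0$. The value $\phi^0_1$ is forced by the equation $V_0(\phi)=0$, and for $t=1,\dots,T-1$ the self-financing identity \eqref{eq:SF} determines $\Delta\phi^0_{t+1}$ — hence $\phi^0_{t+1}$ — from $\cF_t$-measurable data, so $\phi^0$ is predictable and unique; this is the single-strategy version of the recursion used in the proof of Proposition~\ref{Prop:TradingStrategiesNonLinear}. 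By the first paragraph, this $\phi$ satisfies $F(\psi)=V^*_T(\phi)=B_T^{-1}V_T(\phi)$.

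With this correspondence the equivalence is immediate. For the forward direction, assume \textbf{EF}; if $\psi\in\cP$ with $F(\psi)=0$, complete it to $\phi\in\cS$ as above, so $V_0(\phi)=0$ and $V_T(\phi)=B_T\,F(\psi)=0$, whence \eqref{eq:AssumptionRedundant} gives $\phi=0$ and so $\psi=0$. For the converse, assume $\{\psi\in\cP:\ F(\psi)=0\}=\{0\}$ and let $\phi\in\cS$ satisfy $V_0(\phi)=V_T(\phi)=0$; the first paragraph gives $F\big((\phi^1,\dots,\phi^N)\big)=B_T^{-1}V_T(\phi)=0$, hence $\phi^1=\dots=\phi^N=0$, and then \eqref{eq:SF} reduces to $B_t\Delta\phi^0_{t+1}=0$ for $t=1,\dots,T-1$, so $\phi^0$ is constant, while $V_0(\phi)=0$ forces $\phi^0_1=0$; thus $\phi=0$ and \textbf{EF} holds. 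The only delicate point is the bookkeeping in setting up the correspondence — aligning the $\psi_0=0$ convention, carrying the discount factors correctly, and verifying predictability and uniqueness of the completing process $\phi^0$ — but no genuinely hard step arises once Proposition~\ref{Intro/LemmaDiscountedSelfFinancing} is in hand.
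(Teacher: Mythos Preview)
Your proof is correct and follows essentially the same approach as the paper: both establish the identity $V^*_T(\phi)=F\big((\phi^1,\dots,\phi^N)\big)$ for $\phi\in\cS$ with $V_0(\phi)=0$ via Proposition~\ref{Intro/LemmaDiscountedSelfFinancing}, complete a given $\psi\in\cP$ to a unique self-financing $\phi$ with $V_0(\phi)=0$, and then argue each implication by reducing the risky components to zero and recovering $\phi^0=0$ from the self-financing recursion together with $V_0(\phi)=0$. Your write-up is slightly more explicit in framing this as a bijective correspondence, but the substance is identical.
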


\begin{proof}
 Let $\psi \in \cP$ be such that $F(\psi)=0$.
We define the trading strategy\\ $\phi:=(\phi^0, \psi^1, \dots, \psi^N)$, where $\phi^0$ is chosen so that $\psi \in \cS$ and $V_0(\phi)=0$.
we see that $V^*_T(\phi)=F(\psi)$, which gives us $V^*_T(\phi)=0$.
\textbf{EF} is satisfied, so $\phi^j=0$ for $j=0, \dots, N$, which in particular implies that $\psi^j=0$ for $j=1,\dots, N$.

Conversely, suppose \textbf{EF} holds, and fix $\phi \in \cS$ so that $V_0(\phi)=V^*_T(\phi)=0$.
Define the predictable process $\psi^j:=\phi^j$ for $j=1 , \dots, N$.
By Proposition~\ref{Intro/LemmaDiscountedSelfFinancing} and the definition of $F$, it is true that $F(\psi)=V^*_T(\phi)$.
Thus, $F(\psi)=0$.
By assumption, we have that  $\psi^j=0$ for $j=1, \dots, N$, which implies $\phi^j=0$ for $j=1, \dots, N$.
From the definition of $V_0(\phi)$ and because $V_0(\phi)=0$,
it follows that $\phi^0_1=0$.
Since $\phi \in \cS$,
we may recursively solve for $\phi^0_2, \dots, \phi^0_T$ to deduce that $\phi^0_t=0$ for $t=2, \dots, T$.
Hence, $\phi^j=0$ for $j=0,1, \dots, N$.
\end{proof}

\begin{lemma}\label{lemma: Kequivalent}
We have that $\cK=\bK$.
\end{lemma}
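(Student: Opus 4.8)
The plan is to prove the two inclusions $\cK\subseteq\bK$ and $\bK\subseteq\cK$ separately, in each case reducing everything to Proposition~\ref{Intro/LemmaDiscountedSelfFinancing}. The key observation is that formula~\eqref{eq:ValueSF}, read at $t=T$ and rewritten using $P^{ask,*}=B^{-1}P^{ask}$, $P^{bid,*}=B^{-1}P^{bid}$, $A^{ask,*}=B^{-1}\Delta A^{ask}$ and $A^{bid,*}=B^{-1}\Delta A^{bid}$, says exactly that for a self-financing strategy $\phi$ one has $V^*_T(\phi)=V_0(\phi)+F(\phi^1,\dots,\phi^N)$, where $F$ is the map from~\eqref{eq:DefMappingF}.

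For $\cK\subseteq\bK$: given $K\in\cK$, write $K=V^*_T(\phi)$ with $\phi=(\phi^0,\phi^1,\dots,\phi^N)\in\cS$ and $V_0(\phi)=0$, and set $\psi:=(\phi^1,\dots,\phi^N)$. Since $\psi$ is $\bR^N$-valued and predictable, $\psi\in\cP$, and by the identity above $K=V^*_T(\phi)=F(\psi)\in\bK$.

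For $\bK\subseteq\cK$: given $\psi\in\cP$, I must produce a trading strategy $\phi\in\cS$ with $V_0(\phi)=0$ and $V^*_T(\phi)=F(\psi)$. Take $\phi^j:=\psi^j$ for $j=1,\dots,N$ and define the savings-account component $\phi^0$ recursively. Put
\[
\phi^0_1:=-\sum_{j=1}^N\psi^j_1\big(1_{\{\psi^j_1\ge0\}}P^{ask,j}_0+1_{\{\psi^j_1<0\}}P^{bid,j}_0\big),
\]
which is $\cF_0$-measurable because $\psi$ is predictable and the price processes are adapted; this choice forces $V_0(\phi)=0$. Then, for $t=1,\dots,T-1$, solve the self-financing identity~\eqref{eq:SF} for $\Delta\phi^0_{t+1}$, which expresses $B_t\,\Delta\phi^0_{t+1}$ as an $\cF_t$-measurable combination of the $\psi^j_t$, the increments $\Delta\psi^j_{t+1}$ (both $\cF_t$-measurable by predictability), the prices $P^{ask,j}_t,P^{bid,j}_t$, and the dividend increments $\Delta A^{ask,j}_t,\Delta A^{bid,j}_t$. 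Hence $\phi^0_{t+1}$ is $\cF_t$-measurable, and by induction $\phi^0$ is predictable; by construction $\phi\in\cS$ and $V_0(\phi)=0$. A second appeal to Proposition~\ref{Intro/LemmaDiscountedSelfFinancing} together with~\eqref{eq:DefMappingF} gives $V^*_T(\phi)=F(\psi)$, so $F(\psi)\in\cK$.

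The only point requiring care is the recursive construction of $\phi^0$ and the verification that the process so obtained is predictable --- pure measurability bookkeeping, identical to the argument already used in the proof of Lemma~\ref{lemma: EFequivalent}. No closedness or separation argument enters; the statement is essentially a translation between the two parametrizations (by trading strategies versus by $\bR^N$-valued predictable processes) of the set of attainable discounted terminal values at zero cost.
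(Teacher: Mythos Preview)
Your proof is correct and follows exactly the route the paper intends: the paper's own proof is the single sentence ``The claim follows from Proposition~\ref{Intro/LemmaDiscountedSelfFinancing},'' and you have simply spelled out the two inclusions that this sentence encapsulates. The recursive construction of $\phi^0$ you give for the inclusion $\bK\subseteq\cK$ is the standard one already appearing (implicitly or explicitly) in the proofs of Lemma~\ref{lemma: EFequivalent} and Proposition~\ref{Prop:TradingStrategiesNonLinear}, so no additional justification is needed.
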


\begin{proof}
The claim follows from  Proposition~\ref{Intro/LemmaDiscountedSelfFinancing}.
\end{proof}

\subsection{Closedness property of the set of claims that can be superhedged at zero cost}\label{subsection:Closedness}
In this section, we prove that the set of claims that can be superhedged at zero cost, $\cK-L^0_+(\Omega, \cF_T, \bP\,; \bR)$,  is $\bP$-closed whenever \textbf{NAEF} is satisfied.
This property plays a central role in the proof of the First Fundamental Theorem of Asset Pricing (Theorem~\ref{Theorem: FFTAP}).

We will denote by $\| \cdot\|$ the Euclidean norm on $\bR^N$.

Let us first recall the following lemma from Schachermayer~\cite{Schachermayer2004}, which is closely related to Lemma 2 in Kabanov and Stricker~\cite{KabanovStricker2001}.

\begin{lemma}\label{Arbitrage/MeasurableSelectionShach}
For a sequence of random variables $X^m \in L^0(\Omega, \cF, \bP; \bR^N)$ there is a strictly increasing sequence of positive, integer-valued, $\cF$-measurable random  variables $\tau^m$ such that $X^{\tau^m}$ converges a.s. in the one-point-compactification $\bR^N \cup \{\infty\}$ to some random variable $X \in L^0(\Omega, \cF, \bP; \bR^N\cup \{\infty\})$.
Moreover, we may find the subsequence  such that $\|X\|=\limsup_m\|X^{m}\|$, where $\|\infty\|=\infty$.
\end{lemma}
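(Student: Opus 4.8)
The plan is to prove the statement by a diagonal subsequence extraction combined with a scaling trick, following the classical Kabanov--Stricker / Schachermayer template. We want, for an arbitrary sequence $X^m\in L^0(\Omega,\cF,\bP;\bR^N)$, a strictly increasing sequence $\tau^m$ of positive integer-valued $\cF$-measurable random variables such that $X^{\tau^m}$ converges a.s. in $\bR^N\cup\{\infty\}$ to some limit $X$. First I would partition $\Omega$ into the $\cF$-measurable events $\Omega_\infty:=\{\omega:\limsup_m\|X^m(\omega)\|=\infty\}$ and $\Omega_f:=\Omega\setminus\Omega_\infty$; all constructions will be done separately on each and then glued. On $\Omega_f$ the sequence $\|X^m\|$ is a.s. bounded, so a standard coordinatewise extraction of convergent subsequences along rationals (choosing, for each rational $q$, the event where $\|X^m\|\le q$ eventually, and diagonalizing) yields an $\cF$-measurable random subsequence along which $X^m$ converges a.s. to a finite limit; measurability of the random indices is arranged by defining $\tau^m(\omega)$ explicitly as, e.g., the $m$-th index at which $X^\cdot(\omega)$ is within $1/m$ of its chosen limiting value, which is an $\cF$-measurable selection.

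On $\Omega_\infty$ I would pass to the normalized sequence $Y^m:=X^m/(1+\|X^m\|)$, which takes values in the closed unit ball of $\bR^N$. By compactness of the ball and the same coordinatewise-along-rationals diagonal extraction, there is an $\cF$-measurable random subsequence along which $Y^m\to Y$ a.s., and on $\Omega_\infty$ one has $\|X^m\|\to\infty$ along a further subsequence, which forces $X^m=Y^m\cdot\frac{1+\|X^m\|}{1}\to\infty$ in the one-point compactification (since $\|Y\|$ can be taken to be $1$ there after a further extraction — or at least $\|Y\|>0$, which already gives $\|X^m\|\to\infty$). Setting $X:=\infty$ on $\Omega_\infty$ and $X:=\lim X^{\tau^m}$ on $\Omega_f$ gives an $\cF$-measurable limit in $L^0(\Omega,\cF,\bP;\bR^N\cup\{\infty\})$, with the convention $\|\infty\|=\infty$. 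The final clause, $\|X\|=\limsup_m\|X^m\|$, is obtained by performing one further extraction: on $\Omega_f$ arrange along a subsequence that $\|X^{\tau^m}\|\to\limsup_m\|X^m\|$ (possible since $\limsup\|X^m\|$ is an $\cF$-measurable finite random variable there, and we can select indices realizing the $\limsup$), and intersect this with the subsequence already built via a diagonal argument; on $\Omega_\infty$ both sides equal $\infty$ by construction.

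The only real obstacle is bookkeeping: one must make sure that all the subsequences $\tau^m$ are \emph{deterministic-order} (strictly increasing in $m$ pointwise) and \emph{$\cF$-measurable}, and that the several successive extractions (finite-part limit, normalized-part limit, $\limsup$-realizing indices) can be amalgamated into a single sequence $\tau^m$ valid on all of $\Omega$. This is handled by always defining the random index at stage $m$ through an explicit measurable formula (a first-hitting-type selection of the form ``the smallest $k>\tau^{m-1}$ such that [an $\cF$-measurable condition depending on $k$ and $\omega$ holds]''), which is automatically $\cF$-measurable and strictly increasing, and by carrying out every extraction on the two pieces $\Omega_f,\Omega_\infty$ simultaneously so the glued index is globally well-defined. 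Everything else is the routine ball-compactness plus diagonalization argument, so the lemma follows; indeed it is stated in Schachermayer~\cite{Schachermayer2004}, and we may simply invoke it.
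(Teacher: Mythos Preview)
The paper does not prove this lemma at all: it is introduced with the sentence ``Let us first recall the following lemma from Schachermayer~\cite{Schachermayer2004}, which is closely related to Lemma~2 in Kabanov and Stricker~\cite{KabanovStricker2001},'' and no argument is given. Your final remark---that one may simply invoke Schachermayer~\cite{Schachermayer2004}---is therefore exactly the paper's own treatment, and nothing more is required.

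Your sketch goes beyond what the paper provides and is essentially the standard Kabanov--Stricker/Schachermayer argument. One small simplification: on $\Omega_\infty$ you do not actually need the normalization $Y^m=X^m/(1+\|X^m\|)$, since convergence to the point $\infty$ in the one-point compactification $\bR^N\cup\{\infty\}$ is by definition equivalent to $\|X^m\|\to\infty$; it is enough to select, via a measurable first-hitting rule, indices $\tau^m(\omega)$ along which $\|X^{\tau^m(\omega)}(\omega)\|\ge m$, which is immediate from $\limsup_m\|X^m\|=\infty$ on that set. The normalization trick is needed only when one wants a \emph{directional} limit on the unit sphere (as in later arguments of the paper), not for convergence in the one-point compactification itself.
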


The next result extends the previous lemma to processes.

\begin{lemma}\label{Arbitrage/MeasurableSelectionShachProcess}
Let $\cF^i$ be a $\sigma$-algebra, and  $Y^m_i \in L^0(\Omega, \cF^i, \bP; \bR^N)$ for  $i=1, \dots, M$.
 Suppose that $\cF^i \subseteq \cF^j$ for all $i \leq j$, and that $Y^m_i$ satisfies $\limsup_m \|Y^m_i\|<\infty$ for $i=1, \dots, M$.
 Then there is a strictly increasing sequence of positive, integer-valued, $\cF^{M}$-measurable random  variables $\tau^m$ such that, for $i=1, \dots, M$, the sequence $Y^{\tau^m}_i$ converges a.s. to some $Y_i \in L^0(\Omega, \cF^i, \bP; \bR^N)$.
\end{lemma}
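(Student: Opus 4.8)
The plan is to prove Lemma~\ref{Arbitrage/MeasurableSelectionShachProcess} by induction on $M$, peeling off one index at a time and iterating Lemma~\ref{Arbitrage/MeasurableSelectionShach}, while being careful that the random subsequence chosen at step $i$ is measurable with respect to the \emph{right} $\sigma$-algebra so that the resulting limit $Y_i$ is $\cF^i$-measurable rather than merely $\cF^M$-measurable.

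\medskip\noindent\textbf{Base case.} For $M=1$ we simply apply Lemma~\ref{Arbitrage/MeasurableSelectionShach} to the sequence $Y^m_1 \in L^0(\Omega,\cF^1,\bP;\bR^N)$ (with $\cF = \cF^1$): there is a strictly increasing sequence of positive, integer-valued, $\cF^1$-measurable random variables $\tau^m$ such that $Y^{\tau^m}_1$ converges a.s. in $\bR^N\cup\{\infty\}$ to some limit. Since $\limsup_m\|Y^m_1\|<\infty$ a.s. and the ``moreover'' clause of Lemma~\ref{Arbitrage/MeasurableSelectionShach} lets us arrange $\|Y_1\| = \limsup_m\|Y^m_1\|$, the limit $Y_1$ is $\bR^N$-valued (finite) a.s., hence $Y_1 \in L^0(\Omega,\cF^1,\bP;\bR^N)$. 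As $\tau^m$ is $\cF^1\subseteq\cF^M$-measurable, this settles $M=1$.

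\medskip\noindent\textbf{Inductive step.} Suppose the statement holds for $M-1$. Given $Y^m_i$, $i=1,\dots,M$, as in the hypothesis, first apply the induction hypothesis to the family $(Y^m_1,\dots,Y^m_{M-1})$ with $\sigma$-algebras $\cF^1\subseteq\cdots\subseteq\cF^{M-1}$: we obtain a strictly increasing sequence $(\sigma^m)$ of positive, integer-valued, $\cF^{M-1}$-measurable random variables such that $Y^{\sigma^m}_i \to Y_i \in L^0(\Omega,\cF^i,\bP;\bR^N)$ a.s. for $i=1,\dots,M-1$. Now consider the already-extracted sequence $Z^m := Y^{\sigma^m}_M \in L^0(\Omega,\cF^M,\bP;\bR^N)$; it still satisfies $\limsup_m\|Z^m\| \le \limsup_m\|Y^m_M\| < \infty$ a.s. Apply Lemma~\ref{Arbitrage/MeasurableSelectionShach} to $(Z^m)$ (with $\cF = \cF^M$) to get a strictly increasing sequence of positive, integer-valued, $\cF^M$-measurable random variables $(\rho^m)$ with $Z^{\rho^m} \to Y_M$ a.s., and by the ``moreover'' clause $\|Y_M\| = \limsup_m\|Z^{\rho^m}\| < \infty$ a.s., so $Y_M \in L^0(\Omega,\cF^M,\bP;\bR^N)$. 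Set $\tau^m := \sigma^{\rho^m}$. Since $(\sigma^m)$ is strictly increasing and integer-valued and $(\rho^m)$ is strictly increasing, $(\tau^m)$ is strictly increasing, positive, integer-valued, and $\cF^M$-measurable (as a composition of an $\cF^M$-measurable index $\rho^m$ with the $\cF^{M-1}$-measurable, hence $\cF^M$-measurable, family $(\sigma^k)_k$). For $i=1,\dots,M-1$ the sequence $Y^{\tau^m}_i = Y^{\sigma^{\rho^m}}_i$ is a subsequence (indexed by the random indices $\rho^m$) of the a.s.-convergent sequence $Y^{\sigma^k}_i$, hence still converges a.s. to the same $Y_i$; and $Y^{\tau^m}_M = Z^{\rho^m} \to Y_M$ a.s. by construction. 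This completes the induction.

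\medskip\noindent\textbf{Main obstacle.} The routine calculations are trivial; the one genuinely delicate point is the bookkeeping of measurability through the two nested random subsequence extractions — in particular, verifying that passing to the further random subsequence $(\rho^m)$ preserves almost-sure convergence of the already-stabilized coordinates $Y^{\sigma^m}_i$ ($i<M$), and that $\tau^m = \sigma^{\rho^m}$ remains a strictly increasing, integer-valued, $\cF^M$-measurable random variable. Both follow from the elementary fact that a subsequence of an a.s.-convergent sequence of random variables converges a.s. to the same limit, applied pathwise (for each $\omega$, $\rho^m(\omega)\to\infty$ since $(\rho^m)$ is strictly increasing and integer-valued), together with the observation that for integer-valued $\cF^M$-measurable $\rho^m$ and $\cF^{M-1}$-measurable $(\sigma^k)$, the map $\omega\mapsto\sigma^{\rho^m(\omega)}(\omega) = \sum_{k\ge1}\1_{\{\rho^m(\omega)=k\}}\sigma^k(\omega)$ is $\cF^M$-measurable.
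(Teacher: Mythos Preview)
Your proof is correct and follows essentially the same approach as the paper: both iterate Lemma~\ref{Arbitrage/MeasurableSelectionShach} one coordinate at a time in increasing order of $i$, exploiting $\cF^i\subseteq\cF^{i+1}$ so that the random subsequence extracted at step $i$ can be fed into step $i+1$ without destroying measurability, and then take the final composed subsequence as $\tau^m$. The only difference is presentational---the paper writes out the recursion explicitly (constructing $\tau^m_1,\tau^m_2,\dots,\tau^m_M$ and setting $\tau^m:=\tau^m_M$), whereas you package the first $M-1$ extractions into a single appeal to the induction hypothesis; your explicit verification that $\tau^m=\sigma^{\rho^m}$ is $\cF^M$-measurable via the sum $\sum_{k\ge1}\1_{\{\rho^m=k\}}\sigma^k$ is a nice touch that the paper leaves implicit.
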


\begin{proof}
  We first apply Lemma~\ref{Arbitrage/MeasurableSelectionShach} to the random variable $Y^m_1$: there exists a strictly increasing sequence of positive, integer-valued, $\cF^{1}$-measurable random variables $\tau^m_1$ such that $\{\tau^1_1(\omega), \tau^2_1(\omega), \dots, \} \subseteq\bN$ for $\omega \in \Omega$, and  $Y^{\tau^m_1}_1$ converges a.s. to some $Y_1 \in L^0(\Omega, \cF^1, \bP; \bR^N)$.
Since $\limsup_m \|Y^m_2\| <\infty$, we also have that $\limsup_m \|Y^{\tau^m_1}_2\| <\infty$.
Moreover,  $Y^{\tau^m_1}_2 \in L^0(\Omega, \cF^2, \bP; \bR^N)$ since $\cF^1 \subseteq \cF^2$.
Therefore, we may apply Lemma~\ref{Arbitrage/MeasurableSelectionShach} to the sequence $Y^{\tau^m_1}_2$ to find a strictly increasing sequence of positive, integer-valued, $\cF^{2}$-measurable random variables $\tau^m_2$ such that
\begin{equation}\label{eq:Meas0}
\{\tau^1_2(\omega), \tau^2_2(\omega), \dots \} \subseteq \{\tau^1_1(\omega), \tau^2_1(\omega), \dots \} \subseteq \bN, \quad \textrm{a.e. }\omega \in \Omega,
 \end{equation}
 and   $Y^{\tau^m_2}_2$ converges a.s. to some $Y_2 \in L^0(\Omega, \cF^2, \bP; \bR^N)$.
 From \eqref{eq:Meas0}, the sequence $Y^{\tau^m_2}_1$ converges a.s. to $Y_1$.

 We may continue by recursively repeating the argument above to the sequences $Y^m_i$, for $i =3, \dots, M$, to find  strictly increasing sequences of positive, integer-valued, $\cF^{i}$-measurable random variables $\tau^m_i$ such that
 \begin{equation}\label{eq:Meas1}
\{\tau^1_i(\omega), \tau^2_i(\omega), \dots \}\subseteq \cdots \subseteq \{\tau^1_1(\omega), \tau^2_1(\omega), \dots \} \subseteq \bN, \quad  \textrm{a.e. }\omega \in \Omega,
 \end{equation}
 and   $Y^{\tau^m_i}_i$ converges a.s. to some $Y_i \in L^0(\Omega, \cF^{i}, \bP; \bR^N)$.
 Because of \eqref{eq:Meas1}, we see that $Y^{\tau^m_M}_i$ converges a.s. to $Y_i$ for $i=1, \dots, M$.
 Therefore, $\tau^m:=\tau^m_M$ defines the desired sequence.
\end{proof}

We proceed by proving a technical lemma.

\begin{lemma}\label{lemma:NormalizedSequence}
Let $\cF^i$ be a $\sigma$-algebra, and  $Y^m_i \in L^0(\Omega, \cF^i, \bP; \bR^N)$ for  $i=1, \dots, M$.
 Suppose that $\cF^i \subseteq \cF^j$ for all $i \leq j$, and that  there exists $k \in \{1, \dots, M\}$ and  $\Omega' \subseteq \Omega$ with $\bP(\Omega')>0$ such that $\limsup_m \|Y^m_k(\omega)\|=\infty$ for a.e. $\omega \in \Omega'$, and $\limsup_m \|Y^m_i(\omega)\|<\infty$ for $i=1, \dots, k-1$ and for a.e. $\omega \in \Omega$.
Then there exists  a strictly increasing sequence of positive, integer-valued, $\cF^{k}$-measurable random  variables $\tau^m$ such that
$\lim_m\|Y^{\tau^m}_k(\omega)\|=\infty$, for a.e. $\omega \in \Omega'$, and
\footnote{We take $X^{m}_i(\omega)=0$ whenever $\|Y^{\tau^m(\omega)}_{k}(\omega)\|=0$. We will take the convention $x/0=0$ throughout this section.}
 \begin{equation*}
 X^{m}_i(\omega):=1_{\Omega'}(\omega)\frac{Y^{\tau^m(\omega)}_i(\omega)}{\|Y^{\tau^m(\omega)}_k(\omega)\|}, \quad \omega \in \Omega,\; i=1, \dots, M,
\end{equation*}
satisfies $\lim_m X^m_i(\omega)=0$, for  $i=1, \dots, k-1$ and for a.e. $\omega \in \Omega$
\end{lemma}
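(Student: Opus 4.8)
The plan is to build the desired subsequence $\tau^m$ recursively using Lemma~\ref{Arbitrage/MeasurableSelectionShachProcess} together with a normalization-and-diagonal argument, handling the indices $i=1,\dots,k$ in order. First I would dispose of the case $k=1$, where there is nothing to prove for the indices $i=1,\dots,k-1$: one simply applies Lemma~\ref{Arbitrage/MeasurableSelectionShach} to the sequence $Y^m_1$ and, passing to the subsequence on which $\|Y^m_1\| = \limsup_m \|Y^m_1\|$, obtains $\tau^m$ with $\lim_m \|Y^{\tau^m}_1\| = \infty$ on $\Omega'$. So assume $k \geq 2$.

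For $k \geq 2$, the key observation is that on $\Omega$ (up to a null set) we have $\limsup_m \|Y^m_i\| < \infty$ for $i = 1,\dots,k-1$, so Lemma~\ref{Arbitrage/MeasurableSelectionShachProcess} applies to the finite family $Y^m_1,\dots,Y^m_{k-1}$: there is a strictly increasing sequence of positive integer-valued $\cF^{k-1}$-measurable random variables $\sigma^m$ such that $Y^{\sigma^m}_i$ converges a.s.\ to some $Y_i \in L^0(\Omega, \cF^i, \bP; \bR^N)$ for each $i = 1,\dots,k-1$. Now I would apply Lemma~\ref{Arbitrage/MeasurableSelectionShach} to the sequence $Y^{\sigma^m}_k$ (which is $\cF^k$-measurable since $\cF^{k-1} \subseteq \cF^k$, and whose composition preserves the relation between the index sets $\omega$-wise, exactly as in the proof of Lemma~\ref{Arbitrage/MeasurableSelectionShachProcess}), using the ``moreover'' clause to get a further strictly increasing, $\cF^k$-measurable subsequence $\tau^m$, nested inside $\sigma^m$ $\omega$-wise, such that $\|Y^{\tau^m}_k\| \to \|Y_k\|$ where $\|Y_k\| = \limsup_m \|Y^{\sigma^m}_k\|$. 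On $\Omega'$ we have $\limsup_m \|Y^m_k\| = \infty$; the point is that passing to any subsequence does not destroy the fact that the $\limsup$ equals $\infty$ on $\Omega'$ — but here I need a little care, since a priori a subsequence could have smaller $\limsup$. This is the spot I expect to be the main obstacle.

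To get around that obstacle cleanly, I would first select, before doing anything else, a subsequence realizing the $\limsup$ of $\|Y^m_k\|$: apply Lemma~\ref{Arbitrage/MeasurableSelectionShach} directly to $Y^m_k$ to get $\rho^m$ with $\|Y^{\rho^m}_k\| \to \|Y_k\| := \limsup_m \|Y^m_k\|$, so that $\lim_m \|Y^{\rho^m}_k\| = \infty$ a.e.\ on $\Omega'$; then replace the whole family $(Y^m_i)_{i=1}^M$ by $(Y^{\rho^m}_i)_{i=1}^M$, noting that $\limsup_m \|Y^{\rho^m}_i\| \leq \limsup_m \|Y^m_i\| < \infty$ for $i < k$ is preserved, so the hypotheses still hold and it now suffices to further thin this new sequence while keeping $\|Y_k\| \to \infty$ on $\Omega'$, which any subsequence will do since the limit (not just $\limsup$) is now $+\infty$ there. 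Applying the recursion above to this relabeled family yields $\tau^m$ (composed appropriately with $\rho^m$) that is strictly increasing, positive, integer-valued, $\cF^k$-measurable, with $\lim_m \|Y^{\tau^m}_k(\omega)\| = \infty$ for a.e.\ $\omega \in \Omega'$, and with $Y^{\tau^m}_i(\omega) \to Y_i(\omega)$ finite for $i = 1,\dots,k-1$ and a.e.\ $\omega$.

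Finally I would read off the conclusion about $X^m_i$. For a.e.\ $\omega \in \Omega'$ and $i \leq k-1$: $X^m_i(\omega) = Y^{\tau^m(\omega)}_i(\omega)/\|Y^{\tau^m(\omega)}_k(\omega)\|$, whose numerator converges to the finite vector $Y_i(\omega)$ while the denominator tends to $\infty$, hence $X^m_i(\omega) \to 0$; and for $\omega \notin \Omega'$ we have $X^m_i(\omega) = 0$ by the indicator $1_{\Omega'}$, so trivially $X^m_i(\omega) \to 0$. Thus $\lim_m X^m_i(\omega) = 0$ for $i = 1,\dots,k-1$ and a.e.\ $\omega \in \Omega$, which is exactly the assertion; the bookkeeping about the $\omega$-wise nesting of the index sets of $\rho^m$, $\sigma^m$, $\tau^m$ (ensuring each composed sequence is still strictly increasing and that earlier convergences survive later thinnings) is routine and mirrors the argument already given in the proof of Lemma~\ref{Arbitrage/MeasurableSelectionShachProcess}.
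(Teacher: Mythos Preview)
Your argument is correct, but it takes a considerably longer route than the paper's. The paper's proof is two lines: apply Lemma~\ref{Arbitrage/MeasurableSelectionShach} once to $Y^m_k$ to obtain $\cF^k$-measurable $\tau^m$ with $\|Y^{\tau^m}_k\|\to\infty$ on $\Omega'$, and then observe that for $i<k$ the subsequence still satisfies $\limsup_m\|Y^{\tau^m}_i\|\le\limsup_m\|Y^m_i\|<\infty$; a bounded numerator over a divergent denominator tends to zero, and the indicator $1_{\Omega'}$ takes care of the complement. No convergence of the numerators is needed, so your appeal to Lemma~\ref{Arbitrage/MeasurableSelectionShachProcess} and the whole $\rho^m/\sigma^m$ diagonalization is extra machinery that buys nothing here. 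Your instinct that ``a subsequence could have smaller $\limsup$'' is exactly right for the $k$-th coordinate, which is why the paper (and you, in your reordered version) extract the $\limsup$-realizing subsequence for $Y^m_k$ \emph{first}; but for $i<k$ the point is precisely that a smaller $\limsup$ is harmless---boundedness is all that is used.

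One minor bookkeeping issue in your version: once you compose with $\rho^m$ (which is $\cF^k$-measurable), the relabeled variables $Y^{\rho^m}_i$ for $i<k$ are only $\cF^k$-measurable, not $\cF^i$-measurable, so when you ``apply the recursion above'' you must invoke Lemma~\ref{Arbitrage/MeasurableSelectionShachProcess} with every $\sigma$-algebra taken to be $\cF^k$. That is fine for the present lemma (only $\cF^k$-measurability of $\tau^m$ is claimed), but it is not quite what you wrote.
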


\begin{proof}
Since $\limsup_m  \|Y^m_k(\omega)\|= \infty$ for a.e. $\omega \in \Omega'$, we may apply Lemma~\ref{Arbitrage/MeasurableSelectionShach} to the sequence $Y^m_k$ to find a strictly increasing sequence of positive, integer-valued, $\cF^k$-measurable random variables $\tau^m$ so that $\|Y^{\tau^m(\omega)}_k(\omega)\|$ diverges for a.e. $\omega \in \Omega'$.

Because $\limsup_m \|Y^m_i\|<\infty$ for $i=1, \dots, k-1$, we have  $\limsup_m \|Y^{\tau^m}_i\|<\infty$ for $i=1, \dots, k-1$.
Now since  $\|Y^{\tau^m(\omega)}_{k}(\omega)\|$ diverges for a.e. $\omega \in \Omega'$,
\begin{equation*}
\lim_{m \to \infty} \|X^{m}_i(\omega)\|=1_{\Omega'}(\omega) \frac{\|Y^{\tau^m(\omega)}_i(\omega)\|}{\|Y^{\tau^m(\omega)}_k(\omega)\|}=0, \quad \textrm{a.e. } \omega \in \Omega,\; i =1, \dots, k-1.
\end{equation*}
Thus, $\|X^{m}_i\|$ converges a.s. to 0 for $i=1, \dots, k-1$, which implies that $X^{m}_i$ converges a.s. to 0  for $i=1, \dots, k-1$.
Hence, the claim holds.
\end{proof}

We are now ready to prove the crucial result in this paper.

\begin{theorem}\label{lemma:KClosed}
 If the no-arbitrage condition under the efficient friction assumption\\ \emph{\textbf{(NAEF)}} is satisfied, then the set $\cK-L^0_+(\Omega, \cF_T, \bP; \bR)$ is $\bP$-closed.
\end{theorem}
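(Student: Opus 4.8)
The plan is to follow the by-now classical strategy, due to Kabanov--Stricker and Schachermayer, for proving closedness of the superhedgeable set under efficient friction: reduce to showing that a convergent sequence $Y^m = F(\phi^m) - Z^m \to Y$ (with $\phi^m \in \cP$, $Z^m \in L^0_+$) can be realized as $Y = F(\phi) - Z$ for some $\phi \in \cP$, $Z \in L^0_+$. By Lemma~\ref{lemma: Kequivalent} it suffices to work with $\bK - L^0_+$ in place of $\cK - L^0_+$. The argument proceeds by a backward induction on the time index, at each stage using the measurable-selection machinery of Lemmas~\ref{Arbitrage/MeasurableSelectionShach}--\ref{lemma:NormalizedSequence} to extract a.s.-convergent subsequences of the predictable integrands $\phi^m_t$, $t = 1, \dots, T$ (these are $\cF_{t-1}$-measurable, so the nested filtration hypothesis of Lemma~\ref{Arbitrage/MeasurableSelectionShachProcess} applies with $\cF^i = \cF_{i-1}$).

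First I would dispose of the ``bounded'' case: if $\limsup_m \|\phi^m_t\| < \infty$ a.s.\ for every $t$, then Lemma~\ref{Arbitrage/MeasurableSelectionShachProcess} gives a common random subsequence $\tau^m$ along which $\phi^{\tau^m}_t \to \phi_t$ a.s.\ for each $t$, with $\phi \in \cP$; by Lemma~\ref{Arbitrage/ValueConvergenceSF} (applied coordinatewise to each term of \eqref{eq:DefMappingF}) $F(\phi^{\tau^m}) \to F(\phi)$ a.s., hence $Z^{\tau^m} = F(\phi^{\tau^m}) - Y^{\tau^m} \to F(\phi) - Y =: Z$ a.s., and $Z \ge 0$ as an a.s.\ limit of nonnegative variables. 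Thus $Y = F(\phi) - Z \in \bK - L^0_+$, as desired.

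The unbounded case is the main obstacle, and it is where \textbf{NAEF} enters. Suppose $\bP(\limsup_m \|\phi^m_t\| = \infty) > 0$ for some $t$; let $k$ be the \emph{smallest} such index and set $\Omega' := \{\limsup_m \|\phi^m_k\| = \infty\}$, which is $\cF_{k-1}$-measurable. Apply Lemma~\ref{lemma:NormalizedSequence} with $\cF^i = \cF_{i-1}$ to obtain an $\cF_{k-1}$-measurable random subsequence $\tau^m$ along which $\|\phi^{\tau^m}_k\| \to \infty$ on $\Omega'$ while the normalized processes $\widehat\phi^m_t := 1_{\Omega'} \phi^{\tau^m}_t / \|\phi^{\tau^m}_k\|$ satisfy $\widehat\phi^m_t \to 0$ a.s.\ for $t < k$. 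On $\Omega'$ the rescaled integrands $\widehat\phi^m_k$ have norm $\le 1$, and for $t > k$ one can pass to a further subsequence so that either $\|\widehat\phi^m_t\|$ stays bounded or, after a secondary renormalization by the running maximum of the norms, one again extracts a.s.\ limits via Lemma~\ref{Arbitrage/MeasurableSelectionShach}; in either case one produces, after finitely many such steps, a predictable process $\widehat\phi \in \cP$, not identically zero on $\Omega'$ (because $\|\widehat\phi_k\| = 1$ there, or a later normalized coordinate has norm $1$), with $\widehat\phi_t = 0$ for $t < k$. Since $\|\phi^{\tau^m}_k\| \to \infty$ and $F$ is positively homogeneous (Remark~(iii) after \eqref{eq:DefMappingF}), dividing $Y^{\tau^m} = F(\phi^{\tau^m}) - Z^{\tau^m}$ by $\|\phi^{\tau^m}_k\|$ on $\Omega'$ and letting $m \to \infty$ (again using Lemma~\ref{Arbitrage/ValueConvergenceSF}) yields $1_{\Omega'} F(\widehat\phi) = \lim_m 1_{\Omega'}(Y^{\tau^m} + Z^{\tau^m})/\|\phi^{\tau^m}_k\| \ge 0$, because $Y^{\tau^m}/\|\phi^{\tau^m}_k\| \to 0$ and $Z^{\tau^m} \ge 0$. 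Extending $\widehat\phi$ by $0$ off $\Omega'$ gives $\widehat\psi \in \cP$ with $F(\widehat\psi) \ge 0$; by Lemma~\ref{Arbitrage/LemmaEquArb}(iii) and Lemma~\ref{lemma: Kequivalent}, \textbf{NA} forces $F(\widehat\psi) = 0$, and then Lemma~\ref{lemma: EFequivalent} forces $\widehat\psi = 0$, contradicting $\widehat\phi \not\equiv 0$ on $\Omega'$.

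Hence the unbounded case cannot occur, and one is always in the bounded case, which was already handled. This proves that $\bK - L^0_+$, and therefore $\cK - L^0_+$, is $\bP$-closed. The delicate point to write carefully is the iterated renormalization for indices $t > k$: one must ensure the secondary scalings are $\cF_{t-1}$-measurable so that the resulting limit stays predictable, and that at the end of the induction the limiting process $\widehat\phi$ is genuinely nonzero — this is exactly the standard ``first index of blow-up'' bookkeeping, and the nested-$\sigma$-algebra versions of the selection lemmas (Lemmas~\ref{Arbitrage/MeasurableSelectionShachProcess} and~\ref{lemma:NormalizedSequence}) are tailored to make it go through.
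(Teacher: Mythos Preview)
Your proposal is correct and follows essentially the same approach as the paper: reduce to $\bK-L^0_+$ via Lemma~\ref{lemma: Kequivalent}, handle the bounded case by Lemma~\ref{Arbitrage/MeasurableSelectionShachProcess} and Lemma~\ref{Arbitrage/ValueConvergenceSF}, and in the unbounded case normalize at the first blow-up index, iterate, and use \textbf{NA} plus \textbf{EF} (via Lemmas~\ref{Arbitrage/LemmaEquArb} and~\ref{lemma: EFequivalent}) to derive a contradiction. The paper makes your ``iterated renormalization for $t>k$'' fully explicit by a recursive construction of nested sets $E^0\supseteq E^1\supseteq\cdots\supseteq E^k$ and a product normalizer $U^m$, recording that the nonzero coordinate of the limit sits at the \emph{last} renormalization index $t_k$ (not the first), and that although the intermediate sequence $\varphi^m$ is not predictable, its limit is because the early coordinates vanish---exactly the bookkeeping you flag as delicate.
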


\begin{proof}
According to Lemma~\ref{lemma: Kequivalent}, we may equivalently prove that $\bK-L^0_+(\Omega, \cF_T, \bP; \bR)$ is $\bP$-closed.
Suppose that $X^m \in \bK-L^0_+(\Omega, \cF_T, \bP; \bR)$ converges in probability to $X$.
  Then there exists a subsequence $X^{k_m}$ of $X^{k}$ so that $X^{k_m}$ converges a.s. to $X$.
 With a slight abuse of notation, we will denote by $X^{m}$ the sequence $X^{k_m}$ in what follows.
 By the definition of $\bK-L^0_+(\Omega, \cF_T, \bP; \bR)$, there exists $Z^m \in L^0_+(\Omega, \cF_T, \bP; \bR)$ and $\phi^m \in \cP$ so that
 \begin{equation}\label{eq:KClose11}
 X^m=F(\phi^m)-Z^m.
 \end{equation}

We proceed the proof in two steps.
In the first step, we show  by contradiction that $\limsup_m\| \phi^m_{s} \| <\infty$ for all $s \in \cT^*$.

\noindent
\textbf{Step 1a:}
Let us assume that $\limsup_m\| \phi^m_{s} \| <\infty$ for all $s \in \cT^*$ does not hold.
Then
\begin{equation*}
\cI^0:=\Big\{ s\in \cT^*: \exists \; \Omega' \subseteq \Omega \textrm{ such that } \bP(\Omega')>0,\; \limsup_{m \to \infty}\| \phi^m_{s}(\omega) \|=\infty  \textrm{ for a.e. }  \omega \in \Omega' \Big\}
\end{equation*}
is nonempty.
Let  $t_0:=\min \cI^0$, and define the $\cF_{t_0-1}$-measurable set
\begin{equation*}
  E^{0}:=\big\{ \omega \in \Omega : \limsup_{m \to \infty} \| \phi^m_{t_0}(\omega)\|=\infty\}.
\end{equation*}
Note that $\bP(E^0)>0$ by assumption.
We now apply Lemma~\ref{lemma:NormalizedSequence} to $\phi^m$: there exists a strictly increasing sequence of positive, integer-valued, $\cF_{t_0-1}$-measurable random variables $\tau^m_0$ such that
\begin{equation}\label{eq:ClosePhiSupInfty}
\lim_{m \to \infty} \|\phi^{\tau^m_0(\omega)}_{t_0}(\omega)\|=\infty, \quad \textrm{a.e. }\omega \in E^0,
\end{equation}
and
\begin{equation}\label{eq:CloseNormalizedPhi}
  \psi^{m, (0)}_s:=1_{E^0}\frac{\phi^{\tau^m_0}_s}{\|\phi^{\tau^m_0}_{t_0}\|}, \quad s \in \cT^*,
\end{equation}
satisfies  $\lim_m \psi^{m, (0)}_s(\omega)=0$, for  $s=1, \dots, t_0-1$, for a.e. $\omega \in \Omega$.

We proceed as follows.

\smallskip
\noindent
\textbf{Recursively for} $i=1, \dots, T$

\noindent
\textbf{If }$\limsup_m \|\psi^{m, (i-1)}_s\|<\infty$ for all $s \in \{t_{i-1}+1, \dots, T\}$, then define $k:=i$ and $\varphi^m:=\psi^{m, (k-1)}$, and proceed to Step 1b.

\noindent
\textbf{Else}, define
\begin{align*}
 &t_i:=  \min \Big\{ s\in \{t_{i-1}+1, \dots, T\}: \exists \; \Omega' \subseteq E^{i-1} \textrm{ s.t. }\bP(\Omega')>0,\; \\
 & \qquad \qquad  \qquad \limsup_{m \to \infty} \|\psi^{m, (i-1)}_s(\omega)\| =\infty \textrm{ for a.e. } \omega \in \Omega'\Big\},
 \end{align*}
 and
 \begin{equation*}
E^i:=\big\{ \omega \in E^{i-1} : \limsup_{m \to \infty}\|\psi^{m, (i-1)}_{t_i}(\omega)\|=\infty\}.
\end{equation*}
  Next, apply Lemma~\ref{lemma:NormalizedSequence} to $\psi^{m, (i)}$:
  there exists a strictly increasing sequence of positive, integer-valued, $\cF_{t_i-1}$-measurable random variables $\tau^m_i$ such that
  \begin{equation}\label{eq:CloseNestedTau}
    \{\tau^1_i(\omega), \tau^2_i(\omega), \dots\}\subseteq \cdots  \subseteq  \{\tau^1_0(\omega), \tau^2_0(\omega), \dots\}, \quad \textrm{a.e. } \omega \in \Omega,
  \end{equation}
  the sequence $\psi^{\tau^m_i, (i-1)}_{t_i}$ satisfies
  \begin{equation}\label{eq:ClosePsiSupInfty}
  \lim_{m \to \infty} \|\psi^{\tau^m_i(\omega), (i-1)}_{t_i}(\omega)\|=\infty, \quad \textrm{a.e. }\omega \in E^i,
  \end{equation}
  and the sequence $\psi^{m, (i)}$ defined as
\begin{equation}\label{eq:CloseRecursion}
  \psi^{m, (i)}_s:=1_{E^i}\frac{ \psi^{\tau^m_i, (i-1)}_s}{\|\psi^{\tau^m_i, (i-1)}_{t_i}\|}, \quad  s \in \cT^*,
\end{equation}
satisfies  $\lim_m \psi^{m, (i)}_s(\omega)=0$ for  $s=1, \dots, t_i-1$, for a.e. $\omega \in \Omega$.

\noindent
\textbf{Repeat:} $i \rightarrow i+1$.
\smallskip

Given this construction, we define
\begin{align*}
   \beta^m_{i}(\omega)&:= \tau_i \circ \tau_{i+1} \circ \cdots \circ \tau^m_k(\omega),& \quad &i \in \{0, \dots, k\}, \; \omega \in \Omega, \\
  U^m(\omega):&=\|\phi^{\beta^m_{0}(\omega)}_{t_0}(\omega)\|\prod_{i=1}^k\|\psi^{\beta^m_i(\omega), (i-1)}_{t_i}(\omega)\|,& \quad &\omega \in \Omega.
  \end{align*}

\noindent
We make the following observations on this construction:
\begin{itemize}
\item[(i)]
The construction always produces a sequence $\varphi^m$ such that $\limsup_m \|\varphi^m_s\|<\infty$ for all $s \in \cT^*$.
Indeed, if $t_{i}=T$ for some $i=1, \dots, T$, then $\lim_m \psi^{m, (i)}_s(\omega)=0$ for  $s=1, \dots, T-1$, for a.e. $\omega \in \Omega$, and $\lim_m\|\psi^{m, (i)}_{T}(\omega)\|=1_{E^i}(\omega)$, for a.e. $\omega \in \Omega$.
The sequence $\psi^{m, (i)}$ clearly satisfies $\limsup_m \|\psi^{m, (i)}_s\|<\infty$ for all $s\in \cT^*$.
\item[(ii)]
 We have that $\varphi^{m}_s\in L^0(\Omega, \cF_{t_k-1}, \bP, \bR^N)$ for $s =1, \dots, t_k-1$, and\\ $\varphi^{m}_s\in L^0(\Omega, \cF_{s-1}, \bP, \bR^N)$ for $s=t_k, \dots, T$.
 Hence, the sequence $\varphi^{m}$ \emph{is not} a sequence of predictable processes.
However, the limit of any a.s. convergent subsequence of $\varphi^{m}$ \emph{is} predictable because $\varphi^m_s$ converges a.s. to 0 for $s=1, \dots, t_k-1$.
\item[(iii)]
$E^k \subseteq \cdots \subseteq E^0$, and $\bP(E^k) >0$.
\item[(iv)]
Any a.s. convergent subsequence of $\varphi^{m}$ converges a.s. to a nonzero process since $\|\varphi^{m}_{t_k}\|$  converges a.s. to $1_{E^k}$, which is nonzero a.s. since $\bP(E^k)>0$.
\item[(v)]
From  \eqref{eq:CloseNormalizedPhi} and \eqref{eq:CloseRecursion}, we have  $\varphi^{m}_s=1_{E}\phi^{\beta^m_{0}}_s/U^m$ for all $s \in \cT^*$,  where $E:= \bigcap_{i=1}^k E^i$.
Because $E^k \subseteq \cdots \subseteq E^0$,
 \begin{align}
  \varphi^{m}_s=1_{E^k}\frac{\phi^{\beta^m_{0}}_s}{U^m}, \quad s \in \cT^*. \label{eq:KClose14}
  \end{align}
 \item[(vi)]
$U^m(\omega)$ diverges for a.e. $\omega \in E^k$ since \eqref{eq:ClosePhiSupInfty}, \eqref{eq:CloseNestedTau}, and \eqref{eq:ClosePsiSupInfty} hold.
\end{itemize}

\noindent
\textbf{Step 1b:}
By the previous step, $\limsup_m \|\varphi^m_s\|<\infty$ for all $s \in \cT^*$.
We apply Lemma~\ref{Arbitrage/MeasurableSelectionShachProcess} to $\varphi^m$ to find a strictly increasing sequence of positive, integer-valued, $\cF_{T-1}$-measurable random variables $\rho^m$ so that $\varphi^{\rho^m}$ converges a.s. to some process $\varphi$ such that\footnote{See observation (ii) in Step 1a.} $\varphi_s \in L^0(\Omega, \cF_{t_k-1}, \bP; \bR^N)$ for $s=1, \dots, t_k-1$, and $\varphi_s \in L^0(\Omega, \cF_{s-1}, \bP; \bR^N)$ for $s=t_k, \dots, T$.
By  observation (ii) in Step 1a, we have that $\varphi$ is predictable.

\noindent
\textbf{Step 1c:}
We proceed by showing that \textbf{NAEF} implies $\bP(E^0)=0$.
Towards this, we first show that the process $\varphi$ constructed in Step 1b  satisfies $F(\varphi) \in \bK$.
For the sake of notation, we define  $\eta^m:=\beta^{\rho^m}_0$.
From \eqref{eq:KClose14}, we have  $\varphi^{\rho^m}=1_{E^k}\phi^{\eta^m}/U^{\rho^m}$.
Since $1_{E^k}$ and $U^{\rho^m}$ are nonnegative, $\bR$-valued random variables,
\begin{equation}\label{eq:ClosePosHomoF}
  1_{E^k}\frac{F( \phi^{\eta^m})}{U^{\rho^m}}=F\Big(1_{E^k}\frac {\phi^{\eta^m}}{U^{\rho^m}}\Big)=F(\varphi^{\rho^m}).
\end{equation}
Because $\varphi^{\rho^m}$ converges a.s. to $\varphi$, we may apply Lemma~\ref{Arbitrage/ValueConvergenceSF} to see that $F(\varphi^{\rho^m})$ converges a.s. to $F(\varphi)$.
Since $\varphi$ is predictable, we have from the definition of $\bK$ that $F(\varphi) \in \bK$.

We proceed by showing that $F(\varphi) \in L^0_+(\Omega, \cF_T, \bP; \bR)$.
Lets begin by defining $\widetilde X^m:=X^{\eta^m}/U^{\rho^m}$ and $\widetilde Z^m:=Z^{\eta^m}/U^{\rho^m}$.
From \eqref{eq:KClose11},
\begin{equation}\label{eq:KClose40}
  F(\phi^{\eta^m})=X^{\eta^m}+Z^{\eta^m}.
\end{equation}
By multiplying both sides of \eqref{eq:KClose40} by $1_{E^k}/U^{\rho^m}$, we see from \eqref{eq:ClosePosHomoF} that
\begin{equation}\label{eq:KClose27}
F(\varphi^{\rho^m})=1_{E^k}(\widetilde X^m+\widetilde Z^m).
\end{equation}
The sequence $X^m$ converges a.s. by assumption, so the sequence $X^{\eta^m}$ also converges a.s.
Recall that the sequence $U^{m}(\omega)$ diverges\footnote{See observations (vi)  in Step 1a.} for a.e. $\omega \in E^k$, so $U^{\rho^m}(\omega)$ diverges for a.e. $\omega \in E^k$ since $\{\rho^1(\omega), \rho^2(\omega), \dots, \}  \subseteq \bN$ for a.e. $\omega \in \Omega$.
Hence, $1_{E^k}\widetilde X^m$ converges a.s. to 0.
Since $F( \varphi^{\rho^m})$ and $1_{E^k}\widetilde X^m$ converge a.s.,  the sequence $1_{E^k}\widetilde Z^m$ also converges a.s. to some $Z \in L^0_+(\Omega, \cF_T, \bP; \bR)$.
Thus,  $F(\varphi^{\rho^m})$ converges a.s. to $Z$, which implies $F(\varphi) \in L^0_+(\Omega, \cF_T, \bP; \bR)$.

Since $F(\varphi) \in \bK$, we immediately see that $F(\varphi)\in \bK \cap L^0_+(\Omega, \cF_T, \bP; \bR)$.
It is assumed that \textbf{NA} is satisfied, so by Lemmas~\ref{Arbitrage/LemmaEquArb} and \ref{lemma: Kequivalent} we deduce that $F(\varphi)=0$.
We are supposing that \textbf{EF} holds, so according to Lemma~\ref{lemma: EFequivalent} we have $\varphi=0$.
This cannot happen given our assumption that $\bP(E^k)>0$ because\footnote{See observations (iv) in Step 1a.} $\|\varphi_{t_k}\|=1_{E^k}$.
Therefore, we must have that $\bP(E^k)=0$.
This contradicts the construction in Step 1a, so $\bP(E^0)=0$.

\noindent
\textbf{Step 2:}
By the conclusion in Step 1, we obtain that $\limsup_m \|\phi^m_s\| <\infty$ for $s \in \cT^*$.
By applying Lemma~\ref{Arbitrage/MeasurableSelectionShachProcess} to $\phi^m$, we may find a strictly increasing sequence of positive, integer-valued, $\cF_{T-1}$-measurable random variables $\sigma^m$ such that $\phi^{\sigma^m}$ converges a.s. to some predictable process $\phi$.

By Lemma~\ref{Arbitrage/ValueConvergenceSF}, the sequence $F(\phi^{\sigma^m})$ converges a.s. to $F(\phi)$.
Since $\phi \in \cP$, we have  $F(\phi) \in \bK$.
Because $X^{m}$ converges a.s. to $X$, the sequence $X^{\sigma^m}$ also converges a.s. to $X$.
From \eqref{eq:KClose11}, it is true that $X^{\sigma^m}=F(\phi^{\sigma^m})-Z^{\sigma^m}$.
Since $X^{\sigma^m}$ and $F(\phi^{\sigma^m})$ converges a.s., the sequence $Z^{\sigma^m}$ also converges a.s.
Thus,  $F(\phi^{\sigma^m})-X^{\sigma^m}$ converges a.s. to some nonnegative random variable $Z:=F(\phi)-X$, which gives us that $X=F(\phi)-Z$.
We conclude that $X \in \bK-L^0_+(\Omega, \cF_T, \bP; \bR)$.
\end{proof}

\subsection{The First Fundamental Theorem of Asset Pricing}\label{subsection:FFTAP}
In this section, we formulate and prove a version of the First Fundamental Theorem of Asset Pricing (FFTAP).
We  define the following set for convenience:
\begin{equation*}
  \cZ:=\{ \bQ: \ \bQ \sim \bP, \ P^{ask, *}, P^{bid, *}, A^{ask, *}, A^{bid, *} \ \textrm {are } \bQ\textrm{-integrable}\}.
\end{equation*}

We now define a risk-neutral measure in our context.

\begin{definition}\label{def:RiskNeutral}
 A probability measure $\bQ$  is a \emph{risk-neutral measure} if $\bQ \in \cZ$, and if $\mathbb{E}_{\bQ}[V^*_T(\phi)] \leq 0$ for all $\phi\in \cS$ such that $\phi^j$ is bounded a.s., for $j \in \cJ^*$, and $V_0(\phi)=0$.
\end{definition}

A natural question to ask is whether the expectation appearing in the definition above exists.
The following lemma shows that, indeed, it does.

\begin{lemma}\label{lemma:RNIntegrability}
Suppose that $\bQ \in \cZ$,  and let $\phi \in \cS$ be such that $\phi^j$ is bounded a.s., for $j \in \cJ^*$, and $V_0(\phi)=0$.
Then $V^*_T(\phi)$ is $\bQ$-integrable.
\end{lemma}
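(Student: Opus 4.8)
The plan is to show that $V^*_T(\phi)$ can be written as an explicit finite sum of terms, each of which is a product of a bounded random variable and one of the processes $P^{ask,*}$, $P^{bid,*}$, $A^{ask,*}$, $A^{bid,*}$ evaluated at various times, and that such products are $\bQ$-integrable whenever $\bQ \in \cZ$. The starting point is Proposition~\ref{Intro/LemmaDiscountedSelfFinancing}: since $\phi$ is self-financing, the discounted value process $V^*_T(\phi)$ is given explicitly by \eqref{eq:ValueSF} (equivalently by the mapping $F$ in \eqref{eq:DefMappingF}, noting $V_0(\phi)=0$). Every summand in that formula is of the form $\phi^j_t \cdot (\text{indicator}) \cdot P^{\bullet,j,*}_s$ or $\Delta\phi^j_u \cdot (\text{indicator}) \cdot P^{\bullet,j,*}_{u-1}$ or $\phi^j_u \cdot (\text{indicator}) \cdot A^{\bullet,j,*}_u$, where the indicators are of the form $1_{\{\phi^j_t \geq 0\}}$ etc., hence bounded by $1$.

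The key step is then to observe that each coefficient multiplying a price or dividend increment is bounded a.s. By hypothesis, $\phi^j$ is bounded a.s. for $j \in \cJ^*$; consequently $\Delta\phi^j_u = \phi^j_u - \phi^j_{u-1}$ is also bounded a.s. for each $u \in \cT^*$ (using the convention $\phi_0 = 0$). Since the indicator functions take values in $\{0,1\}$, each of the finitely many products appearing in \eqref{eq:ValueSF} is a bounded random variable times a $\bQ$-integrable random variable (by the definition of $\cZ$, the processes $P^{ask,*}, P^{bid,*}, A^{ask,*}, A^{bid,*}$ are $\bQ$-integrable, and each involves only finitely many time indices $0,1,\dots,T$). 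A bounded random variable times a $\bQ$-integrable random variable is $\bQ$-integrable, and a finite sum of $\bQ$-integrable random variables is $\bQ$-integrable. Hence $V^*_T(\phi)$ is $\bQ$-integrable.

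There is essentially no serious obstacle here — the lemma is genuinely routine once one invokes Proposition~\ref{Intro/LemmaDiscountedSelfFinancing} — so the only care needed is bookkeeping: making sure to note that the boundedness of $\phi^j$ ($j \in \cJ^*$) transfers to $\Delta\phi^j$, that $A^{ask,*}$ and $A^{bid,*}$ denote the (discounted) \emph{increments} $B^{-1}\Delta A^{ask}$, $B^{-1}\Delta A^{bid}$ so that $\bQ$-integrability of the increments is exactly what \eqref{eq:ValueSF} requires, and that the finiteness of $\cT$ keeps the sum finite so no convergence issues arise. One subtlety worth a remark: the definition of a risk-neutral measure restricts attention to $\phi$ with $\phi^j$ bounded for $j \in \cJ^*$ precisely so that this integrability holds; without it, $\phi^j_t P^{bid,j,*}_t$ need not be integrable even for $\bQ \in \cZ$.
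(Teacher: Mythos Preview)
Your proof is correct and follows essentially the same approach as the paper's own proof: invoke Proposition~\ref{Intro/LemmaDiscountedSelfFinancing} to express $V^*_T(\phi)$ as a finite sum of bounded coefficients times the $\bQ$-integrable processes $P^{ask,*}, P^{bid,*}, A^{ask,*}, A^{bid,*}$. The paper's version is terser and additionally notes that $\bP$-a.s.\ boundedness of $\phi^j$ transfers to $\bQ$-a.s.\ boundedness because $\bQ \sim \bP$, a point you leave implicit but which is harmless here.
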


\begin{proof}
From the definition of $\cZ$, the processes $P^{ask, *}$, $P^{bid, *}$, $A^{ask, *}$, $A^{bid, *}$ are $\bQ$-integrable.
Because $\bQ$ is equivalent to $\bP$, and since $\phi^j$ is bounded $\bP$-a.s. for $j \in \cJ^*$, we have that $\phi^j$ is bounded $\bQ$-a.s. for $j \in \cJ^*$.
Therefore, we see from Proposition~\ref{Intro/LemmaDiscountedSelfFinancing} that $\bE_{\bQ}[|V^*_T(\phi)|] <\infty$ holds.
\end{proof}

\begin{remark}
For frictionless markets ($P:=P^{ask}=P^{bid}, A:=A^{ask}=A^{bid}$), a risk-neutral measure is classically defined to be an equivalent probability measure such that the discounted cumulative price process $(P_t+\sum_{u=1}^{t}\Delta A_u)_{u=0}^T$ is a martingale under $\bQ$.
    The present definition of a risk-neutral coincides with this classic definition of a risk-neutral measure if the market is frictionless.
     Indeed, if there are no frictions the value process satisfies $V^*_T(-\phi)=-V^*_T(\phi)$ for all trading strategies.
     Also, by Proposition~\ref{Intro/LemmaDiscountedSelfFinancing}, we have that
      $\bE_{\bQ}[V^*_T(\phi)]=0$ for all $\phi \in \cS$ such that $\phi^j$ is bounded a.s. and $V_0(\phi)=0$ for $j \in \cJ^*$ if and only if
      \begin{equation*}
      \sum_{j=1}^N\sum_{u=1}^{T}\bE_{\bQ}\Bigg[\phi^j_{u}\bE_{\bQ}\Big[\Delta \Big(B^{-1}_{u}P^j_{u} + \sum_{w=1}^u B^{-1}_w \Delta A^j_w\Big)\Big| \cF_{u-1}\Big]\Bigg]=0
      \end{equation*}
      for all $\phi \in \cS$ such that $\phi^j$ is bounded a.s., for $j \in \cJ^*$.
\end{remark}

\noindent
The next lemma provides a mathematically convenient condition that is equivalent to \textbf{NA}.

\begin{lemma}\label{lemma:NAbounded}
The no-arbitrage condition \textbf{\emph{(NA)}} is satisfied if and only if for each $\phi \in \cS$ such that $\phi^j$ is bounded a.s. for $j \in \cJ^*$,  $V_0(\phi)=0$, and $V_T(\phi) \geq 0$, we have  $V_T(\phi)=0$.
 \end{lemma}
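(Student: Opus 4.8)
The plan is to prove the two directions separately. The forward implication is immediate: if \textbf{NA} holds, then the requirement in the lemma is simply the definition of \textbf{NA} restricted to a smaller class of strategies (those whose risky holdings are bounded), so there is nothing to prove. The real content is the converse: we must show that an arbitrage opportunity, if one exists, can be replaced by one using only bounded risky positions. The hard part is this truncation argument, since naively capping the holdings $\phi^j$ destroys both predictability and the self-financing property.

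First I would take $\phi \in \cS$ with $V_0(\phi)=0$, $V_T(\phi) \geq 0$, and $\bP(V_T(\phi)>0)>0$, aiming to produce a bounded-holdings arbitrage. Using Lemma~\ref{lemma: Kequivalent}, it suffices to work with the map $F$ and the process $\psi := (\phi^1,\dots,\phi^N) \in \cP$, so that $F(\psi) = V^*_T(\phi) \in \bK$ with $F(\psi)\geq 0$ and $\bP(F(\psi)>0)>0$. I would then define, for each level $n$, the truncated predictable process $\psi^{(n)}_s := \psi_s \, 1_{\{\|\psi_1\|\leq n,\,\dots,\,\|\psi_s\|\leq n\}}$ — this stops the strategy the first time any coordinate exceeds $n$ in norm, which keeps $\psi^{(n)}$ predictable (the stopping set at time $s$ is $\cF_{s-1}$-measurable) and bounded by $n$. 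The key observation is that on the event $\{\omega : \sup_s \|\psi_s(\omega)\| \leq n\}$ we have $\psi^{(n)} = \psi$ and hence $F(\psi^{(n)}) = F(\psi)$ there, while $F(\psi^{(n)}) \to F(\psi)$ a.s. as $n\to\infty$ (by Lemma~\ref{Arbitrage/ValueConvergenceSF}, since $\psi^{(n)}_s \to \psi_s$ a.s. for each $s$).

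The obstacle is that $F(\psi^{(n)})$ need not be nonnegative — truncation can create losses on the event where the strategy was stopped. To handle this I would instead superhedge: $F(\psi^{(n)}) \in \bK = \cK$, so there is a self-financing $\phi^{(n)}$ with bounded risky holdings, $V_0(\phi^{(n)})=0$, and $V^*_T(\phi^{(n)}) = F(\psi^{(n)})$. Now consider the "difference" argument: since $F(\psi^{(n)}) \to F(\psi)$ a.s. and $F(\psi) \geq 0$, on a set of positive probability $F(\psi^{(n)})$ is eventually bounded below by $-\varepsilon$ for arbitrarily small $\varepsilon$; combined with the fact that $F(\psi^{(n)}) = F(\psi) \geq 0$ on the (increasing to full measure) events $\{\sup_s\|\psi_s\|\leq n\}$, and that $\bP(F(\psi)>0)>0$, one can extract an $n$ large enough that $\widehat\psi := \psi^{(n)}$ — possibly after subtracting a suitable bounded random variable to wipe out the small negative part, which is legitimate because $\bK - L^0_+$ contains such elements and, under \textbf{NAEF}, is $\bP$-closed by Theorem~\ref{lemma:KClosed} — yields an element of $\cK$ that is nonnegative and strictly positive with positive probability, using only bounded risky holdings. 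This contradicts the hypothesis of the lemma, completing the converse.

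An alternative, cleaner route I would consider: restrict attention from the start to $\psi$ with, say, $\bP(F(\psi) > 0) > 0$, and split $\Omega$ according to whether $\sup_s\|\psi_s\|$ is finite. On $\Omega_\infty := \{\sup_s\|\psi_s\| = \infty\}$ one rescales as in the proof of Theorem~\ref{lemma:KClosed} (dividing by the norm) to produce, via \textbf{EF}, a contradiction or a reduction; on $\Omega\setminus\Omega_\infty$ the strategy is already essentially bounded and a truncation at a high deterministic level changes $F(\psi)$ only on a set of arbitrarily small probability. Either way, the essential mechanism — and the step I expect to require the most care — is ensuring that the truncated or rescaled strategy remains predictable and self-financing (handled by stopping at $\cF_{s-1}$-measurable times and invoking Proposition~\ref{Prop:TradingStrategiesNonLinear}/Proposition~\ref{Intro/LemmaDiscountedSelfFinancing}) while its terminal value still dominates a nonnegative, nontrivial random variable, for which the closedness of $\bK - L^0_+$ under \textbf{NAEF} is the crucial tool.
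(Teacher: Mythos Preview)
Your forward direction is fine, and you correctly identify the real obstacle in the converse: truncating the risky holdings can destroy the nonnegativity of the terminal value. The problem is your proposed fix. You invoke Theorem~\ref{lemma:KClosed} (the $\bP$-closedness of $\bK - L^0_+$) to absorb the small negative part of $F(\psi^{(n)})$, but that theorem assumes \textbf{NAEF} --- in particular it assumes \textbf{NA}, which is precisely the condition you are trying to establish in this direction. Lemma~\ref{lemma:NAbounded} also does not assume \textbf{EF}, so your alternative route via rescaling and \textbf{EF} (``as in the proof of Theorem~\ref{lemma:KClosed}'') is likewise unavailable. Without closedness or \textbf{EF} at hand, the step where you ``subtract a suitable bounded random variable to wipe out the small negative part'' and still land in $\cK$ has no justification; this is a genuine circularity, not a technicality.

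The paper's argument avoids this by not trying to manufacture a bounded arbitrage from the given one. Instead, given $\phi\in\cS$ with $V_0(\phi)=0$ and $V_T(\phi)\geq 0$, it defines the coordinate-wise truncation $\psi^{m,j}_t := 1_{\{|\phi^j_t|\leq m\}}\phi^j_t$ for $j\in\cJ^*$ (with $\psi^{m,0}$ chosen so that $\psi^m\in\cS$ and $V_0(\psi^m)=0$), observes that $\psi^{m,j}_t\to\phi^j_t$ a.s., and uses Lemma~\ref{Arbitrage/ValueConvergenceSF} to get $V_0(\psi^m)\to V_0(\phi)$ and $V_T(\psi^m)\to V_T(\phi)$. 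The bounded-holdings hypothesis is then applied to each $\psi^m$ and the conclusion for $\phi$ is drawn in the limit --- neither the closedness result nor \textbf{EF} enters at any point. Your stopping-time truncation is a perfectly reasonable way to preserve predictability, but the route you take from there must not presuppose \textbf{NA}.
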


\begin{proof}
Necessity holds immediately, so we only show sufficiency.
Let $\phi\in \cS$ be a trading strategy so that $V_0(\phi)=0$ and $V_T(\phi) \geq 0$.
We will show that $V_T(\phi)=0$.

First, define the $\cF_{t-1}$ measurable set $\Omega^{m, j}_t:=\{\omega \in \Omega: |\phi^j_t(\omega)| \leq m\}$ for $m \in \bN$, $t \in \cT^*$, and $j \in \cJ^*$, and define the sequence of trading strategies $\psi^m$ as $\psi^{m, j}_t:=1_{\Omega^{m, j}_t}\phi^j_t$ for $t \in \cT^*$ and $j \in \cJ^*$, where $\psi^{m,0}$ is chosen so that $\psi^m$ is self-financing and $V_0(\psi^m)=0$.
Since $1_{\Omega^{m,j}_t}$ converges a.s. to $1$ for all $t \in \cT^*$ and $j \in \cJ^*$, we have that $\psi^{m, j}_t$ converges a.s. to $\phi^j_t$ for all $t \in \cT^*$ and $j \in \cJ^*$.

Now we prove that $V_0(\psi^m)$ converges a.s. to $V_0(\phi)$.
Towards this, we first show that $\psi^{m, j}_1$ converges a.s. to $\phi^j_1$ for all $j \in \cJ$.
By the definition of $V_0(\psi^m)$,
\begin{equation*}
  V_0(\psi^m)= \psi^{m, 0}_1+\sum_{j=1}^N \psi^{m, j}_1\big(1_{\{\psi^{m, j}_1 \geq 0\}}P^{ask, j}_0+1_{\{\psi^{m, j}_1 <0 \}}P^{bid, j}_0 \big).
\end{equation*}
Since $\psi^{m, 0}$ is chosen so that $V_0(\psi^m)=0$, we have
\begin{equation*}
  \psi^{m, 0}_1=-\sum_{j=1}^N \psi^{m, j}_1\big(1_{\{\psi^{m, j}_1 \geq 0\}}P^{ask, j}_0+1_{\{\psi^{m, j}_1 <0 \}}P^{bid, j}_0 \big).
\end{equation*}
The sequence $\psi^{m, j}_1$ converges a.s. to $\phi^j_1$ for all $j \in \cJ^*$, so by Lemma~\ref{Arbitrage/ValueConvergenceSF} the sequence $\psi^{m, 0}_1$ converges a.s. to
\begin{equation*}
  -\sum_{j=1}^N \phi^{ j}_1\big(1_{\{\phi^{ j}_1 \geq 0\}}P^{ask, j}_0+1_{\{\phi^{j}_1 <0 \}}P^{bid, j}_0 \big).
\end{equation*}
However, $V_0(\phi)=0$, so
$\psi^{m, 0}_1$ converges a.s. to $\phi^0_1$
Thus, $\psi^{m, j}_1$ converges a.s. to $\phi^j_1$ for all $j \in \cJ$.
By Lemma~\ref{Arbitrage/ValueConvergenceSF}, we have that $V_0(\psi^m)$ converges to $V_0(\phi)$.

According to Lemma~\ref{Arbitrage/ValueConvergenceSF},  $V^*_T(\psi^{m})$ converges a.s. to $V^*_T(\phi)$ since $\psi^{m, j}_t$ converges a.s. to $\phi^j_t$ for all $t \in \cT^*$ and $j \in \cJ^*$.

Next, since $\psi^m$ is self-financing and $\psi^{m,j}$ is bounded a.s. for all $j \in \cJ^*$ and all $m \in \bN$, we obtain
\begin{equation*}
V_0(\psi^m)=0, \  V^*_T(\psi^m) \geq 0 \ \Longrightarrow \ V^*_T(\psi^m) =0, \quad m \in \bN.
\end{equation*}
Since $V_0(\psi^m)$ converges a.s. to $V_0(\phi)$, and $V^*_T(\psi^m)$ converges a.s. to $V^*_T(\phi)$ we have
\begin{equation*}
V_0(\phi)=0, \   V_T(\phi) \geq 0 \ \Longrightarrow \ V_T(\phi) =0.
\end{equation*}
Since $V_0(\phi)=0$ and $V_T(\phi) \geq 0$, we conclude that $V_T(\phi)=0$, so  \textbf{NA} holds.
\end{proof}

Next, we recall the well-known  Kreps-Yan Theorem.
It was first proved by Yan~\cite{Yan1980}, and then obtained independently by Kreps~\cite{Kreps1981} in the context of financial mathematics.
For a proof of the version presented in this paper, see Schachermayer~\cite{Schachermayer1992}.
Theorem~\ref{lemma:KClosed} and the Kreps-Yan Theorem will essentially imply the FFTAP (Theorem~\ref{Theorem: FFTAP}).

\begin{theorem}[Kreps-Yan]\label{Arbitrage/LemmaYanKreps}
Let $\mathcal{C}$ be a closed convex cone in $L^1(\Omega, \cF, \bP; \bR)$ containing $L^1_{-}(\Omega, \cF, \bP; \bR)$ such that $\mathcal{C} \cap L^1_+(\Omega, \cF, \bP; \bR)=\{0\}$.
Then there exists a functional $f \in L^{\infty}(\Omega, \cF, \bP; \bR)$ such that, for each $h \in L^1_+(\Omega, \cF, \bP; \bR)$ with $h \neq 0$, we have that $\bE_{\bP}[fh]>0$ and $\bE_{\bP}[fg] \leq 0$ for any $g \in \mathcal{C}$.
\end{theorem}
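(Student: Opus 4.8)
The final statement is the classical Kreps--Yan theorem, and I would prove it in two stages: a pointwise Hahn--Banach separation producing one separating functional for each nonzero test density, followed by an exhaustion argument that glues countably many of these into a single \emph{strictly} positive functional. As the excerpt itself notes, one may alternatively just cite Schachermayer~\cite{Schachermayer1992}, but here is the argument I would write.

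\emph{Stage 1: separating a single $h$.} Fix $h \in L^1_+(\Omega,\cF,\bP;\bR)$ with $h \neq 0$. Since $\cC \cap L^1_+(\Omega,\cF,\bP;\bR) = \{0\}$, the point $h$ does not lie in the closed convex set $\cC$. Because $L^1(\Omega,\cF,\bP;\bR)$ is a Banach space with topological dual $L^\infty(\Omega,\cF,\bP;\bR)$, the strong separation form of the Hahn--Banach theorem furnishes $f_h \in L^\infty(\Omega,\cF,\bP;\bR)$ and $\alpha \in \bR$ with $\bE_{\bP}[f_h h] > \alpha \geq \sup_{g \in \cC}\bE_{\bP}[f_h g]$. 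As $\cC$ is a cone containing $0$, the right-hand supremum is $\geq 0$, and were it strictly positive, scaling an element of $\cC$ would force it to $+\infty$; hence it equals $0$, giving $\bE_{\bP}[f_h g] \leq 0$ for all $g \in \cC$ and $\bE_{\bP}[f_h h] > 0$. Since $L^1_-(\Omega,\cF,\bP;\bR) \subseteq \cC$, testing $\bE_{\bP}[f_h g]\le 0$ against $g = -1_B$ for arbitrary $B \in \cF$ gives $\bE_{\bP}[f_h 1_B] \geq 0$, so $f_h \geq 0$ a.s.; after rescaling we may assume $0 \leq f_h \leq 1$.

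\emph{Stage 2: gluing.} Let $\cG$ be the set of $f \in L^\infty(\Omega,\cF,\bP;\bR)$ with $0 \leq f \leq 1$ and $\bE_{\bP}[fg] \leq 0$ for every $g \in \cC$; it contains $0$ and, by dominated convergence (using $|g|\in L^1$ as a dominating function), is stable under countable convex combinations. Put $s := \sup_{f \in \cG}\bP(f > 0)$ and choose $f_n \in \cG$ with $\bP(f_n > 0) \to s$. Then $f := \sum_{n \geq 1} 2^{-n} f_n$ lies in $\cG$, its support equals $\bigcup_n \{f_n > 0\}$, and therefore $\bP(f > 0) = s$. If $s < 1$, set $A := \{f = 0\}$, so $1_A \in L^1_+(\Omega,\cF,\bP;\bR)$ is nonzero; Stage 1 applied to $h = 1_A$ produces $\ell \in \cG$ with $\bE_{\bP}[\ell 1_A] > 0$, whence $\bP(\{\ell > 0\} \cap A) > 0$. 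But then $\frac{1}{2}(f + \ell) \in \cG$ has support strictly larger than that of $f$, contradicting the definition of $s$. Hence $s = 1$, i.e. $f > 0$ a.s.

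\emph{Conclusion and main obstacle.} The functional $f$ lies in $L^\infty(\Omega,\cF,\bP;\bR)$, satisfies $\bE_{\bP}[fg] \leq 0$ for all $g \in \cC$ by membership in $\cG$, and for any $h \in L^1_+(\Omega,\cF,\bP;\bR)$ with $h \neq 0$ we have $fh \geq 0$ together with $fh > 0$ on the positive-probability set $\{h > 0\}$, so $\bE_{\bP}[fh] > 0$. The delicate points are, first, the applicability of strong Hahn--Banach separation in $L^1$ with the separating functional genuinely in $L^\infty$ and the strict inequality preserved; and second, the exhaustion argument in Stage 2 --- specifically, checking that the countable convex combination remains in $\cG$ and that the supremum $s$ is actually attained, which is what upgrades the per-$h$ functionals of Stage 1 to a single universal one. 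I expect this second step to be the main obstacle, since it is where the countable structure of $L^1$ and the $\sigma$-finiteness of $\bP$ are essential.
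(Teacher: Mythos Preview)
Your proof is correct and is precisely the classical two-step argument (Hahn--Banach separation followed by an exhaustion over supports) that one finds in Schachermayer~\cite{Schachermayer1992}. Note that the paper does not actually give its own proof of this theorem: it merely states the result and refers the reader to \cite{Schachermayer1992}, so there is nothing further to compare against.
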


\noindent
We are now ready prove the following version of the FFTAP.

\begin{theorem}[First Fundamental Theorem of Asset Pricing]\label{Theorem: FFTAP}
The following conditions are equivalent:
\begin{itemize}
\item[(i)] The no-arbitrage condition under the efficient friction assumption \emph{\textbf{(NAEF)}} is satisfied.
\item[(ii)] There exists a risk-neutral measure.
\item[(iii)] There exists a risk-neutral measure $\bQ$ so that $\mathrm{d}\bQ/\mathrm{d}\bP \in L^{\infty}(\Omega, \cF_T, \bP; \bR)$.
\end{itemize}
\end{theorem}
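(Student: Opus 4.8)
The plan is to establish the cycle $(i)\Rightarrow(iii)\Rightarrow(ii)\Rightarrow(i)$, with $(i)\Rightarrow(iii)$ carrying essentially all of the work through the closedness Theorem~\ref{lemma:KClosed} and the Kreps--Yan Theorem~\ref{Arbitrage/LemmaYanKreps}. The implication $(iii)\Rightarrow(ii)$ is immediate. I would dispatch $(ii)\Rightarrow(i)$ directly: given a risk-neutral measure $\bQ$, any $\phi\in\cS$ with $\phi^j$ bounded for $j\in\cJ^*$, $V_0(\phi)=0$ and $V_T(\phi)\ge0$ has $V^*_T(\phi)\ge0$ and, by Lemma~\ref{lemma:RNIntegrability}, $V^*_T(\phi)$ is $\bQ$-integrable with $\bE_\bQ[V^*_T(\phi)]\le0$; hence $V^*_T(\phi)=0$ $\bQ$-a.s., and since $\bQ\sim\bP$ also $\bP$-a.s., so \textbf{NA} follows from Lemma~\ref{lemma:NAbounded}, and together with \textbf{EF} this gives \textbf{NAEF}.

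For $(i)\Rightarrow(iii)$ I would argue as follows. By Theorem~\ref{lemma:KClosed}, \textbf{NAEF} makes $\cK-L^0_+(\Omega,\cF_T,\bP;\bR)$ $\bP$-closed. Since $\cK$ need not sit inside $L^1(\Omega,\cF_T,\bP;\bR)$, I first fix an auxiliary probability measure $\bP_1\sim\bP$ with bounded density $\mathrm{d}\bP_1/\mathrm{d}\bP$ (for instance proportional to $(1+\Sigma)^{-1}$, where $\Sigma$ is the sum of the absolute values of the finitely many random variables making up $P^{ask,*},P^{bid,*},A^{ask,*},A^{bid,*}$) under which each of those variables is integrable. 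I then set
\begin{equation*}
\cC:=\big(\cK-L^0_+(\Omega,\cF_T,\bP;\bR)\big)\cap L^1(\Omega,\cF_T,\bP_1;\bR).
\end{equation*}
By Lemma~\ref{Arbitrage/LemmaConvexCone} and $0\in\cK$ the set $\cC$ is a convex cone containing $L^1_-(\Omega,\cF_T,\bP_1;\bR)$, and by Lemma~\ref{Arbitrage/LemmaEquArb} with \textbf{NA} we have $\cC\cap L^1_+(\Omega,\cF_T,\bP_1;\bR)=\{0\}$; it is $L^1(\bP_1)$-closed because an $L^1(\bP_1)$-convergent sequence in $\cC$ admits an a.s.-convergent subsequence, hence one converging in $\bP$-measure, whose limit lies both in the $\bP$-closed set $\cK-L^0_+$ and in $L^1(\bP_1)$. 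Applying Theorem~\ref{Arbitrage/LemmaYanKreps} in $L^1(\Omega,\cF_T,\bP_1;\bR)$ yields $f\in L^\infty(\Omega,\cF_T,\bP_1;\bR)$ with $\bE_{\bP_1}[fh]>0$ for every nonzero $h\in L^1_+(\bP_1)$ and $\bE_{\bP_1}[fg]\le0$ for every $g\in\cC$; testing against indicators forces $f>0$ $\bP_1$-a.s. Defining $\bQ$ by $\mathrm{d}\bQ/\mathrm{d}\bP_1:=f/\bE_{\bP_1}[f]$, we get $\bQ\sim\bP$ and $\mathrm{d}\bQ/\mathrm{d}\bP=(\mathrm{d}\bQ/\mathrm{d}\bP_1)(\mathrm{d}\bP_1/\mathrm{d}\bP)\in L^\infty(\Omega,\cF_T,\bP;\bR)$, while boundedness of $\mathrm{d}\bQ/\mathrm{d}\bP_1$ and $\bP_1$-integrability of the discounted price and dividend variables give $\bQ\in\cZ$. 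Finally, for $\phi\in\cS$ with $\phi^j$ bounded for $j\in\cJ^*$ and $V_0(\phi)=0$, Proposition~\ref{Intro/LemmaDiscountedSelfFinancing} exhibits $V^*_T(\phi)$ as a combination with bounded coefficients of $\bP_1$-integrable variables, so $V^*_T(\phi)\in\cK\cap L^1(\bP_1)\subseteq\cC$; hence $\bE_\bQ[V^*_T(\phi)]=\bE_{\bP_1}[fV^*_T(\phi)]/\bE_{\bP_1}[f]\le0$, so $\bQ$ is a risk-neutral measure with bounded density, which is $(iii)$.

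The genuinely hard input is Theorem~\ref{lemma:KClosed}, but that is already available; conditional on it, the delicate point is the integrability bookkeeping around the change of measure: $\bP_1$ must simultaneously make the discounted price and dividend processes $\bP_1$-integrable and keep $\mathrm{d}\bP_1/\mathrm{d}\bP$ bounded, so that (a) the Kreps--Yan functional produces a measure in $\cZ$ whose $\bP$-density is bounded, and (b) $V^*_T(\phi)$ lands in $L^1(\bP_1)$, hence in $\cC$, for every bounded self-financing $\phi$ with zero initial cost, which is precisely what lets the separation inequality be read as risk-neutrality of $\bQ$. One should also take care to verify $L^1(\bP_1)$-closedness of $\cC$ (not merely $\bP$-closedness), for which the passage to an a.s.-convergent subsequence is essential.
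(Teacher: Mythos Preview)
Your proposal is correct and follows essentially the same route as the paper: the implication $(ii)\Rightarrow(i)$ via Lemma~\ref{lemma:NAbounded} (you argue directly, the paper by contradiction, but the content is identical), and $(i)\Rightarrow(iii)$ by passing to an equivalent auxiliary measure with bounded density that makes the discounted price and dividend processes integrable (your $\bP_1$ is exactly the paper's $\widetilde\bP$ with weight $w$ as in \eqref{eq:FFAPWeight}), intersecting $\cK-L^0_+$ with $L^1$ of that measure, and invoking Kreps--Yan. Your treatment is in fact slightly more explicit than the paper's on two points the paper leaves implicit: the passage from $\bP$-closedness of $\cK-L^0_+$ to $L^1(\bP_1)$-closedness of $\cC$ via a.s.-convergent subsequences, and the verification that $V^*_T(\phi)\in L^1(\bP_1)$ for bounded $\phi$ via Proposition~\ref{Intro/LemmaDiscountedSelfFinancing}.
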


\begin{proof}
In order to prove these equivalences, we show that $(ii) \Rightarrow (i)$, $(i) \Rightarrow (iii)$, and $(iii) \Rightarrow (ii)$.
The implication $(iii) \Rightarrow (ii)$ is immediate, so we only show the remaining two.
\smallskip

$(ii) \Rightarrow (i)$:
We prove by contradiction.
Assume there exists a risk-neutral measure $\bQ$, and that \textbf{NA} does not hold.
By Lemma~\ref{lemma:NAbounded}, there exists $\phi \in \cS$ so that $\phi^j$ is bounded a.s., for $j \in \cJ^*$,  $V_0(\phi)=0$, $V^*_T(\phi) \geq 0$, and $\bP(V^*_T(\phi)(\omega) >0)>0$.
Since $\bQ$ is equivalent to $\bP$, we have $V_0(\phi)=0$, $V^*_T(\phi) \geq 0$ $\bQ$-a.s., and $\bQ(V^*_T(\phi)(\omega) >0)>0$.
So $\bE_{\bQ}[V^*_T(\phi)] > 0$, which contradicts that $\bQ$ is risk-neutral.
Hence, \textbf{NA} holds.

$(i) \Rightarrow (iii)$:
We first construct a probability measure $\widetilde \bP$ satisfying  $\widetilde \bP \in \cZ$ and $\d \widetilde{\bP} / \d \bP \in L^{\infty}(\Omega, \cF_T, \bP; \bR)$.
Towards this, let us define the $\cF_T$-measurable weight function
\begin{equation}\label{eq:FFAPWeight}
w:=1+ \sum_{u=0}^T\|P^{ask, *}_u\|+\sum_{u=0}^T\|P^{bid, *}_u\|+ \sum_{u=1}^T\|A^{ask, *}_u\|+\sum_{u=1}^T\|A^{bid, *}_u\|,
\end{equation}
and let $\widetilde \bP$ be the measure on $\cF_T$  with Radon-Nikod\'{y}m derivative $\d \widetilde{\bP} / \d \bP=\tilde{c}/w$, where $\widetilde c$ is an appropriate normalizing constant.
We see that $\widetilde \bP$ is equivalent to $\bP$, and $\d \widetilde{\bP} / \d \bP \in L^{\infty}(\Omega, \cF_T, \bP; \bR)$.
By the choice of the weight function $w$, the processes $P^{ask, *}$, $P^{bid, *}$, $A^{ask, *}$, $A^{bid, *}$ are $\widetilde{\bP}$-integrable.
Thus, $\widetilde \bP \in \cZ$.

Next, since $\widetilde \bP$ is equivalent to $\bP$, it follows that
$\big(\cK-L^0_+(\Omega, \cF_T, \widetilde\bP\,; \bR)\big) \cap L^0_+(\Omega, \cF_T, \widetilde{\bP}\,; \bR)=\{0\}$ by Lemma~\ref{Arbitrage/LemmaEquArb}, the set $\cK-L^0_+(\Omega, \cF_T, \widetilde\bP\,; \bR)$ is $\widetilde{\bP}$-closed  according to Theorem~\ref{lemma:KClosed}, and  $\cK-L^0_+(\Omega, \cF_T, \widetilde \bP\,; \bR)$ is a convex cone by Lemma~\ref{Arbitrage/LemmaConvexCone}.

Let us now consider the set $\cC:=\big(\cK-L^0_+(\Omega, \cF_T, \widetilde \bP\,; \bR)\big) \cap L^1(\Omega, \cF_T, \widetilde{\bP}\,;\bR)$.
We observe the following that $\cC\cap L^1_+(\Omega, \cF_T, \widetilde\bP\,; \bR)=\{0\}$, $\cC$ is a convex cone, and $\cC \supseteq L^1_-(\Omega, \cF_T, \widetilde{\bP}\,;\bR)$.
Moreover, since convergence in $L^1(\Omega, \cF_T, \widetilde \bP\,; \bR)$ implies convergence in measure $\widetilde{\bP}$,  we have that $\cC$ is closed in $L^1(\Omega, \cF_T, \widetilde \bP\,; \bR)$.

Thus, according to Theorem~\ref{Arbitrage/LemmaYanKreps},  there exists a strictly positive functional\footnote{For each $h \in L^1_+(\Omega, \cF_T, \widetilde\bP\,; \bR)$ with $h \neq 0$, we have $\mathbb{E}_{\widetilde\bP}[fh]>0$.}\\ $f \in L^{\infty}(\Omega, \cF_T, \widetilde\bP\,; \bR)$ such that $\bE_{\widetilde\bP}[ Kf] \leq 0$ for all  $K \in \cC$.
Because $0 \in L^0_+(\Omega, \cF_T, \widetilde\bP\,; \bR)$, it follows from the definition of $\cC$ that
\begin{equation*}
\bE_{\widetilde\bP}[ Kf] \leq 0, \quad  K \in \cK \cap L^1(\Omega, \cF_T, \widetilde\bP\,; \bR).
\end{equation*}
By the definition of $\cK$, this implies that $\bE_{\widetilde\bP}[ V^*_T(\phi)f] \leq 0$ for all $\phi \in \cS$ such that $V_0(\phi)=0$ and $V^*_T(\phi) \in L^1(\Omega, \cF_T, \widetilde\bP\,; \bR)$.
In particular,  $\bE_{\widetilde\bP}[ V^*_T(\phi)f] \leq 0$ for all $\phi \in \cS$ such that $\phi^j$ is bounded a.s., for  $j \in \cJ^*$, $V_0(\phi)=0$, and $V^*_T(\phi) \in  L^1(\Omega, \cF_T, \widetilde\bP\,; \bR)$.
 Since $\widetilde{\bP} \in \cZ$, we obtain from Lemma~\ref{lemma:RNIntegrability} that $V^*_T(\phi)$ is $\widetilde \bP$-integrable.
 Thus,  $\bE_{\widetilde\bP}[ V^*_T(\phi)f] \leq 0$ for all $\phi \in \cS$ such that $\phi^j$ is bounded a.s., for  $j \in \cJ^*$, and $V_0(\phi)=0$.

We proceed by constructing a risk-neutral measure.
Let $\bQ$ be the measure on $\cF_T$ with Radon-Nikod\'{y}m derivative $\mathrm{d}\bQ/\mathrm{d} \widetilde{\bP}:= c f$,
 where $c$ is an appropriate normalizing constant.
Because $f$ is a strictly positive functional in $L^{\infty}(\Omega, \cF_T, \widetilde\bP\,; \bR)$, we have that $\bQ$ is equivalent to $\widetilde \bP$.
Since $\widetilde \bP$ is equivalent to $\bP$, it follows that $\bQ$ is equivalent to $\bP$.
Also,
\begin{equation}\label{eq:FFAP1}
  \frac{\d \bQ}{\d \bP}=\frac{\d \bQ}{\d \widetilde {\bP}}\frac{\d \widetilde {\bP}}{\d \bP} =c \tilde{c} \frac{f}{w}.
\end{equation}
Thus, since $w \geq 1$ and $f \in L^{\infty}(\Omega, \cF_T, \widetilde\bP\,; \bR)$, we have $\d \bQ/ \d \bP \in L^{\infty} (\Omega, \cF_T, \widetilde \bP; \bR)$.
This gives us $\d \bQ/ \d \bP \in L^{\infty} (\Omega, \cF_T, \bP; \bR)$ since $\widetilde \bP$ is equivalent to $\bP$.
Moreover, we note that the processes $P^{ask, *}$, $P^{bid,*}$, $A^{ask, *}$, $A^{bid,*}$ are $\bQ$-integrable.
Hence, $\bQ \in \cZ$.
We conclude that  $\bQ$ is a risk-neutral measure since $\bE_{\bQ}[V^*_T(\phi)] =c \:\bE_{\widetilde \bP}[V^*_T(\phi)f] \leq 0$ for all $\phi \in \cS$ such that $\phi^j$ is bounded a.s., for  $j \in \cJ^*$, and $V_0(\phi)=0$.
\end{proof}

\begin{remark}\mbox{}
\begin{itemize}
\item[(i)] Note that \textbf{EF} is not needed to prove the implication $(ii) \Rightarrow (i)$.
\item[(ii)] In practice, it is typically required for a market model to satisfy \textbf{NA}.
    According to Theorem~\ref{Theorem: FFTAP}, it is enough to check that there exists a risk-neutral measure.
    However, this is not straightforward because it has to be verified whether there exists a probability measure $\bQ \in \cZ$ so that $\bE_\bQ[V^*_T(\phi)] \leq 0$ \emph{for all} $\phi \in \cS$ so that $\phi^j$ is bounded a.s., for $j \in \cJ^*$, and $V_0(\phi)=0$.
   We will show in the following section that consistent pricing systems help solve this issue (see Proposition~\ref{proposition:ConsistentRN} and Theorem~\ref{theorem:ConsistentRNFrictionless}).
\end{itemize}
\end{remark}

\section{Consistent pricing systems}\label{section:CPS}
Consistent pricing systems (CPSs) are instrumental in  the theory of arbitrage in markets with transaction costs---they provide a bridge between martingale theory in the theory of arbitrage in frictionless markets and more general concepts in theory of arbitrage in markets with transaction costs.
Essentially, CPSs are interpreted as corresponding auxiliary frictionless markets.
They are very useful from the practical point of view because they provide a straightforward way to verify whether a financial market model satisfies \textbf{NA}.
In this section, we explore the relationship between CPSs and \textbf{NA}.

We begin by defining a CPS in our context.

\begin{definition}
A \emph{consistent pricing system} (CPS) corresponding to the market \\$(B, P^{ask}, P^{bid}, A^{ask}, A^{bid})$  is a quadruplet $\{\bQ, P, A, M\}$ consisting of
\begin{itemize}
  \item[(i)]
  a probability measure $\bQ\in \cZ$;
  \item[(ii)]
  an adapted process $P$ satisfying $P^{bid, *} \leq P \leq P^{ask, *}$;
  \item[(iii)]
  an adapted process $A$ satisfying $A^{ask, *} \leq A \leq A^{bid, *}$;
  \item[(iv)]
 a martingale $M$ under $\bQ$ satisfying $M_t=P_t+\sum_{u=1}^t A_u$ for all $t \in \cT$.
\end{itemize}
\end{definition}

\begin{remark} Since our market is fixed throughout the paper, we shall simply refer to  $\{\bQ, P, A, M\}$  as a CPS, rather than a CPS corresponding to the market\\ $(B, P^{ask}, P^{bid}, A^{ask}, A^{bid})$.
\end{remark}

 For a CPS  $\{\bQ, P, A, M\}$, the process $P$ is interpreted as the corresponding auxiliary  frictionless ex-dividend price process, and the process $A$ has the interpretation of the corresponding auxiliary  frictionless cumulative dividend process,
whereas $M$ is viewed as the corresponding auxiliary frictionless cumulative price process.

The next result establishes a relationship between \textbf{NA} and CPSs in our context.

\begin{proposition}\label{proposition:ConsistentRN}
  If there exists a consistent pricing system \emph{(CPS)}, then the no-arbitrage condition \emph{\textbf{(NA)}} is satisfied.
\end{proposition}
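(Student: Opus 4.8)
The plan is to show that the existence of a CPS $\{\bQ, P, A, M\}$ rules out arbitrage by computing $\bE_{\bQ}[V^*_T(\phi)]$ for an arbitrary self-financing $\phi$ with $V_0(\phi)=0$ and showing it is nonpositive; an arbitrage opportunity would force this expectation to be strictly positive, a contradiction. First I would take such a $\phi$ and apply Proposition~\ref{Intro/LemmaDiscountedSelfFinancing} to write $V^*_T(\phi)$ in the form \eqref{eq:ValueSF}. The key observation is that, because of the sign indicators, each term in that expression can be bounded above by replacing bid/ask prices with the intermediate process $P$ and bid/ask dividends with the intermediate process $A$: using $P^{bid,*}\le P\le P^{ask,*}$ we get $\phi^j_t(1_{\{\phi^j_t\ge 0\}}P^{bid,j,*}_t + 1_{\{\phi^j_t<0\}}P^{ask,j,*}_t)\le \phi^j_t P^j_t$, similarly $-\Delta\phi^j_u(1_{\{\Delta\phi^j_u\ge0\}}P^{ask,j,*}_{u-1}+1_{\{\Delta\phi^j_u<0\}}P^{bid,j,*}_{u-1})\le -\Delta\phi^j_u P^j_{u-1}$, and using $A^{ask,*}\le A\le A^{bid,*}$ we get $\phi^j_u(1_{\{\phi^j_u\ge0\}}\Delta A^{ask,j,*}_u + 1_{\{\phi^j_u<0\}}\Delta A^{bid,j,*}_u)\le \phi^j_u A^j_u$. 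Hence
\begin{equation*}
V^*_T(\phi)\le V_0(\phi)+\sum_{j=1}^N\Big(\phi^j_T P^j_T - \sum_{u=1}^T \Delta\phi^j_u P^j_{u-1} + \sum_{u=1}^T \phi^j_u A^j_u\Big).
\end{equation*}

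Next I would recognize the right-hand side as the discounted gains process of the strategy $(\phi^1,\dots,\phi^N)$ in the auxiliary frictionless market with price $P$ and cumulative dividend $\sum A$, i.e. with cumulative price process $M$. Using Abel summation (the product rule for $\Delta$ quoted in the paper) one rewrites $\phi^j_T P^j_T-\sum_{u=1}^T\Delta\phi^j_u P^j_{u-1}+\sum_{u=1}^T\phi^j_u\Delta A^j_u = \sum_{u=1}^T \phi^j_u(\Delta P^j_u + \Delta A^j_u)=\sum_{u=1}^T\phi^j_u\,\Delta M^j_u$, since $M^j_u=P^j_u+\sum_{w=1}^u A^j_w$ gives $\Delta M^j_u=\Delta P^j_u+A^j_u$. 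Therefore, with $V_0(\phi)=0$,
\begin{equation*}
V^*_T(\phi)\le \sum_{j=1}^N\sum_{u=1}^T \phi^j_u\,\Delta M^j_u.
\end{equation*}
Since $M$ is a $\bQ$-martingale and $\phi$ is predictable, each $\sum_u \phi^j_u\Delta M^j_u$ is (at least formally) a $\bQ$-martingale transform with zero expectation; taking $\bQ$-expectations yields $\bE_{\bQ}[V^*_T(\phi)]\le 0$.

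I would then finish by contradiction: if \textbf{NA} failed there would be a self-financing $\phi$ with $V_0(\phi)=0$, $V^*_T(\phi)\ge 0$, and $\bP(V^*_T(\phi)>0)>0$; since $\bQ\sim\bP$ this gives $\bQ(V^*_T(\phi)>0)>0$ and hence $\bE_{\bQ}[V^*_T(\phi)]>0$, contradicting the bound above. The main obstacle is the integrability/martingale-transform step: $\bE_{\bQ}[\sum_j\sum_u\phi^j_u\Delta M^j_u]=0$ needs justification when $\phi$ is not assumed bounded. I would handle this by first proving the inequality $\bE_{\bQ}[V^*_T(\phi)]\le 0$ for strategies with each $\phi^j$ bounded a.s.\ (where the martingale-transform identity is immediate because $M$ is $\bQ$-integrable by $\bQ\in\cZ$ together with $P^{bid,*}\le P\le P^{ask,*}$, $A^{ask,*}\le A\le A^{bid,*}$), and then invoke Lemma~\ref{lemma:NAbounded}, which says it suffices to verify the no-arbitrage property on strategies with bounded risky positions. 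This neatly sidesteps any delicate truncation argument inside the present proof.
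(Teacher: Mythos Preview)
Your proof is correct and follows essentially the same route as the paper's: bound $V^*_T(\phi)$ above by the frictionless gain $\sum_j(\phi^j_T P^j_T-\sum_u\Delta\phi^j_u P^j_{u-1}+\sum_u\phi^j_u A^j_u)$ via the CPS sandwich inequalities, then use the $\bQ$-martingale property of $M$ on bounded predictable integrands to conclude $\bE_\bQ[V^*_T(\phi)]\le 0$. The only cosmetic difference is that the paper phrases the conclusion as ``$\bQ$ is a risk-neutral measure'' and then cites Theorem~\ref{Theorem: FFTAP}, whereas you inline the contradiction argument and invoke Lemma~\ref{lemma:NAbounded} directly; these are the same argument (the $(ii)\Rightarrow(i)$ direction of the FFTAP is exactly your contradiction step). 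One small slip: in your Abel-summation line you wrote $\Delta A^j_u$ where you mean $A^j_u$ (recall $A$ already plays the role of the discounted dividend increment, so $\Delta M^j_u=\Delta P^j_u+A^j_u$, as you correctly state immediately afterward).
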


\begin{proof}
Suppose there exists a CPS, call it $\{\bQ, P, A, M\}$, and suppose $\phi\in \cS$ is a trading strategy such that $\phi^j$ is bounded a.s., for $j \in \cJ^*$, and $V_0(\phi)=0$.
In view of Proposition~\ref{Intro/LemmaDiscountedSelfFinancing}, and because $P^{bid} \leq P \leq  P^{ask}$ and $A^{ask} \leq A \leq A^{bid}$, we deduce that
\begin{equation*}
  V^*_T(\phi) \leq \sum_{j=1}^N\Big(\phi^{j}_TP^{ j}_T+\sum_{u=1}^T(-\Delta\phi^j_uP^{j}_{u-1}+\phi^{ j}_u A^{j}_u)\Big).
\end{equation*}
Since $M=P+\sum_{u=1}^{\cdot} A_u$ is a martingale under $\bQ$, and because $\phi^j$ is bounded a.s., for $j \in \cJ^*$, we have
\begin{align*}
\mathbb{E}_{\bQ}[V^*_T(\phi)]&\leq\sum_{j=1}^N \mathbb{E}_{\bQ}\Big[\phi^{j}_TP^{ j}_T+\sum_{u=1}^T(-\Delta\phi^j_uP^{j}_{u-1}+\phi^{ j}_u A^{j}_u)\Big]\\
&=\sum_{j=1}^N \sum_{u=1}^T\mathbb{E}_{\bQ}\Bigg[\Delta\phi^j_u \mathbb{E}_{\bQ}\Big[P^{j}_T+\sum_{w=1}^T A^{j}_w-P^{j}_{u-1}-\sum_{w=1}^{u-1} A^{j}_w\Big| \cF_{u-1}\Big]\Bigg]\\
&=\sum_{j=1}^N \sum_{u=1}^T\mathbb{E}_{\bQ}\Bigg[\Delta\phi^j_u \mathbb{E}_{\bQ}\Big[M^j_T-M^j_{u-1}\Big| \cF_{u-1}\Big]\Bigg]\\
&=0.
\end{align*}
Therefore $\bQ$ is a risk-neutral measure.
According to Theorem~\ref{Theorem: FFTAP}, \textbf{NA} holds.
\end{proof}

At this point, a natural question to ask is whether there exists a CPS whenever \textbf{NA} is satisfied.
In general, this is still an open question.
However, for the special case in which there are no transaction costs in the dividends paid by the securities, $A^{ask}=A^{bid}$, will show in Theorem~\ref{theorem:ConsistentRNFrictionless} that there exists a CPS if and only if \textbf{NAEF} is satisfied.

 Proposition \ref{proposition:ConsistentRN} is important from the modeling point of view because it provides a sufficient condition for a model to satisfy \textbf{NA}.
  In the next example, we construct a model for which there exists a CPS.

\begin{example}\label{example:CDSConsistent}
Lets consider the CDS specified in Example \ref{example:CDS}.
Recall that  the cumulative dividend processes $A^{ask}$ and $A^{bid}$ corresponding to the CDS are defined as
\begin{align*}
&  A^{ask}_t:=1_{\{\tau \leq t\}} \delta -\kappa^{ask} \sum_{u=1}^t 1_{\{u <\tau\}}, \quad  A^{bid}_t:=1_{\{\tau \leq t\}} \delta -\kappa^{bid} \sum_{u=1}^t 1_{\{u <\tau\}}
\end{align*}
for all $t \in \cT^*$.
Let us fix any probability measure $\bQ$ equivalent to $\bP$.
We postulate that the ex-dividend prices $P^{ask}$ and $P^{bid}$ satisfy
\begin{align*}
  P^{ask, *}_t&=\bE_\bQ \Big[\sum_{u=t+1}^T A^{bid, *}_u \Big| \cF_t\Big], \quad  P^{bid, *}_t=\bE_\bQ \Big[\sum_{u=t+1}^T A^{ask, *}_u \Big| \cF_t\Big],
\end{align*}
for all $t \in \cT^*$.
By substituting $A^{ask, *}$ and $A^{bid, *}$ into the equations for $P^{ask, *}$ and $P^{bid, *}$ above, we see that
\begin{align*}
  P^{ask, *}_t&=\bE_\bQ \Big[ 1_{\{t<\tau \leq T\}}B^{-1}_{\tau}\delta-\kappa^{bid} \sum_{u=t+1}^T B^{-1}_u 1_{\{u <\tau\}}\Big| \cF_t\Big],\\
  P^{bid, *}_t&=\bE_\bQ \Big[ 1_{\{t<\tau \leq T\}}B^{-1}_{\tau}\delta-\kappa^{ask} \sum_{u=t+1}^T B^{-1}_u 1_{\{u <\tau\}}\Big| \cF_t\Big].
\end{align*}
For a fixed  $ \kappa \in [\kappa^{bid}, \kappa^{ask}]$, we define
  \begin{align*}
    A_t&:=B^{-1}_{t}\big(1_{\{\tau =t\}}\delta-\kappa  1_{\{t <\tau\}}\big), &\quad & t \in \cT^*,\\
    P_t&:=\bE_\bQ \Big[\sum_{u=t+1}^T A_u \Big| \cF_t\Big]=\bE_\bQ \Big[ 1_{\{t<\tau \leq T\}}B^{-1}_{\tau}\delta-\kappa \sum_{u=t+1}^T B^{-1}_u 1_{\{u <\tau\}}\Big| \cF_t\Big], & \quad  & t \in \cT,\\
    M_t&:=P_t+\sum_{u=1}^t A_u, & \quad& t \in \cT.
  \end{align*}
The quadruplet $\{\bQ, P, A, M\}$ is a CPS.
 To see this, first observe that $A$ and $P$ are $\bQ$-integrable since $A$ is bounded $\bQ$-a.s.
 Thus, $\bQ \in \cZ$.
  Next,  $M$ satisfies
  \begin{equation*}
    M_t=\bE_\bQ \Big[ 1_{\{\tau \leq T\}}B^{-1}_{\tau}\delta-\kappa \sum_{u=1}^T B^{-1}_u 1_{\{u <\tau\}}\Big| \cF_t\Big], \quad t \in \cT,
  \end{equation*}
so $M$ is  a Doob martingale under $\bQ$.
 Also, since $ \kappa \in [\kappa^{bid}, \kappa^{ask}]$, we have $A^{ask, *} \leq A \leq A^{bid, *}$ and $P^{bid, *} \leq P \leq P^{ask, *}$.
  Thus, $\{\bQ, P, A, M\}$ is a CPS.
 According to Proposition \ref{proposition:ConsistentRN}, we may additionally conclude that the financial market model $\{B, P^{ask}, P^{bid}, A^{ask}, A^{bid}\}$ satisfies \textbf{NA}.
\end{example}

\subsection{Consistent pricing systems under the assumption \texorpdfstring{ $A^{ask}=A^{bid}$}{}}\label{subsection:CPSNoT}
In this section we investigate the relationship between risk-neutral measures and CPSs under the assumption $A^{ask}=A^{bid}$.
Let us denote by $A$ the process $A^{ask}$.
We begin by proving two preliminary lemmas that hold in general (without the assumption $A^{ask} =A^{bid}$).

\begin{lemma}\label{RNInequalitiesST}
  If $\bQ$ is a risk-neutral measure, then
    \begin{align*}
  & P^{bid, j, *}_{\sigma_1} \leq \mathbb{E}_{\bQ}\Big[P^{ask, j, *}_{\sigma_2} +\sum_{u=\sigma_1+1}^{\sigma_2}A^{bid, j, *}_u\Big | \cF_{\sigma_1}\Big],\ \  P^{ask, j, *}_{\sigma_1} \geq \mathbb{E}_{\bQ}\Big[P^{bid, j, *}_{\sigma_2} +\sum_{u=\sigma_1+1}^{\sigma_2}  A^{ask, j, *}_u\Big | \cF_{\sigma_1}\Big],
\end{align*}
 for all $j \in \cJ^*$ and stopping times $0 \leq \sigma_1 < \sigma_2 \leq T$.
\end{lemma}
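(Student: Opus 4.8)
The plan is to fix $j \in \cJ^*$ and a pair of stopping times $0 \le \sigma_1 < \sigma_2 \le T$, and to exhibit, for each bounded $\cF_{\sigma_1}$-measurable random variable, a self-financing strategy with zero initial cost whose holdings in securities are bounded a.s., and then apply the risk-neutral inequality $\bE_\bQ[V^*_T(\phi)] \le 0$ from Definition~\ref{def:RiskNeutral}. Since only security $j$ will be traded, I will suppress the superscript $j$ in the sketch. To prove the first inequality, I would take a set $G \in \cF_{\sigma_1}$ and consider the strategy that, on $G$, buys one unit of security $j$ at time $\sigma_1$ (paying the ask price $P^{ask,*}_{\sigma_1}$, financed by borrowing in $B$), holds it through $[\sigma_1, \sigma_2]$ collecting the long dividends $\Delta A^{ask,*}_u$, and sells it at time $\sigma_2$ at the bid price $P^{bid,*}_{\sigma_1}$... wait, at $P^{bid,*}_{\sigma_2}$; on $G^c$ the strategy is flat. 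One must be careful that ``buy at a stopping time and sell at a later stopping time'' is encoded as a genuine predictable trading strategy: $\phi^j_t = 1_G 1_{\{\sigma_1 < t \le \sigma_2\}}$ is predictable because $\{\sigma_1 < t\} = \{\sigma_1 \le t-1\} \in \cF_{t-1}$ and $\{\sigma_2 \le t - 1\}\in\cF_{t-1}$, and $G \in \cF_{\sigma_1} \subseteq \cF_{t-1}$ on $\{\sigma_1 \le t-1\}$; the savings-account component $\phi^0$ is then determined by the self-financing condition and $V_0(\phi) = 0$. The holdings $\phi^j$ are bounded by $1$.

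Next I would compute $V^*_T(\phi)$ using Proposition~\ref{Intro/LemmaDiscountedSelfFinancing}. With $V_0(\phi) = 0$ and $\phi^j_T = 0$ (since $\sigma_2 \le T$, the position is liquidated by $T$... if $\sigma_2 = T$ one liquidates exactly at $T$, so the terminal-liquidation term in \eqref{eq:ValueSF} contributes $P^{bid,*}_{\sigma_2}$ on $G$), the discounted terminal value telescopes to, on $G$,
\begin{equation*}
V^*_T(\phi) = -P^{ask,*}_{\sigma_1} + P^{bid,*}_{\sigma_2} + \sum_{u=\sigma_1+1}^{\sigma_2} A^{ask,*}_u,
\end{equation*}
and $V^*_T(\phi) = 0$ on $G^c$; here I am writing $A^{ask,*}_u$ for $\Delta A^{ask}_u$ discounted, consistent with the paper's notation $A^{ask,*} := B^{-1}\Delta A^{ask}$. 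Taking $\bE_\bQ[\,\cdot\,]$ and using the risk-neutral inequality gives $\bE_\bQ\big[1_G(-P^{ask,*}_{\sigma_1} + P^{bid,*}_{\sigma_2} + \sum_{u=\sigma_1+1}^{\sigma_2} A^{ask,*}_u)\big] \le 0$ for every $G \in \cF_{\sigma_1}$, which is exactly the conditional-expectation inequality $P^{ask,j,*}_{\sigma_1} \ge \bE_\bQ[P^{bid,j,*}_{\sigma_2} + \sum_{u=\sigma_1+1}^{\sigma_2} A^{ask,j,*}_u \mid \cF_{\sigma_1}]$. For the other inequality, I would run the mirror-image strategy: short one unit of security $j$ at $\sigma_1$ (receiving the bid price $P^{bid,*}_{\sigma_1}$), hold the short through $[\sigma_1,\sigma_2]$ paying the short dividends $\Delta A^{bid,*}_u$, and buy back at $\sigma_2$ at the ask price $P^{ask,*}_{\sigma_2}$; the same computation yields $P^{bid,j,*}_{\sigma_1} \le \bE_\bQ[P^{ask,j,*}_{\sigma_2} + \sum_{u=\sigma_1+1}^{\sigma_2} A^{bid,j,*}_u \mid \cF_{\sigma_1}]$.

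The main obstacle I anticipate is the bookkeeping in verifying that the strategy is genuinely self-financing and that $V^*_T(\phi)$ telescopes cleanly — in particular, checking that all the intermediate savings-account adjustments cancel against the dividend inflows, and handling the ``$t=0$ uses ask/bid for setup but $t \ge 1$ uses the opposite convention'' asymmetry in Definition~\ref{Intro/DefValueProcess} correctly when $\sigma_1$ can equal $0$. Relatedly, one should confirm that integrability is not an issue: since $\phi^j$ is bounded and $\bQ \in \cZ$, Lemma~\ref{lemma:RNIntegrability} guarantees $V^*_T(\phi)$ is $\bQ$-integrable, so all conditional expectations above are well-defined and finite. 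The stopping-time indexing in the sum $\sum_{u=\sigma_1+1}^{\sigma_2}$ should be read pathwise (empty sum when $\sigma_1 = \sigma_2$, but that case is excluded), and I would note explicitly that $\Delta A^{ask}_u = \Delta A^{ask}_u$ collected at time $u$ for a long position held over $[u-1,u]$ matches the right-hand side of the self-financing condition \eqref{eq:SF}, so the dividends enter with exactly the signs claimed in the statement.
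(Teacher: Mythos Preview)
Your proposal is correct and follows essentially the same approach as the paper: construct the buy-and-hold (respectively sell-and-hold) strategy $\phi^j_t = 1_{\{\sigma_1 < t \le \sigma_2\}}\xi$ in security $j$, choose $\phi^0$ to make it self-financing with $V_0(\phi)=0$, compute $V^*_T(\phi)$ via Proposition~\ref{Intro/LemmaDiscountedSelfFinancing}, and apply $\bE_\bQ[V^*_T(\phi)]\le 0$ for all test functions to deduce the conditional inequalities. The only cosmetic difference is that the paper packages both inequalities into a single strategy by taking $\xi_{\sigma_1}\in L^\infty(\Omega,\cF_{\sigma_1},\bP;\bR^N)$ of arbitrary sign and reading off the long and short cases from the indicators $1_{\{\xi^j_{\sigma_1}\ge 0\}}$ and $1_{\{\xi^j_{\sigma_1}<0\}}$, whereas you treat the two signs separately with indicator test functions $1_G$; both routes are equivalent.
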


\begin{proof}
Suppose $\bQ$ is a risk-neutral measure.
For stopping times $0 \leq \sigma_1 < \sigma_2 \leq T$ and random variables $\xi_{\sigma_1}\in L^{\infty}(\Omega, \cF_{\sigma_1}, \bP; \; \bR^{N})$, we define the trading strategy
\begin{equation*}
\theta(\sigma_1, \sigma_2, \xi_{\sigma_1}):=(\theta^0_t(\sigma_1, \sigma_2, \xi_{\sigma_1}), 1_{\{\sigma_1 +1\leq t \leq \sigma_2\}}\xi_{\sigma_1}^1, \dots, 1_{\{\sigma_1 +1\leq t \leq \sigma_2\}}\xi_{\sigma_1}^N)_{t=1}^T,
 \end{equation*}
 where $\theta^0(\sigma_1, \sigma_2, \xi_{\sigma_1})$ is chosen such that $\theta(\sigma_1, \sigma_2, \xi_{\sigma_1})$ is self-financing and\\ $V_0(\theta(\sigma_1, \sigma_2, \xi_{\sigma_1}))=0$.
Due to Proposition~\ref{Intro/LemmaDiscountedSelfFinancing}, the value process associated with $\theta$ satisfies
\begin{align*}
V^*_T(\theta(\sigma_1, \sigma_2, \xi_{\sigma_1}))&= \sum_{j=1}^N 1_{\{\xi_{\sigma_1}^j \geq 0\}}\xi_{\sigma_1}^j \Big(P^{bid, j, *}_{\sigma_2} +\sum_{u=\sigma_1+1}^{\sigma_2} A^{ask, j, *}_u  -P^{ask, j, *}_{\sigma_1}\Big)\\
  & \qquad +\sum_{j=1}^N 1_{\{\xi_{\sigma_1}^j < 0\}}\xi_{\sigma_1}^j\Big(P^{ask, j, *}_{\sigma_2} +\sum_{u=\sigma_1+1}^{\sigma_2} A^{bid, j, *}_u
  -P^{bid, j, *}_{\sigma_1}\Big).
\end{align*}
Since $\bQ$ is a risk-neutral measure,  we have $\bE_\bQ[V^*_T(\theta(\sigma_1, \sigma_2, \xi_{\sigma_1}))] \leq 0$ for all stopping times $0 \leq \sigma_1 < \sigma_2 \leq T$ and $\xi_{\sigma_1} \in L^{\infty}(\Omega, \cF_{\sigma_1}, \bP; \; \bR^{N})$.
Hence, we are able to obtain
   \begin{align*}
&\bE_\bQ\Big[\sum_{j=1}^N 1_{\{\xi_{\sigma_1}^j \geq 0\}}\xi_{\sigma_1}^j \Big(P^{bid, j, *}_{\sigma_2} +\sum_{u=\sigma_1+1}^{\sigma_2} A^{ask, j, *}_u  -P^{ask, j, *}_{\sigma_1}\Big)\\
  & \qquad +\sum_{j=1}^N 1_{\{\xi_{\sigma_1}^j < 0\}}\xi_{\sigma_1}^j\Big(P^{ask, j, *}_{\sigma_2} +\sum_{u=\sigma_1+1}^{\sigma_2} A^{bid, j, *}_u
  -P^{bid, j, *}_{\sigma_1}\Big)\Big] \leq 0,
  \end{align*}
 for  all stopping times $0 \leq \sigma_1 < \sigma_2 \leq T$ and random variables $\xi_{\sigma_1} \in L^{\infty}(\Omega, \cF_{\sigma_1}, \bP; \; \bR^{N})$.
By the tower property of conditional expectations, we get that
   \begin{align*}
&\bE_\bQ\Bigg[\sum_{j=1}^N 1_{\{\xi_{\sigma_1}^j \geq 0\}}\xi_{\sigma_1}^j \bE_\bQ\Big[P^{bid, j, *}_{\sigma_2} +\sum_{u=\sigma_1+1}^{\sigma_2} A^{ask, j, *}_u
  -P^{ask, j, *}_{\sigma_1}\Big| \cF_{\sigma_1}\Big]\\
  & \qquad +\sum_{j=1}^N 1_{\{\xi_{\sigma_1}^j < 0\}}\xi_{\sigma_1}^j\bE_\bQ\Big[P^{ask, j, *}_{\sigma_2} +\sum_{u=\sigma_1+1}^{\sigma_2} A^{bid, j, *}_u
  -P^{bid, j, *}_{\sigma_1}\Big| \cF_{\sigma_1} \Big]\Bigg] \leq 0.
  \end{align*}
  for all stopping times $0 \leq \sigma_1 < \sigma_2 \leq T$ and random variables $\xi_{\sigma_1} \in L^{\infty}(\Omega, \cF_{\sigma_1}, \bP; \; \bR^{N})$.
  This implies that the claim is satisfied.
\end{proof}

The next result is motivated by Theorem~4.5 in Cherny~\cite{Cherny2007a}.
We will denote by $\cT_t$  the set of stopping times in $\{t, t+1, \dots, T\}$, for all $t \in \cT$.
\begin{lemma}\label{lemma: Inequalities1}
Suppose $\bQ$ is a risk-neutral measure, and let
\begin{align*}
  &X^{b, j}_s:= \underset{\sigma \in \cT_s}{\emph{\esssup}}\;\mathbb{E}_{\bQ}\Big[P^{bid, j, *}_\sigma+ \sum_{u=1}^{\sigma}A^{ask, j, *}_u \Big | \cF_s\Big],\\
  &X^{a, j}_s:= \underset{\sigma\in \cT_s}{\emph{\essinf}}\;\mathbb{E}_{\bQ}\Big[P^{ask, j, *}_\sigma+ \sum_{u=1}^{\sigma}A^{bid, j, *}_u \Big | \cF_s\Big],
\end{align*}
for all $j \in \cJ^*$ and $s \in \cT$.
Then $X^{b}$ is a supermartingale and $X^a$ is a submartingale, both under $\bQ$, and satisfy $X^b \leq X^a$.
\end{lemma}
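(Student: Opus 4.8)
The plan is to identify $X^{b,j}$ and $X^{a,j}$ as, respectively, the Snell envelope and the ``lower Snell envelope'' (the essential-infimum analogue) of the processes
\[
V^{b,j}_t:=P^{bid, j, *}_t+\sum_{u=1}^t A^{ask, j, *}_u,\qquad
V^{a,j}_t:=P^{ask, j, *}_t+\sum_{u=1}^t A^{bid, j, *}_u ,\qquad t\in\cT ,
\]
which are $\bQ$-integrable for every $t$ since $\bQ\in\cZ$. Thus $X^{b,j}_s=\esssup_{\sigma\in\cT_s}\mathbb{E}_{\bQ}[V^{b,j}_\sigma\mid\cF_s]$ and $X^{a,j}_s=\essinf_{\sigma\in\cT_s}\mathbb{E}_{\bQ}[V^{a,j}_\sigma\mid\cF_s]$, and the standard finite-horizon optimal-stopping theory supplies the backward recursions $X^{b,j}_s=\max\{V^{b,j}_s,\ \mathbb{E}_{\bQ}[X^{b,j}_{s+1}\mid\cF_s]\}$ and $X^{a,j}_s=\min\{V^{a,j}_s,\ \mathbb{E}_{\bQ}[X^{a,j}_{s+1}\mid\cF_s]\}$ for $s<T$, with $X^{b,j}_T=V^{b,j}_T$ and $X^{a,j}_T=V^{a,j}_T$. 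In particular $X^{b,j}_s\ge\mathbb{E}_{\bQ}[X^{b,j}_{s+1}\mid\cF_s]$ and $X^{a,j}_s\le\mathbb{E}_{\bQ}[X^{a,j}_{s+1}\mid\cF_s]$; since $\bQ$-integrability is inherited from that of the $V$'s, $X^{b,j}$ is a $\bQ$-supermartingale and $X^{a,j}$ a $\bQ$-submartingale. This part uses neither the risk-neutral hypothesis nor Lemma~\ref{RNInequalitiesST}.

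The assertion $X^{b,j}\le X^{a,j}$ I would prove by backward induction on $s\in\cT$. The base case $s=T$ is $V^{b,j}_T\le V^{a,j}_T$, which holds because the Bid--Ask Assumption gives $P^{bid}\le P^{ask}$ and $\Delta A^{ask}\le\Delta A^{bid}$, hence $A^{ask,j,*}_u\le A^{bid,j,*}_u$ (recall $B_u>0$) and so $\sum_{u=1}^T A^{ask,j,*}_u\le\sum_{u=1}^T A^{bid,j,*}_u$. For the inductive step, assume $X^{b,j}_{s+1}\le X^{a,j}_{s+1}$. By the two recursions above it suffices to check the four inequalities
\[
V^{b,j}_s\le V^{a,j}_s,\qquad \mathbb{E}_{\bQ}[X^{b,j}_{s+1}\mid\cF_s]\le\mathbb{E}_{\bQ}[X^{a,j}_{s+1}\mid\cF_s],
\]
\[
V^{b,j}_s\le\mathbb{E}_{\bQ}[X^{a,j}_{s+1}\mid\cF_s],\qquad \mathbb{E}_{\bQ}[X^{b,j}_{s+1}\mid\cF_s]\le V^{a,j}_s ,
\]
for then $X^{b,j}_s=\max\{\cdot,\cdot\}\le\min\{\cdot,\cdot\}=X^{a,j}_s$. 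The first is again the Bid--Ask Assumption, and the second follows from the induction hypothesis and monotonicity of the conditional expectation.

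For the last two I would use that in finite horizon the hitting times $\hat\sigma:=\inf\{t\ge s+1:\ X^{b,j}_t=V^{b,j}_t\}$ and $\check\sigma:=\inf\{t\ge s+1:\ X^{a,j}_t=V^{a,j}_t\}$ lie in $\cT_{s+1}$ and attain the envelopes, so that $X^{b,j}_{s+1}=\mathbb{E}_{\bQ}[V^{b,j}_{\hat\sigma}\mid\cF_{s+1}]$ and $X^{a,j}_{s+1}=\mathbb{E}_{\bQ}[V^{a,j}_{\check\sigma}\mid\cF_{s+1}]$. By the tower property $\mathbb{E}_{\bQ}[X^{b,j}_{s+1}\mid\cF_s]=\mathbb{E}_{\bQ}[P^{bid,j,*}_{\hat\sigma}+\sum_{u=1}^{\hat\sigma}A^{ask,j,*}_u\mid\cF_s]$; since $s$ is deterministic and $\hat\sigma\ge s+1$ we have $0\le s<\hat\sigma\le T$, so the second inequality of Lemma~\ref{RNInequalitiesST} with $\sigma_1=s$, $\sigma_2=\hat\sigma$ gives $P^{ask,j,*}_s\ge\mathbb{E}_{\bQ}[P^{bid,j,*}_{\hat\sigma}+\sum_{u=s+1}^{\hat\sigma}A^{ask,j,*}_u\mid\cF_s]$; adding the $\cF_s$-measurable $\sum_{u=1}^{s}A^{ask,j,*}_u$ and then bounding $A^{ask,j,*}_u\le A^{bid,j,*}_u$ yields $\mathbb{E}_{\bQ}[X^{b,j}_{s+1}\mid\cF_s]\le P^{ask,j,*}_s+\sum_{u=1}^{s}A^{bid,j,*}_u=V^{a,j}_s$, the fourth inequality. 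The third is symmetric: $\mathbb{E}_{\bQ}[X^{a,j}_{s+1}\mid\cF_s]=\mathbb{E}_{\bQ}[P^{ask,j,*}_{\check\sigma}+\sum_{u=1}^{\check\sigma}A^{bid,j,*}_u\mid\cF_s]$, the first inequality of Lemma~\ref{RNInequalitiesST} with $\sigma_1=s<\sigma_2=\check\sigma$ gives $P^{bid,j,*}_s\le\mathbb{E}_{\bQ}[P^{ask,j,*}_{\check\sigma}+\sum_{u=s+1}^{\check\sigma}A^{bid,j,*}_u\mid\cF_s]$, and bounding $A^{ask,j,*}_u\le A^{bid,j,*}_u$ on $\{1,\dots,s\}$ gives $V^{b,j}_s=P^{bid,j,*}_s+\sum_{u=1}^s A^{ask,j,*}_u\le\mathbb{E}_{\bQ}[X^{a,j}_{s+1}\mid\cF_s]$. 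This closes the induction.

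The main obstacle is the sandwich bound $X^{b,j}\le X^{a,j}$, and the device that makes it clean is to argue one step at a time through the Bellman recursions: this way Lemma~\ref{RNInequalitiesST} is applied only to pairs of stopping times whose smaller member is the deterministic time $s$ and whose larger member ($\hat\sigma$ or $\check\sigma$) is $\ge s+1$ everywhere — exactly the strict-ordering situation that Lemma~\ref{RNInequalitiesST} requires. It thereby avoids the measurability difficulties one would face trying to compare $\mathbb{E}_{\bQ}[V^{b,j}_{\sigma_1}\mid\cF_s]$ with $\mathbb{E}_{\bQ}[V^{a,j}_{\sigma_2}\mid\cF_s]$ for arbitrary $\sigma_1,\sigma_2\in\cT_s$, since the events $\{\sigma_1<\sigma_2\}$ need not be $\cF_s$-measurable. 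A secondary point is just to quote carefully the two standard facts used above: the backward Bellman recursions, and that the indicated hitting times attain the envelopes.
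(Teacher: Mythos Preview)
Your proof is correct, and the super/submartingale part is handled identically to the paper (both simply invoke standard Snell-envelope theory). Your proof of the sandwich $X^{b,j}\le X^{a,j}$, however, takes a genuinely different route. The paper fixes arbitrary stopping times $\tau_1,\tau_2\in\cT_t$, conditions on $\cF_{\tau_1\wedge\tau_2}$, and uses that the event $\{\tau_1\le\tau_2\}$ is $\cF_{\tau_1\wedge\tau_2}$-measurable to split the difference $\mathbb{E}_{\bQ}[V^{b,j}_{\tau_1}\mid\cF_t]-\mathbb{E}_{\bQ}[V^{a,j}_{\tau_2}\mid\cF_t]$ into two pieces on which Lemma~\ref{RNInequalitiesST} and the Bid--Ask Assumption can be applied directly. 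This is a one-shot argument that compares any pair of stopping times at once, but it requires the optional-stopping/tower manipulation with $\cF_{\tau_1\wedge\tau_2}$ and an (implicit) handling of the edge set $\{\tau_1=\tau_2\}$. Your backward-induction approach via the Bellman recursions is more elementary: you only ever apply Lemma~\ref{RNInequalitiesST} with a deterministic $\sigma_1=s$ and a stopping time $\sigma_2\ge s+1$, so the strict-ordering hypothesis is automatic and no conditioning on $\cF_{\tau_1\wedge\tau_2}$ is needed. The price you pay is having to establish all four cross-inequalities at each step, but each is immediate. Both arguments rest on the same two ingredients (the Bid--Ask Assumption and Lemma~\ref{RNInequalitiesST}); yours trades a slick but slightly delicate stopping-time manipulation for a longer but completely routine induction.
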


\begin{proof}
Let us fix $j \in \cJ^*$.
The processes $X^{b,j}$ and $X^{a, j}$ are Snell envelopes, so $X^{a, j}$ is a supermartingale and $X^{b. j}$ is a submartingale, both under $\bQ$ (see for instance, F\"{o}llmer and Schied~\cite{FollmerSchiedBook2004}).

We now show that $X^{b, j}\leq X^{a, j}$.
Let us define the process
\begin{equation*}
  X^j_t:=\mathbb{E}_{\bQ}\Big[P^{bid, j, *}_{\tau_1}+\sum_{u=1}^{\tau_1}A^{ask, j, *}_u\Big| \cF_t\Big] -\mathbb{E}_{\bQ}\Big[P^{ask, j, *}_{\tau_2}+\sum_{u=1}^{\tau_2}A^{bid, j, *}_u\Big| \cF_t\Big], \quad t \in \cT.
\end{equation*}
For any stopping times $\tau_1, \tau_2 \in \cT_t$, we see that
\begin{align*}
X^j_t& = \mathbb{E}_{\bQ}\Big[\mathbb{E}_{\bQ}\Big[P^{bid, j, *}_{\tau_1}+\sum_{u=1}^{\tau_1}A^{ask, j, *}_u-P^{ask, j, *}_{\tau_2}-\sum_{u=1}^{\tau_2} A^{bid, j, *}_u\big| \cF_{\tau_1 \wedge \tau_2} \Big] \Big| \cF_t\Big] \\
  & =\mathbb{E}_{\bQ}\Big[1_{\{\tau_1 \leq \tau_2\}}\Big(P^{bid, j, *}_{\tau_1}+\sum_{u=1}^{\tau_1}A^{ ask, j, *}_u-\mathbb{E}_{\bQ}\Big[P^{ask, j, *}_{\tau_2} +\sum_{u=1}^{\tau_2} A^{bid, j, *}_u \Big| \cF_{\tau_1}\Big]\Big) \Big| \cF_t \Big] \\
& + \mathbb{E}_{\bQ}\Big[1_{\{\tau_1 > \tau_2\}}\Big(\mathbb{E}_{\bQ}\Big[P^{bid, j, *}_{\tau_1}+\sum_{u=1}^{\tau_1}A^{ask, j, *}_u \Big| \cF_{\tau_2}\Big]-P^{ask, j, *}_{\tau_2}-\sum_{u=1}^{\tau_2} A^{bid, j, *}_u \Big) \Big| \cF_t \Big].
\end{align*}
After rearranging terms, we deduce that
\begin{align*}
 X^j_t&=\mathbb{E}_{\bQ}\Big[1_{\{\tau_1 \leq \tau_2\}}P^{bid, j, *}_{\tau_1}+\sum_{u=1}^{\tau_1}(A^{ ask, j, *}_u-A^{bid, j, *}_u)\\
&-1_{\{\tau_1 \leq \tau_2\}}\mathbb{E}_{\bQ}\Big[P^{ask, j, *}_{\tau_2} +\sum_{u=\tau_1+1}^{\tau_2} A^{bid, j, *}_u \Big| \cF_{\tau_1}\Big] \Big| \cF_t \Big] \\
& + \mathbb{E}_{\bQ}\Big[1_{\{\tau_1 > \tau_2\}}\mathbb{E}_{\bQ}\Big[P^{bid, j, *}_{\tau_1}+\sum_{u=\tau_2+1}^{\tau_1}A^{ask, j, *}_u \Big| \cF_{\tau_2}\Big]\\
&-1_{\{\tau_1 > \tau_2\}}(P^{ask, j, *}_{\tau_2}+\sum_{u=1}^{\tau_2} (A^{bid, j, *}_u-A^{ask, j, *}_u) ) \Big| \cF_t \Big].
\end{align*}
Because $A^{ask, *} \leq A^{bid, *}$, we are able to obtain
\begin{align}
  X^j_t& \leq \mathbb{E}_{\bQ}\Big[1_{\{\tau_1 \leq \tau_2\}}\Big(P^{bid, j, *}_{\tau_1}-\mathbb{E}_{\bQ}\Big[P^{ask, j, *}_{\tau_2} +\sum_{u=\tau_1+1}^{\tau_2} A^{bid, j, *}_u \Big| \cF_{\tau_1}\Big]\Big) \Big| \cF_t \Big] \notag\\
& \quad + \mathbb{E}_{\bQ}\Big[1_{\{\tau_1 > \tau_2\}}\Big(\mathbb{E}_{\bQ}\Big[P^{bid, j, *}_{\tau_1}+\sum_{u=\tau_2+1}^{\tau_1}A^{ask, j, *}_u \Big| \cF_{\tau_2}\Big]-P^{ask, j, *}_{\tau_2} \Big) \Big| \cF_t \Big]. \label{eq:Snell1}
\end{align}
Since $\bQ$ is a risk-neutral measure, we see from  Lemma~\ref{RNInequalitiesST} and \eqref{eq:Snell1} that $X^j_t \leq 0$.
The stopping times $\tau_1$ and $\tau_2$ are arbitrary in the definition of $X^j$, so we conclude that $X^{b, j} \leq X^{a, j}$.
\end{proof}

The next theorem gives sufficient and necessary conditions for there to exist a CPS (cf. Cherny~\cite{Cherny2007a}; Kabanov et al.~\cite{KabanovRasonyiStricker2002}; Schachermayer~\cite{Schachermayer2004}).

\begin{theorem}\label{theorem:ConsistentRNFrictionless}
Under the assumption that $A^{ask}=A^{bid}$, there exists a consistent pricing system \emph{(CPS)} if and only if the no-arbitrage condition under the efficient condition \textbf{\emph{(NAEF)}} is satisfied.
\end{theorem}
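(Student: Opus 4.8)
The plan is to establish the two implications separately; the substance lies in showing that $\textbf{NAEF}$ implies the existence of a CPS. For the reverse implication, suppose a CPS $\{\bQ,P,A,M\}$ exists: I would re-run the computation in the proof of Proposition~\ref{proposition:ConsistentRN} verbatim to see that $\bQ$ is a risk-neutral measure, so that $\textbf{NAEF}$ holds by Theorem~\ref{Theorem: FFTAP}; this part uses nothing about the hypothesis $A^{ask}=A^{bid}$.

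For the main direction, assume $\textbf{NAEF}$ and invoke Theorem~\ref{Theorem: FFTAP} to obtain a risk-neutral measure $\bQ$, so in particular $\bQ\in\cZ$. Since $A^{ask}=A^{bid}$, we have $A^{ask,*}=A^{bid,*}$; I would take this common ($\bQ$-integrable) process as the process $A=(A^1,\dots,A^N)$ of the CPS, which is in fact the only choice compatible with requirement (iii) of the definition. It then remains to build, for each $j\in\cJ^*$, a $\bQ$-martingale $M^j$ such that $P^j:=M^j-\sum_{u=1}^{\cdot}A^j_u$ satisfies $P^{bid,j,*}\le P^j\le P^{ask,j,*}$: then $P=(P^1,\dots,P^N)$ is adapted, $M^j_t=P^j_t+\sum_{u=1}^tA^j_u$, and $\{\bQ,P,A,M\}$ is a CPS. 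Fixing $j$ and recalling the processes $X^{b,j},X^{a,j}$ of Lemma~\ref{lemma: Inequalities1} (under $A^{ask}=A^{bid}$ the dividend term in each is the common $A^j$), that lemma supplies a $\bQ$-supermartingale $X^{b,j}$ and a $\bQ$-submartingale $X^{a,j}$ with $X^{b,j}\le X^{a,j}$, and evaluating at the constant stopping time $s$ gives $Y^{b,j}_s\le X^{b,j}_s$ and $X^{a,j}_s\le Y^{a,j}_s$, where $Y^{b,j}_s:=P^{bid,j,*}_s+\sum_{u=1}^sA^j_u$ and $Y^{a,j}_s:=P^{ask,j,*}_s+\sum_{u=1}^sA^j_u$.

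The heart of the argument is to interpolate a $\bQ$-martingale between $X^{b,j}$ and $X^{a,j}$. I would set $M^j_0:=X^{b,j}_0$ and, having obtained $M^j_t$ with $X^{b,j}_t\le M^j_t\le X^{a,j}_t$, define recursively
\[
\lambda_t:=\frac{\bE_{\bQ}[X^{a,j}_{t+1}\mid\cF_t]-M^j_t}{\bE_{\bQ}[X^{a,j}_{t+1}\mid\cF_t]-\bE_{\bQ}[X^{b,j}_{t+1}\mid\cF_t]},\qquad M^j_{t+1}:=\lambda_tX^{b,j}_{t+1}+(1-\lambda_t)X^{a,j}_{t+1},
\]
with $\lambda_t:=1$ on the $\cF_t$-set where the denominator vanishes. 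Because $X^{b,j}$ is a supermartingale and $X^{a,j}$ a submartingale, $\bE_{\bQ}[X^{b,j}_{t+1}\mid\cF_t]\le X^{b,j}_t\le M^j_t\le X^{a,j}_t\le\bE_{\bQ}[X^{a,j}_{t+1}\mid\cF_t]$, so $\lambda_t$ is $\cF_t$-measurable with values in $[0,1]$; hence $M^j_{t+1}$ is $\cF_{t+1}$-measurable, lies in $[X^{b,j}_{t+1},X^{a,j}_{t+1}]$, and satisfies $\bE_{\bQ}[M^j_{t+1}\mid\cF_t]=M^j_t$. Iterating up to $t=T$ yields a $\bQ$-martingale $M^j$ with $X^{b,j}\le M^j\le X^{a,j}$, hence $Y^{b,j}\le M^j\le Y^{a,j}$; it is $\bQ$-integrable, being squeezed between the $\bQ$-integrable $Y^{b,j},Y^{a,j}$, and $P^j:=M^j-\sum_{u=1}^{\cdot}A^j_u$ lands in $[P^{bid,j,*},P^{ask,j,*}]$. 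Doing this for every $j$ completes the construction.

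The main obstacle will be precisely this interpolation step, and the reason it can be carried out is that Lemma~\ref{lemma: Inequalities1} produces exactly the right configuration — a supermartingale sitting below a submartingale — for the convex-combination recursion with an $\cF_t$-measurable weight to remain feasible at every time. The hypothesis $A^{ask}=A^{bid}$ enters twice: it collapses requirement (iii) of a CPS to a single admissible dividend process, and it makes the two bounding processes of Lemma~\ref{lemma: Inequalities1} differ only through $P^{bid,*}$ versus $P^{ask,*}$, so that the interpolating martingale automatically defines an admissible ex-dividend price process; without it this alignment breaks down, which is why (as the authors remark) the general case remains open.
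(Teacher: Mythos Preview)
Your proposal is correct and follows essentially the same route as the paper: obtain a risk-neutral measure $\bQ$ via Theorem~\ref{Theorem: FFTAP}, invoke Lemma~\ref{lemma: Inequalities1} to get the Snell-envelope supermartingale/submartingale pair $X^{b,j}\le X^{a,j}$, and then thread a $\bQ$-martingale between them by a recursive $\cF_t$-measurable convex combination. The only cosmetic differences are that the paper initializes at $M^j_0=X^{a,j}_0$ rather than $X^{b,j}_0$, writes the weight as $1-\lambda_t$ in your parametrization, and reserves the symbols $Y^{a,j},Y^{b,j}$ for the Snell envelopes themselves (so your $Y^{a,j},Y^{b,j}$ collide with the paper's notation); none of this affects the argument.
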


\begin{proof}
Necessity is shown in Proposition \ref{proposition:ConsistentRN}, so we only prove sufficiency.
Suppose that \textbf{NAEF} is satisfied.
According to Theorem~\ref{Theorem: FFTAP}, there exists a risk-neutral measure $\bQ$.
By Lemma~\ref{RNInequalitiesST},
  \begin{align*}
  & P^{bid, j, *}_{\sigma_1} \leq \mathbb{E}_{\bQ}\Big[P^{ask, j, *}_{\sigma_2} +\sum_{u=\sigma_1+1}^{\sigma_2}A^{j, *}_u\Big | \cF_{\sigma_1}\Big],  \quad P^{ask, j, *}_{\sigma_1} \geq \mathbb{E}_{\bQ}\Big[P^{bid, j, *}_{\sigma_2} +\sum_{u=\sigma_1+1}^{\sigma_2}  A^{j, *}_u\Big | \cF_{\sigma_1}\Big],
\end{align*}
 for all $j \in \cJ^*$ and stopping times $0 \leq \sigma_1 <\sigma_2 \leq T$.
Now, let us define the processes
\begin{align}
Y^{b, j}_t&:=\underset{\sigma \in \cT_t}{\esssup}\; \mathbb{E}_{\bQ}\Big[P^{bid, j, *}_{\sigma}+ \sum_{u=t+1}^{\sigma}A^{j, *}_u \Big | \cF_t\Big], \nonumber \\
Y^{a, j}_t&:=\underset{\sigma \in \cT_t}{\essinf}\; \mathbb{E}_{\bQ}\Big[P^{ask, j, *}_{\sigma}+ \sum_{u=t+1}^{\sigma}A^{j, *}_u \Big | \cF_t\Big],   \nonumber\\
X^{b, j}_t&:=Y^{b, j}_t+\sum_{u=1}^tA^{ j, *}_t, \quad X^{a, j}_t:=Y^{a, j}_t+\sum_{u=1}^tA^{ j, *}_t, \label{eq:defofY}
\end{align}
for all $t \in \cT$ and $j \in \cJ^*$.
From Lemma~\ref{lemma: Inequalities1}, we know that under $\bQ$ the process $X^a$ is a submartingale and the process $X^b$ is a supermartingale, and that they satisfy $X^b \leq X^a$.

For $t =0, 1, \dots, T-1$ and $j \in \cJ^*$, recursively define
\begin{align}\label{eq:consistent1}
M^j_0&:=Y^{a,j}_0, \quad P^j_0 :=Y^{a, j}_0, \quad P^j_{t+1}:=\lambda^j_t Y^{a, j}_{t+1}+(1-\lambda^j_t)Y^{b, j}_{t+1},\\ \notag
M^j_{t+1}&:=P^{j}_{t+1}+\sum_{u=1}^{t+1}A^{ j}_{u},
\end{align}
where $\lambda^j_t$ satisfies
\begin{equation}
\lambda^j_{t}=
\begin{dcases}\label{eq:defofLambda}
\frac{M^j_t-\mathbb{E}_{\bQ}[ X^{b, j}_{t+1}| \cF_t]}{\mathbb{E}_{\bQ}[X^{a, j}_{t+1}-X^{b, j}_{t+1}| \cF_t]},  &\textrm{if} \quad \mathbb{E}_{\bQ}[X^{a, j}_{t+1}| \cF_t] \neq \mathbb{E}_{\bQ}[ X^{b, j}_{t+1}| \cF_t], \\
 \frac12,  &\quad \textrm{otherwise}.
\end{dcases}
\end{equation}

\noindent
Lets fix $j\in\cJ^*$ for the rest of the proof.

\noindent
\textbf{Step 1:}
In this step, we show that the processes above are well defined and adapted.
First, note that $P_0$ and $M_0$ are well defined, and that, by \eqref{eq:defofLambda},
\begin{equation*}
  \lambda^j_0=\frac{M^j_0-\mathbb{E}_{\bQ}[ X^{b, j}_{1}| \cF_0]}{\mathbb{E}_{\bQ}[X^{a, j}_{1}- X^{b, j}_{1}| \cF_0]}\;, \quad \textrm{or} \quad \lambda^j_0=\frac12.
\end{equation*}
Thus, $\lambda^j_0$ is well defined and $\cF_0$-measurable.
Next, we compute $P_1^j$ and $M_1^j$, and consequently we compute $\lambda_1^j$; all of them being $\cF_1$-measurable.
Inductively,  we see that $P_t^j$, $M_t^j$, and  $\lambda_t^j$,  for $t=2,\ldots,T$ are well defined and $\cF_t$-measurable.

\noindent
\textbf{Step 2:}
We inductively show that $\lambda^j_t\in [0, 1]$ for $t=0, 1, \dots, T-1$.
We first show that $\lambda^j_0\in [0, 1]$.
If $\mathbb{E}_{\bQ}[X^{a, j}_{1}-X^{b, j}_{1}|\cF_0]=0$, then $\lambda^j_0\in [0, 1]$ automatically, so suppose that $\mathbb{E}_{\bQ}[X^{a, j}_{1}-X^{b, j}_{1}|\cF_0]>0$.
Now, by the definition of $M^j$, we have that $M^j_0=X^{a, j}_0$, so \eqref{eq:defofLambda} gives that
\begin{equation}\label{eq:Consistent1}
\lambda^j_0=\frac{X^{a, j}_0-\mathbb{E}_{\bQ}[ X^{b, j}_{1}| \cF_0]}{\mathbb{E}_{\bQ}[X^{a, j}_{1}- X^{b, j}_{1}| \cF_0]}\;.
\end{equation}
The process $X^{a, j}$ is a submartingale under $\bQ$, so it immediately follows that $\lambda^j_0 \leq 1$.
On the other hand, since $X^{b, j}$ is a supermartingale under $\bQ$,
\begin{equation*}
\lambda^j_0\geq \frac{X^{a, j}_0- X^{b, j}_{0}}{\mathbb{E}_{\bQ}[X^{a, j}_{1}- X^{b, j}_{1}| \cF_0]}\;.
\end{equation*}
Because $X^{a, j}_0 \geq X^{b, j}_0$, we deduce that $\lambda^j_0 \geq 0$.

Suppose that $\lambda^j_t \in [0, 1]$ for $t=0, 1, \dots, T-2$.
We now prove that $\lambda^j_{T-1} \in [0,1]$.
If $\bE_\bQ[X^{a, j}_T-X^{b, j}_T |\cF_{T-1}]=0$, then $\lambda^j_{T-1}=1/2$, so assume that $\bE_\bQ[X^{a, j}_T-X^{b, j}_T |\cF_{T-1}]>0$.
According to \eqref{eq:defofLambda} and the definition of $M^j$, we have that
\begin{equation}\label{eq:eqofLambda2}
  \lambda^j_{T-1}=\frac{\lambda^j_{T-2}X^{a, j}_{T-1}+(1-\lambda^j_{T-2})X^{b, j}_{T-1}-\bE_\bQ[X^{b, j}_T |\cF_{T-1}]}{\bE_\bQ[X^{a, j}_T- X^{b, j}_T |\cF_{T-1}]}\;.
\end{equation}
Since $\lambda^j_{T-2} \leq 1$, and because $X^{b, j}$ is a supermartingale under $\bQ$, we have that
\begin{equation*}
  \lambda^j_{T-1}\geq \frac{\lambda^j_{T-2}(X^{a, j}_{T-1}-X^{b, j}_{T-1})}{\bE_\bQ[X^{a, j}_T- X^{b, j}_T |\cF_{T-1}]}\;.
\end{equation*}
Because $X^{a, j} \geq X^{b, j}$, we arrive at $\lambda^j_{T-1} \geq 0$.
Now, since $X^{a,j}_{T-1} \geq X^{b,j}_{T-1}$ and $\lambda^j_{T-2} \leq 1$, we see from \eqref{eq:eqofLambda2} that
\begin{equation*}
  \lambda^j_{T-1}\leq \frac{X^{a, j}_{T-1}-\bE_\bQ[X^{b, j}_T |\cF_{T-1}]}{\bE_\bQ[X^{a, j}_T- X^{b, j}_T |\cF_{T-1}]}\;.
\end{equation*}
The process $X^{a,j}$ is a submartingale under $\bQ$, so it follows that $\lambda^j_{T-1} \leq 1$.
We conclude that $\lambda^j_t \in [0, 1]$ for $t=0, 1, \dots, T-1$.

\noindent
\textbf{Step 3:}
Next, we show that $M$ is a martingale under $\bQ$.
First we note that by \eqref{eq:defofY} and \eqref{eq:consistent1} we have
\begin{equation}\label{eq:MandX}
M_{t+1}^j = \lambda_t^j X_{t+1}^{a,j} + (1-\lambda_t^j) X_{t+1}^{b,j}.
\end{equation}
From here, the $\bQ$-integrability of $M^j$ follows from  $\bQ$-integrability of $X^{a,j}, X^{b,j}$ and boundedness of  $\lambda^j$.
From \eqref{eq:defofLambda} and \eqref{eq:MandX}, we get that $\bE_\bQ[M^j_{t+1} | \cF_t]=M^j_t$, for $t =0, 1, \dots, T-1$. Hence, $M^j$ is a martingale under $\bQ$.

\noindent
\textbf{Step 4:}
We continue by showing that $P^j$ satisfies $P^{bid, *, j} \leq P^j \leq P^{ask, *, j}$.
Let us first show that $P^{bid,j}_0 \leq P^j_0 \leq P^{ask, j}_0$.
By definition of $P^j_0$, we have that $P^j_0=Y^{a, j}_0$, and by \eqref{eq:defofY} we see that $Y^{a, j}_0=X^{a,j}_0$.
Therefore, the claim holds since $P^{bid, j}_0 \leq  X^{a, j}_0 \leq P^{ask, j}_0$.

We proceed by proving that $P^{bid,j}_t \leq P^j_t \leq P^{ask, j}_t$ for all $t \in \{1, \dots, T\}$.
Towards this, let $t \in \{1, \dots, T\}$.
By the definition of $P^j_{t}$, we have $P^j_{t}=\lambda^j_{t-1}Y^{a, j}_{t}+(1-\lambda^j_{t-1})Y^{b, j}_{t}$.
From \eqref{eq:defofY}, it is true that $X^{a, j}_t \geq X^{b,j}_t$ if and only if $Y^{a, j}_t \geq Y^{b, j}_t$.
Also, since $t \in \cT_t$, we see from \eqref{eq:defofY}  that $Y^{b, j}_t\geq  P^{bid, j, *}_t$ and $Y^{a, j}_t \leq P^{ask, j}_t$.
According to Step 1,  $\lambda^j_{t-1} \in [0,1]$.
So, putting everything together, we obtain
\begin{equation*}
  P^{bid, j}_t \leq Y^{b, j}_t \leq P^j_t \leq Y^{a, j}_t \leq P^{ask, j}_t.
\end{equation*}
We conclude that $\{\bQ, P, A, M\}$  is a CPS.
\end{proof}

\section{Superhedging and subhedging theorem}\label{section:SuperHedging}

In this section, we  define the superhedging ask and subhedging bid prices for a dividend-paying contingent claim, and then we provide  an important representation theorem for these prices.
The representation theorem is important because it provides an alternative way of computing the superhedging ask and superhedging bid prices.
 Also, it is an application of the Fundamental Theorem of Asset Pricing: the theorem relates how the no-arbitrage condition (and hence the existence of risk-neutral measures) is related to the pricing of contingent claims.

 For results related to this topic,  both for discrete-time and continuous-time markets with transaction costs, we refer to, among others,  Soner, Shreve, and Cvitanic~\cite{SonerShreveCvitanic1995}; Levental and Skorohod~\cite{LeventalSkorohod1997}; Cvitanic, Pham, and Touzi~\cite{CvitanicPhamTouzi1999}; Touzi~\cite{Touzi1999}; Bouchard and Touzi~\cite{BouchardTouzi2000}; Kabanov, R\'{a}sonyi, and Stricker~\cite{KabanovRasonyiStricker2002};  Schachermayer~\cite{Schachermayer2004}; Campi and Schachermayer~\cite{CampiSchachermayer2006};  Cherny~\cite{Cherny2007a}; Pennanen~\cite{Pennanen2011, Pennanen2011b,Pennanen2011c,Pennanen2011d}.
Our contribution to this literature is that we consider dividend-paying securities such as swap contracts  as hedging securities.

A \emph{contingent claim} $D$ is any a.s. bounded, $\bR$-valued, $\bF$-adapted process.
Here, $D$ is interpreted as the spot cash flow process (not the cumulative cash flow process).
We remark that the boundedness assumption on contingent claims is satisfied for fixed income securities.

Let us now define the set of self-financing trading strategies initiated at time $t \in \{0,1, \dots, T-1\}$ with bounded components ($j=1, \dots, N$) as
\begin{equation*}
  \cS(t):=\Big\{ \phi \in \cS  : \phi^j \text{ is bounded a.s. for } j \in \cJ^*, \; \phi_s=0 \textrm{ for all } s \leq t\Big\},
\end{equation*}
and the set of attainable claims at zero cost initiated at time $t \in \{0,1, \dots, T-1\}$ as
\begin{align*}
  \cK(t):=\Big\{V^*_T(\phi): \phi \in \cS(t)  \; \text{such that } V_0(\phi)=0\Big\}.
\end{align*}

\begin{remark}\label{remark:S(t)}\mbox{}
\begin{itemize}
\item[(i)]
$\cS(t)$ and $\cK(t)$ are closed with respect to multiplication by  random variables in\\ $L^{\infty}_+(\Omega, \cF_t, \bP; \bR)$.\footnote{$L^{\infty}_+(\Omega, \cF_t, \bP; \bR):=\{X \in L^{\infty}(\Omega, \cF_t, \bP; \bR): X \geq 0\}.$}
\item[(ii)]
$\cS \supset \cS(0)\supset \cS(1)\supset \cdots \supset \cS(T-1)$ and $ \cK \supset \cK(0) \supset \cK(1)  \supset \cdots \supset\cK(T-1)$.
Moreover, if $\bQ$ is a risk-neutral measure, then $\bE_\bQ[K] \leq 0$ for all $K \in \cK(t)$, for $t=0, 1, \dots, T-1$.
\end{itemize}
\end{remark}

\noindent
We proceed by defining the main objects of this section.

\begin{definition}\label{def:SuperHedgingPrices}
  The discounted superhedging ask and subhedging bid prices of a contingent claim $D$ at time $t \in \{0, \dots, T-1\}$ are defined as $\pi^{ask}_t(D):=\essinf \: \cW^a(t, D)$ and $\pi^{bid}_t(D):=\esssup \: \cW^b(t, D)$, where
  \begin{align*}
    &\cW^a(t, D):= \Big\{W \in L^0(\Omega, \cF_t, \bP; \bR) : -W+\sum_{u=t+1}^TD^*_u  \in \cK(t) -L^0_+(\Omega, \cF_T, \bP; \bR)\Big\},\\
    &\cW^b(t,D):=\Big\{W \in L^0(\Omega, \cF_t, \bP; \bR) : W-\sum_{u=t+1}^TD^*_u  \in \cK(t) -L^0_+(\Omega, \cF_T, \bP; \bR)\Big\}.
  \end{align*}
\end{definition}

\noindent
Note that $\pi^{ask}_t(D)=-\pi^{bid}_t(-D)$ and
  \begin{align*}
    &\cW^a(t, D)= \Big\{W \in L^0(\Omega, \cF_t, \bP; \bR) : \: \exists \: K \in \cK(t) \text{ such that }\sum_{u=t+1}^TD^*_u  \leq K+W\Big\},\\
    &\cW^b(t,D)=\Big\{W \in L^0(\Omega, \cF_t, \bP; \bR) : \: \exists \: K \in \cK(t) \text{ such that } -\sum_{u=t+1}^TD^*_u  \leq K-W \Big\}.
  \end{align*}

\noindent
\begin{remark}\label{remark:SuperNAEF}\mbox{}
  \begin{itemize}
  \item[(i)]
 For each $t \in \{0, 1, \dots, T-1\}$, the prices $\pi^{ask}_t(D)$ and $\pi^{bid}_t(D)$ have the following interpretations:
   The price $\pi^{ask}_t(D)$ is interpreted as the least discounted cash amount  $W$ at time $t$ so that the gain $-W+\sum_{u=t+1}^TD^*_u$ can be superhedged at zero cost.
  On the other hand,  the random variable $\pi^{bid}_t(D)$ is interpreted as the greatest discounted cash amount $W$ at time $t$  so that the gain $W-\sum_{u=t+1}^TD^*_u$ can be superhedged at zero cost.
    \item[(ii)]
In view of (i) above,  it is unreasonable for the discounted ex-dividend ask price at time $t\in \{0, 1, \dots, T-1\}$  of a contingent claim $D$ to be a.s. greater than $\pi^{ask}_t(D)$, and for the ex-dividend bid price at time $t\in \{0, 1, \dots, T-1\}$  of a contingent claim $D$ to be a.s. less than $\pi^{bid}_t(D)$.
\item[(iii)]
Direction of trade matters: a market participant can buy a contingent claim $D$ at price $\pi^{ask}_t(D)$ and  sell $D$ at price $\pi^{bid}_t(D)$.
This is in contrast to frictionless markets, where a contingent claim can be bought and sold at the same price.
\item[(iv)]
The prices $\pi^{ask}_t(D)$ and $\pi^{bid}_t(D)$ satisfy $\pi^{ask}_t(D)<\infty$ and $\pi^{bid}_t(D)>-\infty$.
 Indeed, since $0 \in \cK(t)$, $1 \in L^0_{++}(\Omega, \cF_t, \bP; \bR)$,  and $\sum_{u=t+1}^TD^*_u$ is a.s. bounded, say by $M$, we have that $-M+\sum_{u=t+1}^TD^*_u \in L^0_-(\Omega, \cF_T, \bP; \bR)$.
 Thus, $\pi^{ask}_t(D)\leq M$.
 Similarly, $\pi^{bid}_t(D) \geq -M$.
  \end{itemize}
\end{remark}

\noindent Next, we define \emph{the sets of extended attainable claims initiated at time $t \in \{0, 1, \dots, T-1\}$ associated with cash amount $W \in L^0(\Omega, \cF_t, \bP; \bR)$}:
\begin{align*}
  \cK^a(t, W)&:=\cK(t)+\Big\{\xi \Big( -W+\sum_{u=t+1}^T D^*_u\Big):\;  \xi \in L^{\infty}_+(\Omega, \cF_t, \bP; \bR) \Big\}, \\
  \cK^b(t, W)&:= \cK(t)+\Big\{\xi\Big( W-\sum_{u=t+1}^T D^*_u\Big):\; \xi \in L^{\infty}_+(\Omega, \cF_t, \bP; \bR)\Big\}.
\end{align*}

\begin{remark}\mbox{}
\begin{itemize}
\item[(i)]
The sets $\cK^a(t, W)$ and $\cK^b(t, W)$ are closed with respect to multiplication by random variables in the set $L^{\infty}_+(\Omega, \cF_t, \bP; \bR)$, and in view of Lemma~\ref{Arbitrage/LemmaConvexCone} they are convex cones.
Also, $\cK(t) \subset \cK^a(t, W)\cap \cK^b(t, W)$ since $0\in L^{\infty}_+(\Omega, \cF_t, \bP; \bR)$.
\item[(ii)]
In view of Proposition~\ref{Intro/LemmaDiscountedSelfFinancing},
\begin{equation}\label{eq:Super1}
  \Big\{\xi \Big( -\pi^{ask}_t(D)+\sum_{u=t+1}^T D^*_u\Big):\;  \xi \in L^{\infty}_+(\Omega, \cF_t, \bP; \bR) \Big\}
\end{equation}
is the set of all discounted terminal values associated with zero-cost, self-financing, buy-and-hold trading strategies in the contingent claim $D$ with  discounted ex-dividend ask price $\pi^{ask}_t(D)$.
On the other hand, the convex cone
\begin{equation}\label{eq:Super2}
  \Big\{\xi \Big( \pi^{bid}_t(D)-\sum_{u=t+1}^T D^*_u\Big):\;  \xi \in L^{\infty}_+(\Omega, \cF_t, \bP; \bR) \Big\}
\end{equation}
is the set of all discounted terminal values associated with zero-cost, self-financing, sell-and-hold trading strategies in the contingent claim $D$ with discounted ex-dividend bid price $\pi^{bid}_t(D)$.
\end{itemize}
\end{remark}

We will now introduce definitions related to the sets of extended attainable claims.
For each $t \in \{0, 1, \dots, T-1\}$ and $X \in L^0(\Omega, \cF_t, \bP; \bR)$, a probability measure $\bQ$ is \emph{risk-neutral for} $\cK^a(t, X)$ ($\cK^b(t, X)$) if $\bQ \in \cZ$ and $X$ is $\bQ$-integrable, and if $\bE_\bQ[K] \leq 0$ for all $K \in \cK^a(t, X)$ ($K \in \cK^b(t, X)$).
We denote by $\cR^a(t, X)$ ($\cR^b(t, X)$) the set of all risk-neutral measures $\bQ$ for $\cK^a(t, X)$ ($\cK^b(t, X)$) so that $\d \bQ /\d \bP \in L^{\infty}(\Omega, \cF_T, \bP; \bR)$.
We say that \textbf{NA} holds for $\cK^a(t, X)$ if $\cK^a(t, X) \cap L^0_+(\Omega, \cF_T, \bP; \bR)=\{0\}$, and likewise we say that \textbf{NA} holds for $\cK^b(t, X)$ if $\cK^b(t, X) \cap L^0_+(\Omega, \cF_T, \bP; \bR)=\{0\}$.

\noindent
We will say that $\cK^a(t, X)$  satisfies \textbf{EF} if
\begin{equation*}
  \Big\{(\phi, \xi) \in \cS(t) \times L^{\infty}_+(\Omega, \cF_t, \bP; \bR):V_0(\phi)=0, \; V^*_T(\phi)+\xi \Big(-X+\sum_{u=t+1}^T D^*_u\Big)=0\Big\}=\{(0, 0)\},
\end{equation*}
and say that $\cK^b(t, X)$ satisfies \textbf{EF} if
 \begin{equation*}
\Big  \{(\phi, \xi) \in \cS(t) \times L^{\infty}_+(\Omega, \cF_t, \bP; \bR): V_0(\phi)=0, \; V^*_T(\phi)+\xi \Big( X- \sum_{u=t+1}^T D^*_u\Big)=0\Big\}=\{(0, 0)\}.
\end{equation*}

\begin{remark}
According to Lemma~\ref{lemma:ClosednessExtended}, for any $t \in \{0, 1, \dots, T-1\}$ and  $X \in L^0(\Omega, \cF_t, \bP; \bP)$,  \textbf{NAEF} holds for $\cK^a(t, X)$ ($\cK^b(t, X)$) if and only if $\cR^a(t, X) \neq \emptyset$ ($\cR^b(t, X) \neq \emptyset$).
\end{remark}

For each $t \in \{0, 1, \dots, T-1\}$, we denote by $\cR(t)$ the set of all risk-neutral measures for $\cK(t)$ so that $\d \bQ /\d \bP \in L^{\infty}(\Omega, \cF_T, \bP; \bR)$.
Specifically, we define $\cR(t)$ as
\begin{align*}
  \cR(t)&:=\Big\{\bQ \in \cZ:\; \bE_\bQ[K] \leq 0 \text{ for all $K \in \cK(t)$}\Big\} .
\end{align*}
We note that $\cR^a(t, X)\cup \cR^b(t, X) \subseteq \cR(t)$ for any $X \in  L^0(\Omega, \cF_t, \bP; \bR)$ since $\cK(t) \subseteq \cK^a(t, X)\cap \cK^b(t, X)$.
Also, by the definition of a risk-neutral measure, it immediately follows that any risk-neutral measure $\bQ$ (as in Definition~\ref{def:RiskNeutral}) satisfies $\bQ\in \cR(t)$ for any $t \in \{0,1, \dots, T-1\}$.

The next technical lemma is needed to derive the dual representations of the superhedging ask and subhedging bid prices.

\begin{lemma}\label{lemma:RNnontrivial}\mbox{}
\begin{itemize}
\item[(i)]
For each $t \in \{0, 1, \dots, T-1\}$, if $\cR(t) \neq \emptyset$ and $\bQ \in \cR(t)$, then we have that $\bE_\bQ[K|\cF_t] \leq 0$ $\bQ$-a.s. for all $K \in \cK(t)$.
\item[(ii)]
For each $t \in \{0, 1, \dots, T-1\}$ and  $X \in L^0(\Omega, \cF_t, \bP; \bR)$, if $\cR^a(t, X) \neq \emptyset$ and $\bQ \in \cR^a(t, X)$, then  we have that $\bE_{\bQ}[K^a|\cF_t] \leq 0$ $\bQ$-a.s. for all $K^a \in \cK^a(t, X)$.
\item[(iii)]
For each $t \in \{0, 1, \dots, T-1\}$ and  $X \in L^0(\Omega, \cF_t, \bP; \bR)$, if $\cR^b(t, X) \neq \emptyset$ and $\bQ \in \cR^b(t, X)$, then  we have that $\bE_{\bQ}[K^b|\cF_t] \leq 0$ $\bQ$-a.s. for all $K^b \in \cK^b(t, X)$.
\end{itemize}
\end{lemma}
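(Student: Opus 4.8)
The plan is to prove all three parts by a single localization argument that exploits the fact that each of the cones $\cK(t)$, $\cK^a(t,X)$, $\cK^b(t,X)$ is stable under multiplication by random variables in $L^{\infty}_+(\Omega,\cF_t,\bP;\bR)$; this is Remark~\ref{remark:S(t)}(i) for $\cK(t)$, and the remark following the definition of $\cK^a(t,W)$ and $\cK^b(t,W)$ for the other two. In each case the hypothesis $\bQ\in\cR(t)$ (resp. $\bQ\in\cR^a(t,X)$, $\bQ\in\cR^b(t,X)$) guarantees $\bQ\in\cZ$ together with $\bE_\bQ[K]\leq 0$ for every $K$ in the relevant cone; in the first part, Lemma~\ref{lemma:RNIntegrability} moreover makes $\bE_\bQ[K\,|\,\cF_t]$ genuinely well defined. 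So it suffices to prove the following abstract claim: if $\cC\subseteq L^0(\Omega,\cF_T,\bP;\bR)$ is stable under multiplication by $L^{\infty}_+(\Omega,\cF_t,\bP;\bR)$, if $\bQ\sim\bP$, if every $K\in\cC$ satisfies $\bE_\bQ[K^+]<\infty$, and if $\bE_\bQ[K]\leq 0$ for all $K\in\cC$, then $\bE_\bQ[K\,|\,\cF_t]\leq 0$ $\bQ$-a.s. for every $K\in\cC$.

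To prove this claim I would fix $K\in\cC$ and argue by contradiction: suppose $\bQ(\Lambda)>0$, where $\Lambda:=\{\bE_\bQ[K\,|\,\cF_t]>0\}\in\cF_t$. Since $1_{\Lambda}\in L^{\infty}_+(\Omega,\cF_t,\bP;\bR)$, stability of $\cC$ yields $1_{\Lambda}K\in\cC$, hence $\bE_\bQ[1_{\Lambda}K]\leq 0$. On the other hand, by the defining property of conditional expectation, $\bE_\bQ[1_{\Lambda}K]=\bE_\bQ[1_{\Lambda}\bE_\bQ[K\,|\,\cF_t]]$, and the integrand $1_{\Lambda}\bE_\bQ[K\,|\,\cF_t]$ is nonnegative everywhere and strictly positive on $\Lambda$, a set of positive $\bQ$-measure, so $\bE_\bQ[1_{\Lambda}K]>0$ --- a contradiction. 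Therefore $\bE_\bQ[K\,|\,\cF_t]\leq 0$ $\bQ$-a.s.

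It then remains to apply this claim to each case: $\cC=\cK(t)$ for part (i), with $\bQ$-integrability of each $K=V^*_T(\phi)$ supplied by Lemma~\ref{lemma:RNIntegrability} (using $\bQ\in\cR(t)\subseteq\cZ$ and that $\phi^j$ is bounded a.s. for $j\in\cJ^*$); $\cC=\cK^a(t,X)$ for part (ii); and $\cC=\cK^b(t,X)$ for part (iii). I expect no genuine obstacle here --- the whole content is the stability-under-$\cF_t$-multiplication trick --- and the only point needing a word of care is the well-definedness of $\bE_\bQ[K^a\,|\,\cF_t]$ and $\bE_\bQ[K^b\,|\,\cF_t]$ in parts (ii) and (iii), where no integrability lemma is available for the $\xi(-X+\sum_{u=t+1}^{T}D^*_u)$ term. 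This is handled by observing that $\bE_\bQ[K^a]\leq 0$ is part of the hypothesis $\bQ\in\cR^a(t,X)$, so $\bE_\bQ[(K^a)^+]<\infty$, whence $\bE_\bQ[K^a\,|\,\cF_t]$ is well defined with values in $[-\infty,\infty)$ and the identity $\bE_\bQ[1_{\Lambda}K^a]=\bE_\bQ[1_{\Lambda}\bE_\bQ[K^a\,|\,\cF_t]]$ still makes sense, the integrand on $\Lambda$ being nonnegative; the same remark applies verbatim to $\cK^b(t,X)$.
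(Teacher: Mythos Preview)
Your proof is correct and, for part (i), essentially identical to the paper's: both multiply $K$ by the indicator $1_{\Lambda}$ of the bad set $\Lambda=\{\bE_\bQ[K\,|\,\cF_t]>0\}\in\cF_t$, use stability of $\cK(t)$ under $L^{\infty}_+(\cF_t)$-multiplication to keep $1_{\Lambda}K$ in the cone, and derive the contradiction $0\geq\bE_\bQ[1_{\Lambda}K]=\bE_\bQ[1_{\Lambda}\bE_\bQ[K\,|\,\cF_t]]>0$.

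For parts (ii) and (iii) there is a minor structural difference worth noting. You apply the same abstract localization argument directly to the cones $\cK^a(t,X)$ and $\cK^b(t,X)$, relying on their $L^{\infty}_+(\cF_t)$-stability. The paper instead decomposes $K^a=K+\xi(-X+\sum_{u=t+1}^{T}D^*_u)$ with $K\in\cK(t)$, invokes part (i) to dispose of $\bE_\bQ[K\,|\,\cF_t]$, and then localizes only the $\xi$-term by setting $\vartheta:=1_{\Omega^t}\xi$. Your route is shorter and conceptually cleaner; the paper's route makes the dependence on (i) explicit but gains nothing beyond that. On integrability, the paper simply asserts that every $K^a$ is $\bQ$-integrable because $D$ is bounded (glossing over the fact that $X$ is merely in $L^0$), while your observation that the hypothesis $\bE_\bQ[K^a]\leq 0$ already forces $\bE_\bQ[(K^a)^+]<\infty$ is actually the more robust justification.
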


\begin{proof}
We only prove \emph{(i)} and \emph{(ii)}.
The proof of \emph{(iii)} is very similar to the proof of \emph{(ii)}.
We fix $t \in \{1, \dots, T-1\}$ throughout the proof.
Observe that in view of Lemma~\ref{lemma:RNIntegrability}, we have that for each $\bQ \in \cR(t)$,  any $K \in \cK$ is $\bQ$-integrable.
 Moreover, because any contingent claim is bounded a.s.,  for each $\bQ \in \cR^a(t, X)$ ($\bQ \in \cR^b(t, X)$), any $K^a \in \cK^a(t, X)$ ($K^b \in \cK^b(t, W)$) is $\bQ$-integrable.

 \noindent
\emph{(i)}:
We prove by contradiction.
Let  $\bQ \in \cR(t)$, and suppose that there exists   and $K \in \cK(t)$ such that $\bE_{\bQ}[K| \cF_t](\omega)>0$ for all $ \omega \in \Omega^t$, where $\Omega^t \subseteq \Omega$ and $\bP(\Omega^t)>0$.
Note that $\Omega^t \in \cF_t$ since $\bE_{\bQ}[K|\cF_t]$ is $\cF_t$-measurable.
By definition of $\cK(t)$, there exists $\phi \in \cS(t)$ with $V_0(\phi)=0$ such that $K=V^*_T(\phi)$.
Define the process $\psi:=1_{\Omega^t}\phi$.
 Since $\Omega^t$ is $\cF_{t}$-measurable and $\cS(t)$ is closed with respect to multiplication by random variables in the set $L^{\infty}_+(\Omega, \cF_t, \bP; \bR)$, we have that $\psi \in \cS(t)$.
  Moreover, $V_0(\psi)=1_{\Omega^t}V_0(\phi)=0$ because $1_{\Omega^t}$ is nonnegative.
  Therefore, $V^*_T(\psi) \in \cK(t)$.
Since $V^*_T(\psi)=1_{\Omega^t}V^*_T(\phi)=1_{\Omega^t}K$, we have that $\bE_\bQ[V^*_T(\psi)]=\bE_{\bQ}[1_{\Omega^t} \bE_\bQ[K|\cF_t]]>0$, which contradicts that $\bQ \in \cR(t)$.

\noindent
\emph{(ii)}:
As in \emph{(i)}, we will prove by contradiction.
Let  $X \in L^0(\Omega, \cF_t, \bP; \bR)$ and  $\bQ \in \cR^a(t, X)$, and assume that there exist  $K \in \cK(t)$ and $\xi \in  L^{\infty}_+(\Omega, \cF_t, \bP; \bR)$ such that
\begin{equation*}
-\xi(\omega) X(\omega)+\bE_\bQ\Big[K+\xi \sum_{u=t+1}^T D^*_u\Big|\cF_t\Big](\omega) > 0, \quad  \omega \in \Omega^t,
\end{equation*}
where $\Omega^t \subseteq \Omega$ and $\bP(\Omega^t)>0$.
 Since  $\cR^a(t,X) \subset \cR(t)$, we have that $\bQ \in \cR(t)$.
 In view of \emph{(i)} above, it follows that $\bE_\bQ[K|\cF_t] \leq 0$.
 Thus,
 \begin{equation*}
-\xi(\omega) X(\omega)+\xi(\omega)\bE_\bQ\Big[ \sum_{u=t+1}^T D^*_u\Big|\cF_t\Big](\omega) > 0, \quad  \omega \in \Omega^t.
\end{equation*}
We proceed by defining the $\cF_t$-measurable random variable $\vartheta:=1_{\Omega^t}\xi$.
Because $\Omega^t \in \cF_t$, it is true that $\vartheta \in L^{\infty}_+(\Omega, \cF_t, \bP; \bR)$.
Now, by the tower property of conditional expectations we obtain
\begin{equation*}
  \bE_{\bQ} \Big[\vartheta \Big( -X+\sum_{u=t+1}^T D^*_u\Big)\Big]=  \bE_{\bQ} \Bigg[1_{\Omega^t}\Big( -\xi X+\xi\bE_{\bQ} \Big[\sum_{u=t+1}^T D^*_u\Big| \cF_t\Big]\Big)\Bigg]>0.
\end{equation*}
This contradicts that $\bQ \in \cR^a(t, X )$ since $\vartheta \in L^{\infty}_+(\Omega, \cF_t, \bP; \bR)$ and $0 \in \cK(t)$.
\end{proof}

We are ready to prove the main result of this section: the dual representations of the superhedging ask price and subhedging bid price.

\begin{theorem}\label{theorem:SuperhedgingNoNTrivial}
Suppose that the no-arbitrage condition under the efficient friction assumption \textbf{\emph{(NAEF)}} is satisfied.
Let $t \in \{0, 1, \dots, T-1\}$ and $D$ be a contingent claim.
Then the following hold:
\begin{itemize}
 \item[(i)]
The essential infimum of $\cW^a(t, D)$ and the essential supremum of $\cW^b(t, D)$ are attained.
\item[(ii)]
Suppose that for each $t \in \{0, 1, \dots, T-1\}$ and $X \in L^0(\Omega, \cF_t, \bP; \bR)$,  the efficient friction assumption \textbf{\emph{(EF)}} holds for $\cK^a(t, X)$ and $\cK^b(t, X)$.
Then the discounted superhedging ask  and subhedging bid prices for contingent claim $D$ at time $t$ satisfy
 \begin{align}
    &\pi^{ask}_t(D)= \;\underset{\bQ \in \cR(t)}{\emph{\esssup}}\; \bE_\bQ \Big[\sum_{u=t+1}^T D^*_u \Big | \cF_t\Big],\label{eq:SuperhedgingAsk1NT}\\
    &\pi^{bid}_t(D)=  \; \underset{\bQ \in \cR(t)}{\emph{\essinf}} \;\bE_\bQ \Big[\sum_{u=t+1}^T D^*_u \Big | \cF_t\Big].\label{eq:SuperhedgingBid1NT}
 \end{align}
 \end{itemize}
\end{theorem}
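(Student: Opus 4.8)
The plan is to prove the two dual formulas by the classical superhedging-duality scheme: an easy inequality valid for every $\bQ\in\cR(t)$, and a reverse inequality obtained from a closedness property of the cone of superhedgeable values together with a Kreps--Yan separation; the attainment statement in (i) then falls out of order-directedness and the same closedness. First I would collect the ingredients available under \textbf{NAEF}: Theorem~\ref{Theorem: FFTAP} with Remark~\ref{remark:S(t)}(ii) gives $\cR(t)\neq\emptyset$; re-running the arguments of Lemma~\ref{Arbitrage/LemmaConvexCone} and Theorem~\ref{lemma:KClosed} with the harmless restriction that the strategies start at time $t$ and have bounded security components shows that $\cK(t)-L^0_+(\Omega,\cF_T,\bP;\bR)$ is a $\bP$-closed convex cone containing $L^0_-(\Omega,\cF_T,\bP;\bR)$; and pasting two strategies on an $\cF_t$-measurable partition (legitimate by Remark~\ref{remark:S(t)}(i), noting that $V_0$ and $V^*_T$ decompose over such a partition although they are only positively homogeneous) shows that $\cW^a(t,D)$ is downward directed and stable under adding a nonnegative $\cF_t$-random variable, that $\cW^b(t,D)$ is upward directed and stable under subtracting one, and that $\{\bE_\bQ[\sum_{u=t+1}^T D^*_u\mid\cF_t]:\bQ\in\cR(t)\}$ is upward directed (paste the densities along $\cF_t$-sets).

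For the easy inequalities in (ii): if $\bQ\in\cR(t)$ and $W\in\cW^a(t,D)$, write $-W+\sum_{u=t+1}^T D^*_u\le K$ with $K\in\cK(t)$, apply $\bE_\bQ[\,\cdot\mid\cF_t]$ and Lemma~\ref{lemma:RNnontrivial}(i) to obtain $W\ge\bE_\bQ[\sum_{u=t+1}^T D^*_u\mid\cF_t]$; taking the essential infimum over $W$ and then the essential supremum over $\bQ$ gives ``$\ge$'' in \eqref{eq:SuperhedgingAsk1NT}, and the symmetric computation with $\cW^b(t,D)$ gives ``$\le$'' in \eqref{eq:SuperhedgingBid1NT}. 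For (i), by downward directedness choose $W^m\downarrow\pi^{ask}_t(D)$ with $W^m\in\cW^a(t,D)$; the limit is a genuine $\cF_t$-random variable (bounded above by Remark~\ref{remark:SuperNAEF}(iv), bounded below by the weak inequality applied to any fixed $\bQ\in\cR(t)$), and since $-W^m+\sum_{u=t+1}^T D^*_u\in\cK(t)-L^0_+$ converges a.s., $\bP$-closedness yields $-\pi^{ask}_t(D)+\sum_{u=t+1}^T D^*_u\in\cK(t)-L^0_+$, i.e. $\pi^{ask}_t(D)\in\cW^a(t,D)$; the statement for $\pi^{bid}_t(D)$ is symmetric.

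The substantive part is the reverse inequality. Put $\hat\pi:=\esssup_{\bQ\in\cR(t)}\bE_\bQ[\sum_{u=t+1}^T D^*_u\mid\cF_t]$, which is $\bP$-bounded because $D$ is; it suffices to show $\hat\pi\in\cW^a(t,D)$, since then $\pi^{ask}_t(D)\le\hat\pi$ and, with the weak inequality, \eqref{eq:SuperhedgingAsk1NT} follows (and \eqref{eq:SuperhedgingBid1NT} symmetrically). Assume $-\hat\pi+\sum_{u=t+1}^T D^*_u\notin\cK(t)-L^0_+$. As in the proof of Theorem~\ref{Theorem: FFTAP}, reweight $\bP$ to an equivalent $\widetilde\bP\in\cZ$ with bounded density so that $\cC:=(\cK(t)-L^0_+)\cap L^1(\Omega,\cF_T,\widetilde\bP;\bR)$ is a closed convex cone with $\cC\cap L^1_+=\{0\}$; the bounded random variable $-\hat\pi+\sum_{u=t+1}^T D^*_u$ lies in $L^1(\widetilde\bP)\setminus\cC$, so Hahn--Banach separates it from $\cC$, and adding a small multiple of the strictly positive functional supplied by the Kreps--Yan Theorem~\ref{Arbitrage/LemmaYanKreps} produces a strictly positive $f\in L^\infty(\widetilde\bP)$ with $\bE_{\widetilde\bP}[fK]\le0$ for all $K\in\cC$ and $\bE_{\widetilde\bP}[f(-\hat\pi+\sum_{u=t+1}^T D^*_u)]>0$. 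The measure $\bQ^*$ with $\d\bQ^*/\d\widetilde\bP\propto f$ then lies in $\cZ$ with bounded $\bP$-density, and $\bE_{\bQ^*}[K]\le0$ holds first for bounded $K\in\cK(t)$ and then, via the integrability in Lemma~\ref{lemma:RNIntegrability} and truncation of the security positions, for all $K\in\cK(t)$; hence $\bQ^*\in\cR(t)$. But then $\bE_{\bQ^*}[\sum_{u=t+1}^T D^*_u\mid\cF_t]\le\hat\pi$ by the definition of $\hat\pi$, so the tower property forces $\bE_{\bQ^*}[-\hat\pi+\sum_{u=t+1}^T D^*_u]\le0$, a contradiction. (Alternatively this step may be organized through the extended cones: using (i) and the combining Proposition~\ref{Arbitrage/LemmaNonLinearSF} one checks that \textbf{NAEF} holds for $\cK^a(t,\pi^{ask}_t(D))$ and $\cK^b(t,\pi^{bid}_t(D))$, and then Lemma~\ref{lemma:ClosednessExtended} together with Lemma~\ref{lemma:RNnontrivial}(ii)--(iii) furnishes the required measures.)

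The main obstacle is making the separation step legitimate: establishing $\bP$-closedness of $\cK(t)-L^0_+$ (by transcribing the measurable-selection argument of Theorem~\ref{lemma:KClosed} into the time-$t$, bounded-position setting) and upgrading the Hahn--Banach functional to a strictly positive one whose associated measure genuinely sits in $\cR(t)$ after passing from bounded to unbounded elements of $\cK(t)$; by comparison the directedness and $\cF_t$-pasting bookkeeping is routine, though it must be handled carefully precisely because the value process is only positively homogeneous and not additive.
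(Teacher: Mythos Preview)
Your proof of (i) and of the easy inequality in (ii) coincide with the paper's: both use the $\bP$-closedness of $\cK(t)-L^0_+$ inherited from Theorem~\ref{lemma:KClosed} together with Lemma~\ref{lemma:RNnontrivial}(i). The substantive difference is in the reverse inequality. The paper does \emph{not} separate $-\hat\pi+\sum_{u=t+1}^T D^*_u$ directly from $\cK(t)-L^0_+$; instead, for each $\cF_t$-measurable $X<\pi^{ask}_t(D)$ it argues by contradiction that \textbf{NA} holds for the \emph{extended} cone $\cK^b(t,X)$ (the market in which one may additionally sell the derivative at price $X$), then invokes the \emph{assumed} \textbf{EF} for $\cK^b(t,X)$ to obtain \textbf{NAEF} there, applies Lemma~\ref{lemma:ClosednessExtended} and Lemma~\ref{lemma:RNnontrivial}(iii) to produce $\hat\bQ\in\cR^b(t,X)\subseteq\cR(t)$ with $X\le\bE_{\hat\bQ}\big[\sum_{u=t+1}^T D^*_u\,\big|\,\cF_t\big]$, and finally lets $X\uparrow\pi^{ask}_t(D)$. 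Your direct Hahn--Banach/Kreps--Yan separation of the bounded random variable $-\hat\pi+\sum_{u=t+1}^T D^*_u$ from $\cC=(\cK(t)-L^0_+)\cap L^1(\widetilde\bP)$ is more streamlined and, notably, does not seem to use the extra \textbf{EF} hypothesis on the extended cones at all (since $\cK(t)\subseteq L^1(\widetilde\bP)$ already by Lemma~\ref{lemma:RNIntegrability}, the truncation you mention is not even needed). The paper's route, by contrast, makes explicit which auxiliary one-asset markets must be arbitrage-free and links the dual representation to the families $\cR^a(t,X)$, $\cR^b(t,X)$. One caveat on your parenthetical alternative: the paper works with $\cK^b(t,X)$ for $X<\pi^{ask}_t(D)$, not with $\cK^a(t,\pi^{ask}_t(D))$, and it \emph{assumes} rather than verifies \textbf{EF} for these extended cones.
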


\begin{proof}
Since $\pi^{ask}_t(D)=-\pi^{bid}_t(-D)$ holds for all $t \in \{0, \dots, T-1\}$ and contingent claim $D$, it suffices to show that the essential infimum of $\cW^a(t, D)$  is attained and \eqref{eq:SuperhedgingAsk1NT} holds.
Let us fix $t \in \{0, 1, \dots, T-1\}$ throughout the proof.

\noindent
We first prove \emph{(i)}.
 Let $W^m$ be a sequence decreasing a.s. to $\pi^{ask}_t(D)$, and fix $K^m \in \cK(t)$ and $Z^m \in L^0_+(\Omega, \cF_T, \bP; \bR)$ so that $-W^m+\sum_{u=t+1}^T D^*_u=K^m-Z^m$.
Since a.s. converges implies convergence in probability, we see that the sequence $K^m-Z^m$ converges in probability to some $Y$.
Due to Theorem~\ref{lemma:KClosed}, we have that $\cK(t) -L^0_+(\Omega, \cF_T, \bP; \bR)$ is $\bP$-closed.
Therefore, $Y \in \cK(t) -L^0_+(\Omega, \cF_T, \bP; \bR)$.
This proves that $-\pi^{ask}_t(D)+\sum_{u=t+1}^T D^*_u \in \cK(t)-L^0_+(\Omega, \cF_T, \bP; \bR)$.

\noindent
Next, we show that \emph{(ii)} holds.
We begin by showing that
\begin{equation} \label{eq:Super11}
  \pi^{ask}_t(D)\geq \underset{\bQ \in \cR(t)}{\esssup}\ \bE_\bQ \Big[\sum_{u=t+1}^T D^*_u \Big | \cF_t\Big].
\end{equation}
By \emph{(i)}, we have that $\pi^{ask}_t(D) \in \cW^a(t, D)$, so there exists   $K^* \in \cK(t)$ so that
\begin{equation}\label{eq:Super9}
K^*+\pi^{ask}_t(D)-\sum_{u=t+1}^T D^*_u\geq 0.
\end{equation}
We are assuming that \textbf{NAEF} is satisfied, so according to Theorem~\ref{Theorem: FFTAP} there exists a risk-neutral measure $\bQ^*$.
Because any risk-neutral measure $\bQ$ satisfies $\bQ \in \cR(t)$, we obtain that $\bQ^* \in \cR(t)$.
By taking the conditional expectation with respect to $\cF_t$ under $\bQ^*$ of both sides of the last inequality we deduce that
\begin{equation*}
\pi^{ask}_t(D)+\bE_{\bQ^*} [K^*| \cF_t]\geq \bE_{\bQ^*} \Big[\sum_{u=t+1}^T D^*_u\Big | \cF_t \Big].
\end{equation*}
According to part \emph{(i)} of Lemma~\ref{lemma:RNnontrivial}, we have that  $\bE_{\bQ^*}[K^*|\cF_t] \leq 0$.
As a result,  $\pi^{ask}_t(D) \geq\bE_{\bQ^*} \big[\sum_{u=t+1}^T D^*_u \big| \cF_t\big]$.
Taking the essential supremum of both sides of the last inequality over $\cR(t)$ proves that \eqref{eq:Super11} holds.

Next, we show that
\begin{equation}\label{eq:Super12}
  \pi^{ask}_t(D)\leq \underset{\bQ \in \cR(t)}{\esssup}\: \bE_\bQ \Big[\sum_{u=t+1}^T D^*_u\Big| \cF_t \Big].
\end{equation}
By \emph{(i)}, we have that $\pi^{ask}_t(D) > -\infty$, so we may take  $X \in L^0(\Omega, \cF_t, \bP; \bR)$ so that $\pi^{ask}_t(D) > X$.
We now prove by contradiction that \textbf{NA} holds for $\cK^b(t, X)$.
Towards this aim, we assume that there exist $K \in \cK(t)$, $\xi \in L^{\infty}_+(\Omega, \cF_t, \bP; \bR)$, and $\Omega^0 \subseteq \Omega$ with $\bP(\Omega^0)>0$ so that
\begin{equation}\label{eq:Super8}
  K+\xi\Big(X-\sum_{u=t+1}^TD^*_u\Big) \geq 0 \quad \text{ a.s.}, \quad  K+\xi\Big(X-\sum_{u=t+1}^TD^*_u\Big) > 0 \quad \text{a.s. on } \Omega^0.
\end{equation}
Since \textbf{NA} is satisfied for underlying market $\cK$, we have from \eqref{eq:Super8}  that there exists $\Omega^1 \subseteq \Omega^0$ with $\bP(\Omega^1)>0$ such that $\Omega^1 \in \cF_t$ and $\xi >0$ a.s. on $\Omega^1$.
 Otherwise, our assumption that \textbf{NA} holds is contradicted.
  Of course, if $\Omega^1 \subseteq \Omega^0$ is any set such that $\Omega^1 \in \cF_t$,  $\bP(\Omega^1)>0$, and $\xi=0$ a.s. on $\Omega^1$, then $1_{\Omega^1}K \in \cK(t) \in \cK$ satisfies $1_{\Omega^1}K \geq 0$ a.s., and $1_{\Omega^1}K>0$ a.s. on $\Omega^1$, which violates that \textbf{NA} is satisfied.

Moreover, we observe that $X-\sum_{u=t+1}^TD^*_u \geq 0$ a.s. on $\Omega^1$.
If there exists $\Omega^2 \subseteq \Omega^1$   with $\bP(\Omega^2)>0$ such that $\Omega^2 \in \cF_t$ and  $X-\sum_{u=t+1}^TD^*_u < 0$ a.s. on $\Omega^2$, then
from \eqref{eq:Super8} we see that $K  \geq $ a.s., and $K>0$ a.s. on $\Omega^2$, which contradicts that \textbf{NA} holds for $\cK$.

Now, let us define
\begin{equation*}
  \widetilde{X}:=1_{\Omega^1}X+1_{(\Omega^{1})^c}\pi^{ask}_t(D), \qquad \widetilde{K}:=1_{\Omega^1}\frac{K}{\sup_{\omega \in \Omega^1}\{\xi(\omega)\}}+1_{(\Omega^1)^c}K^*.
\end{equation*}
From \eqref{eq:Super9} we immediately have that
\begin{align*}
&\widetilde{K}+\widetilde{X}-\sum_{u=t+1}^T D^*_u=K^*+\pi^{ask}_t(D)-\sum_{u=t+1}^TD^*_u \geq 0& \quad &\text{a.s. on $(\Omega^1)^c$}.
\end{align*}
On the other hand, from \eqref{eq:Super8} and since  $X-\sum_{u=t+1}^TD^*_u \geq 0$ a.s. on $\Omega^1$, we see that
\begin{align*}
 &\widetilde{K}+\widetilde X-\sum_{u=t+1}^TD^*_u = \frac{K}{\sup_{\omega \in \Omega^1}\{\xi(\omega)\}}+X-\sum_{u=t+1}^T D^*_u  \geq 0& \quad &\text{a.s. on $\Omega^1$}.
\end{align*}
Consequently, $\widetilde{K}+\widetilde{X}-\sum_{u=t+1}^T D^*_u \geq0$ a.s. on $\Omega$.
Now, since $0 \leq 1/\sup_{\omega \in \Omega^1}\{\xi(\omega)\}<\infty$,  and because $\cK(t)$ is a convex cone that is closed with respect to multiplication by random variables in $L^{\infty}_+(\Omega, \cF_t, \bP; \bR)$, we have that $\widetilde{K} \in \cK(t)$.
Therefore  $\widetilde X \in \cW^a(t, D)$.
 However, since $\widetilde{X}$ satisfies $\widetilde X \leq \pi^{ask}_t(D)$ and $\bP(\widetilde{X}<\pi^{ask}_t(D))>0$, we have that $\widetilde X \in \cW^a(t, D)$ contradicts $\pi^{ask}_t(D)=\essinf\: \cW^a(t, D)$.
Thus, \textbf{NA} holds for $\cK^b(t, X)$.

By assumption, \textbf{EF} holds for $\cK^b(t, X)$, so \textbf{NAEF} is satisfied for $\cK^b(t, X)$.
According to Lemma~\ref{lemma:ClosednessExtended}  there exists  $\hat \bQ\in \cR^b(t, X)$.
In view the claim \emph{(iii)} in Lemma~\ref{lemma:RNnontrivial}, we see that
\begin{equation*}
\zeta X+\bE_{\hat \bQ}[K| \cF_t] \leq \zeta \: \bE_{\hat \bQ} \Big[\sum_{u=t+1}^TD^*_u\Big| \cF_t\Big], \quad  K \in \cK(t), \; \zeta \in L^{\infty}_+(\Omega, \cF_t, \bP; \bR).
\end{equation*}
Since $0 \in \cK(t)$ and $1 \in L^{\infty}_+(\Omega, \cF_t, \bP; \bR)$, we obtain that $ X\leq  \bE_{\hat \bQ} \big[\sum_{u=t+1}^TD^*_u\big| \cF_t\big]$.
Now, because $\cR^b(t, X) \subseteq \cR(t)$, we have that $\hat \bQ \in \cR(t)$.
Hence,
\begin{equation}\label{eq:Super5}
 X\leq \bE_{\hat \bQ} \Big[\sum_{u=t+1}^TD^*_u\Big| \cF_t\Big] \leq \underset{\bQ \in \cR(t)}{\sup}\: \bE_{\bQ} \Big[\sum_{u=t+1}^TD^*_u\Big| \cF_t\Big].
\end{equation}
The random variable $X<\pi^{ask}_t(D)$ is arbitrary, so for any scalar $\epsilon >0$ we may take $X:=\pi^{ask}_t(D)-\epsilon$.
From \eqref{eq:Super5}, we see that
\begin{equation*}
\pi^{ask}_t(D)\leq  \underset{\bQ \in \cR(t)}{\sup}\: \bE_{\bQ} \Big[\sum_{u=t+1}^TD^*_u\Big| \cF_t\Big]+\epsilon, \quad \epsilon>0.
\end{equation*}
Letting $\epsilon$ approach zero shows that \eqref{eq:Super12} holds.
This completes the proof of \emph{(i)}.
\end{proof}

\begin{remark}
An open question that remains is whether $\cR(t)$ can be replaced by $\cR(0)$ in the representations in Theorem~\ref{theorem:SuperhedgingNoNTrivial}.
In the arguments presented in this paper, it is more convenient to work with $\cR(t)$ than $\cR(0)$  because $\cK(t)$ is closed under multiplication by random variables in $L^{\infty}_+(\Omega, \cF_t, \bP; \bR)$.
In contrast, the set $\cK(0)$ is only closed under multiplication by random variables in $L^{\infty}_+(\Omega, \cF_0, \bP; \bR)$.
\end{remark}

\section{Conclusions}
In this paper, no-arbitrage pricing theory is extended to dividend-paying securities in discrete-time markets with transaction costs.
A version of the Fundamental Theorem of Asset Pricing is proved under the efficient friction assumption, and the representations for the superhedging ask and subhedging bid prices are given.
As usual, the proof of the Fundamental Theorem of Asset Pricing relies on showing that the set of all claims that can be superhedged at zero cost is closed under convergence in probability.
In the special case when there are no transaction costs on the dividends paid by the security, the no-arbitrage condition under the efficient friction assumption is proved to be equivalent to the existence of a consistent pricing system.
The general case, in which there are transaction costs on the dividends, is open.
The theory is motivated by credit default swaps and interest rate swaps.

\appendix
\section{Appendix}

\begin{lemma}\label{lemma:ClosednessExtended}
For each $t \in \{0, 1, \dots, T-1\}$ and $W \in L^0(\Omega, \cF_t, \bP; \bR)$, if the no-arbitrage condition under the efficient friction assumption is satisfied  for $\cK^a(t, W)$ $(\cK^b(t, W))$, then $\cR^a(t, W) \neq \emptyset$ $(\cR^b(t, W) \neq \emptyset)$.
\end{lemma}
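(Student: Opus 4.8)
The plan is to establish, for the enlarged set $\cK^{a}(t,W)$, the same two ingredients that drove the proof of the First Fundamental Theorem of Asset Pricing (Theorem~\ref{Theorem: FFTAP}): $\bP$-closedness of the associated set of values superhedgeable at zero cost, and the Kreps--Yan separation argument (Theorem~\ref{Arbitrage/LemmaYanKreps}). I will treat $\cK^{a}(t,W)$ only; the assertion for $\cK^{b}(t,W)$ follows by symmetry, since $\cK^{b}(t,W)$ is the set $\cK^{a}(t,-W)$ associated with the derivative contract $-D$, up to an overall sign. Throughout, write $g:=-W+\sum_{u=t+1}^{T}D^{*}_{u}$, a fixed $\cF_{T}$-measurable random variable that is a.s.\ finite (though possibly unbounded, as $W\in L^{0}$), and recall the identification $\cK(t)=\{F(\phi):\phi\in\cP,\ \phi_{s}=1_{\{s\ge t+1\}}\phi_{s}\}$ (the analogue for $\cK(t)$ of Lemma~\ref{lemma: Kequivalent}, obtained via Proposition~\ref{Intro/LemmaDiscountedSelfFinancing}), so that a generic element of $\cK^{a}(t,W)-L^{0}_{+}(\Omega,\cF_{T},\bP;\bR)$ has the form $F(\phi)+\xi g-Z$ with $\phi$ predictable supported on $\{s\ge t+1\}$, $\xi\in L^{\infty}_{+}(\Omega,\cF_{t},\bP;\bR)$, and $Z\ge 0$.

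\textbf{Step 1 (closedness).} I would first show that $\cK^{a}(t,W)-L^{0}_{+}(\Omega,\cF_{T},\bP;\bR)$ is $\bP$-closed whenever \textbf{NAEF} holds for $\cK^{a}(t,W)$, by running the blow-up/normalization induction of the proof of Theorem~\ref{lemma:KClosed} with one extra coordinate. Given $X^{m}=F(\phi^{m})+\xi^{m}g-Z^{m}$ converging a.s.\ to $X$, note that $\xi^{m}$ and $\phi^{m}_{t+1}$ are both $\cF_{t}$-measurable, so one may treat $(\phi^{m}_{t+1},\xi^{m})$ as a single $\bR^{N+1}$-valued $\cF_{t}$-measurable random variable and apply Lemmas~\ref{Arbitrage/MeasurableSelectionShach}--\ref{lemma:NormalizedSequence} exactly as in Step~1 there (the sign constraint $\xi^{m}\ge 0$ passes to the limit, and multiplying $g$ by coefficients that converge a.s.\ is harmless even though $g$ may be unbounded, since $g$ is a fixed a.s.-finite random variable). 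The crucial point is that the limiting normalized generalized strategy $(\varphi,\zeta)$ is nonzero on a set of positive measure by construction, yet satisfies $F(\varphi)+\zeta g=0$ with $\zeta\ge 0$; the \emph{efficient friction} hypothesis \emph{for $\cK^{a}(t,W)$}---and this is precisely why the enlarged \textbf{EF}, rather than \textbf{EF} for the underlying market, is what is needed here---then forces $(\varphi,\zeta)=(0,0)$, a contradiction. Hence $\limsup_{m}\|\phi^{m}_{s}\|<\infty$ for all $s\in\cT^{*}$ and $\limsup_{m}\xi^{m}<\infty$ a.s.; passing to an a.s.\ convergent subsequence, Lemma~\ref{Arbitrage/ValueConvergenceSF} gives $F(\phi^{m})\to F(\phi)$ a.s., while $\xi^{m}g\to\xi g$ a.s.\ by continuity, so that $Z^{m}\to Z:=F(\phi)+\xi g-X\ge 0$ and $X=F(\phi)+\xi g-Z\in\cK^{a}(t,W)-L^{0}_{+}(\Omega,\cF_{T},\bP;\bR)$, as in Step~2 there. (As with $\cK(t)$ in the proof of Theorem~\ref{theorem:SuperhedgingNoNTrivial}, one works with the natural $L^{0}$-version of the extended set, reducing boundedness of the $\cJ^{*}$-components and of $\xi$ to a truncation in the spirit of Lemma~\ref{lemma:NAbounded} where required.)

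\textbf{Step 2 (Kreps--Yan).} This part transcribes the implication $(i)\Rightarrow(iii)$ of Theorem~\ref{Theorem: FFTAP} essentially verbatim. Replace $\bP$ by the equivalent measure $\widetilde\bP$ with bounded density $\d\widetilde\bP/\d\bP=\tilde{c}/w$, where $w$ is the weight function \eqref{eq:FFAPWeight}; since every derivative contract is a.s.\ bounded, there is no need to enlarge $w$, and $\widetilde\bP\in\cZ$. Set $\cC:=\bigl(\cK^{a}(t,W)-L^{0}_{+}(\Omega,\cF_{T},\widetilde\bP;\bR)\bigr)\cap L^{1}(\Omega,\cF_{T},\widetilde\bP;\bR)$. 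Then $\cC$ is a convex cone (a sum of convex cones, cf.\ Lemma~\ref{Arbitrage/LemmaConvexCone} and the remark following the definition of $\cK^{a}$), contains $L^{1}_{-}(\Omega,\cF_{T},\widetilde\bP;\bR)$, is closed in $L^{1}$ (Step~1 together with the fact that $L^{1}$-convergence implies convergence in measure), and $\cC\cap L^{1}_{+}(\Omega,\cF_{T},\widetilde\bP;\bR)=\{0\}$ since \textbf{NA} holds for $\cK^{a}(t,W)$. Theorem~\ref{Arbitrage/LemmaYanKreps} then yields a strictly positive $f\in L^{\infty}(\Omega,\cF_{T},\widetilde\bP;\bR)$ with $\bE_{\widetilde\bP}[Kf]\le 0$ for all $K\in\cC$; define $\bQ$ by $\d\bQ/\d\widetilde\bP=cf$. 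As in Theorem~\ref{Theorem: FFTAP}, $\bQ\sim\bP$, $\d\bQ/\d\bP=c\tilde{c}\,f/w\in L^{\infty}(\Omega,\cF_{T},\bP;\bR)$, and $P^{ask,*},P^{bid,*},A^{ask,*},A^{bid,*}$ are $\bQ$-integrable, so $\bQ\in\cZ$. Finally, any $K^{a}\in\cK^{a}(t,W)$, written $K^{a}=V^{*}_{T}(\phi)+\xi g$ with $\phi\in\cS(t)$ and $\xi\in L^{\infty}_{+}(\Omega,\cF_{t},\bP;\bR)$, is $\bQ$-integrable by Lemma~\ref{lemma:RNIntegrability} together with the boundedness of the $\phi^{j}$, of $D$, and of $\xi$; hence $K^{a}\in L^{1}(\Omega,\cF_{T},\widetilde\bP;\bR)$ and $\bE_{\bQ}[K^{a}]=c\,\bE_{\widetilde\bP}[K^{a}f]\le 0$. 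Therefore $\bQ\in\cR^{a}(t,W)$, which is the claim.

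I expect the main obstacle to be Step~1: checking that the induction of Theorem~\ref{lemma:KClosed} still closes once one carries the additional non-predictable, $\cF_{t}$-measurable, sign-constrained scalar $\xi$ through the normalization, and verifying that it is the \emph{enlarged} efficient friction assumption for $\cK^{a}(t,W)$ (not that of the underlying market) which annihilates the limiting generalized strategy $(\varphi,\zeta)$ with $F(\varphi)+\zeta g=0$. Once closedness is in hand, Step~2 is a routine repetition of the FFTAP argument.
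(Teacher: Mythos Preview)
Your proposal is correct and follows essentially the same route as the paper: first extend the closedness argument of Theorem~\ref{lemma:KClosed} to $\cK^{a}(t,W)-L^{0}_{+}$ by carrying the additional $\cF_{t}$-measurable, nonnegative coordinate $\xi$ through the blow-up/normalization induction (invoking precisely the enlarged \textbf{EF} for $\cK^{a}(t,W)$ to annihilate the limiting generalized strategy), then repeat the Kreps--Yan construction from the proof of Theorem~\ref{Theorem: FFTAP}. One small caveat: since $W\in L^{0}(\Omega,\cF_{t},\bP;\bR)$ need not be bounded, your claim that ``there is no need to enlarge $w$'' is not quite right---to ensure that $\xi g=\xi(-W+\sum_{u}D^{*}_{u})$ is $\widetilde\bP$-integrable you should add $|W|$ to the weight $w$ in \eqref{eq:FFAPWeight}; the paper's appendix proof is equally terse on this point.
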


\begin{proof}
Let us first fix $t \in \{0, 1, \dots, T-1\}$ and $W \in L^0(\Omega, \cF_t, \bP; \bR)$.
We only prove the lemma for $\cK^a(t, W)$, because the proof for $\cK^b(t, W)$ is similar.
Instead of working with $\cK^a(t, X)$, we will work with the more mathematically convenient set
\begin{equation*}
  \bK^a(t, W):=\big\{G(\phi, \xi, t, W) : \phi \in \cP(t), \xi \in L^{\infty}_+(\Omega, \cF_t, \bP; \bR)\big\},
\end{equation*}
where $\cP(t)$ is the set
\begin{equation*}
\cP(t):=\{\phi \in \cP: \phi^j \text{ is a.s. bounded for $j \in \cJ^*$, } \phi_s=1_{\{t+1 \leq s\}} \phi_s \text{ for all } s \in \cT^*\},
\end{equation*}
and
  \begin{align}
  G(\phi, \xi, t, W)&:=\sum_{j=1}^N  \phi^{j}_T\big(1_{\{ \phi^{j}_T \geq 0\}}P^{bid, j, *}_T+ 1_{\{ \phi^{j}_T < 0\}}P^{ask, j, *}_T\big)\notag\\
& \  -\sum_{j=1}^N\sum_{u=t+1}^T \Delta \phi^{ j}_u\big(1_{\{\Delta \phi^{j}_u \geq 0\}}P^{ask, j, *}_{u-1}+1_{\{\Delta \phi^{ j}_u< 0\}} P^{bid, j, *}_{u-1}\big)\notag\\
& \ +\sum_{j=1}^N \sum_{u=t+1}^T \phi^{ j}_u \big(1_{\{ \phi^{j}_u \geq 0\}} A^{ask,j, *}_u+1_{\{ \phi^{ j}_u < 0\}}A^{bid,j, *}_u\big)+\xi\Big( -W+\sum_{u=t+1}^T D^*_u \Big)
\label{eq:DefMappingF1}
\end{align}
for all for all $\bR^N$-valued stochastic processes
\begin{equation*}
 (\phi_s)_{s=1}^T \in L^0(\Omega, \cF_T, \bP; \bR^N) \times \cdots \times L^0(\Omega, \cF_T, \bP; \bR^N),
 \end{equation*}  and random variables $\xi\in L^{\infty}_+(\Omega, \cF_T, \bP; \bR)$.

Since $\bK^a(t, W)=\cK^a(t, W)$, we may equivalently prove that $\bK^a(t, W)-L^0_+(\Omega, \cF_T, \bP; \bR)$ is $\bP$-closed whenever \textbf{NAEF} is satisfied for $\cK^a(t, W)$.
Let $X^m \in  \bK^a(t, W)-L^0_+(\Omega, \cF_T, \bP; \bR)$ be a sequence converging in probability to some $X$.
We may find a subsequence $X^{k_m}$ that converges a.s. to $X$.
With an abuse of notation we denote this subsequence by $X^m$.
By the definition of $\bK^a(t, W)$, we may find $\phi^m \in \cP(t)$, $\xi^m\in L^0_+(\Omega, \cF_t, \bP; \bR)$, and $Z^m \in L^0_+(\Omega, \cF_T, \bP; \bR)$ so that $X^m=G(\phi^m, \xi^m, t, W)-Z^m$.
Using the same arguments as in Step 1 in the proof of Theorem~\ref{lemma:KClosed}, we prove that $\limsup_m \|\phi^m_s\|<\infty$ for all $t \in \cT^*$ and $\limsup_m \xi^m<\infty$.
  Then, we apply Lemma~\ref{Arbitrage/MeasurableSelectionShachProcess} to show that we may find a strictly increasing set of positive, integer-valued, $\cF_{T-1}$ measurable random variables $\sigma^m$ such that $\phi^{\sigma^m}$ converges a.s. to some bounded a.s. predictable process $\phi$, and $\xi^{\sigma^m}$ converges a.s. to some $\xi \in L^{\infty}_+(\Omega, \cF_t, \bP; \bR)$.
This gives us that $G(\phi^{\sigma^m}, \xi^{\sigma^m}, t, W)-X^{\sigma^m}$ converges a.s. to some nonnegative random variable. Therefore $\bK^a(t, W)-L^0_+(\Omega, \cF_T, \bP; \bR)$ is $\bP$-closed.

We now argue that there exists a risk-neutral measure for $\cK^a(t, W)$.
Towards this, we define the convex cone $\cC^a:= (\cK^a(t, W)-L^0_+(\Omega, \cF_T, \bP; \bR))\cap L^1(\Omega, \cF_T, \bP; \bR)$.
Due to the closedness property of $\bK^a(t, W)-L^0_+(\Omega, \cF_T, \bP; \bR)$,  we have that the set $\cC^a$ is closed in $ L^1(\Omega, \cF_T, \bP; \bR)$.
As in the proof of Theorem~\ref{Theorem: FFTAP}, we may construct a measure $\bQ \in \cZ$ such that $W$ is $\bQ$-integrable, and  $\bE_\bQ [K^a] \leq 0$ for all $K^a \in  \cK^a(t, X)$.
This completes the proof.
\end{proof}

\bibliographystyle{alpha}

\end{document}